\DeclareMathAlphabet{\mathbfit}{OML}{cmm}{b}{it}
\newtheorem{example}{Example}
\newtheorem{thm}{Theorem}
\newtheorem{lem}[thm]{Lemma}
\newtheorem{cor}[thm]{Corollary}
\newtheorem{prop}[thm]{Proposition}
\long\def\comment#1{}
\long\def\appcomment#1{#1}
\title{Characterizing the Program Expressive Power of Existential Rule Languages}
\author{Heng Zhang}
\begin{document}

\maketitle

\begin{abstract}
Existential rule languages are a family of ontology languages that have been widely used in ontology-mediated query answering (OMQA). However, for most of them, the expressive power of representing domain knowledge for OMQA, known as the program expressive power, is not well-understood yet. In this paper, we establish a number of novel characterizations for the program expressive power of several important existential rule languages, including tuple-generating dependencies (TGDs), linear TGDs, as well as disjunctive TGDs. The characterizations employ natural model-theoretic properties, and automata-theoretic properties sometimes, which thus provide powerful tools for identifying the definability of domain knowledge for OMQA in these languages.
\end{abstract}

\section{Introduction}

Existential rule languages, a.k.a. Datalog$\pm$, had been initially introduced in databases as dependency languages to specify the semantics of data stored in a database~\cite{AbiteboulHV1995}. As one of the most popular dependency languages, tuple-generating dependencies (TGDs) and its extensions, including (disjunctive) embedded dependencies and disjunctive TGDs, had been extensively studied. Recently, these languages have been rediscovered as languages for data exchange~\cite{FaginKMP2005}, data integration~\cite{Lenzerini2002}, ontology reasoning~\cite{CaliGLMP2010} and knowledge graph~\cite{BGPS2017}. 

A major computational task based on existential rule languages is known as ontology-mediated query answering (OMQA), which generalizes the traditional database querying by enriching database with a domain ontology. Unfortunately, even for TGDs, the problem of OMQA was proved to be undecidable~\cite{BV81}. Towards efficient reasoning, many decidable sublanguages have been identified, including linear TGDs and guarded TGDs~\cite{CaliGL2012}, frontier-guarded TGDs~\cite{BagetLMS11}, sticky TGDs~\cite{CaliGP12}, weakly-acyclic TGDs~\cite{FaginKMP2005} and shy programs~\cite{LeoneMTV12}. With these languages, it is thus important to identify their expressive power so that, given an application, we know which language should be used. 

%
%


In OMQA, there have been mainly two lines of research on the language expressive power. The first line of research regards every ontology together with a classical query as a database query, usually called an ontology-mediated query. The main goal of this line is to understand which class of databases can be defined by an ontology-mediated query. We call such kind of expressive power the {\em data expressive power}. 
In contrast, the second line is concerned with which kind of domain knowledge can be expressed in an ontology language; or more formally, which classes of database-query pairs are definable in the language. Expressive power of this kind is known as the {\em program expressive power}, which was first proposed by~\citeauthor{ArenasGP2014}~\shortcite{ArenasGP2014}. 

A number of papers are devoted to characterizing data expressive power of existential rule languages. An incomplete list is as follows: \citeauthor{GRS2014}~\shortcite{GRS2014} proved that weakly (frontier-)guarded TGD queries with stratified negations capture the class of EXPTIME-queries; nearly (frontier-)guarded TGD queries have the same expressive power as Datalog. \citeauthor{RudolphT2015}~\shortcite{RudolphT2015} showed that TGD queries capture the class of recursively enumerable queries closed under homomorphisms. 
\citeauthor{KrotzschR11}~\shortcite{KrotzschR11} identified that jointly acyclic TGD queries have the same expressive power as Datalog, which was later extended to TGD queries with terminating Skolem chase in~\cite{ZhangZY15}. In description logics, \citeauthor{BienvenuCLW14}~\shortcite{BienvenuCLW14} characterized the data expressive power of $\mathcal{ALC}$ and its variants by some interesting complexity classes and fragments of disjunctive Datalog.

Unlike the data expressive power, the program expressive power of existential rule languages is not well-understood yet. \citeauthor{ArenasGP2014}~\shortcite{ArenasGP2014} proved that Datalog is strictly less expressive than warded Datalog$^\exists$, and obtained a similar separation for the variants with stratified negations and negative constraints. \citeauthor{ZhangZY16}~\shortcite{ZhangZY16} proposed a semantic definition for ontologies in OMQA, and proved that disjunctive embedded dependencies (DEDs) capture the class of recursively enumerable OMQA-ontologies. In addition, it is implicit in~\cite{ZhangZY15} that the weakly-acyclic TGDs have the same program expressive power as all its extensions with terminating Skolem chase. This paper continues this line of work and aims at characterizing the program expressive power of several important languages including TGDs,
disjunctive TGDs and linear TGDs.

Our contributions in this paper are threefold. Firstly, we show that the equalities in a finite set of DEDs are removable if, and only if, the OMQA-ontology defined by these DEDs is closed under both database homomorphisms and constant substitutions. Secondly, we prove that, under CQ-answering, every finite set of DTGDs can be translated to an equivalent finite set of TGDs, while the translatability under UCQ-answering is captured by a property called query constructivity.  
Thirdly, we characterize the linear TGD-definability of OMQA-ontologies by data constructivity and the recogniziability of queries by a natural class of tree automata.

\section{Preliminaries}

\subsubsection{Databases and Instances} 

We use a countably infinite set $\Delta$ (resp., $\Delta_{\mathrm{n}}$ and $\Delta_{\mathrm{v}}$) 
of {\em constants} (resp., {\em (labeled) nulls} and {\em variables}), and assume they are pairwise disjoint. Every {\em term} is a constant, a null or a variable. A {\em (relational) schema} $\mathscr{S}$ is a set of {\em relation symbols}, each associated a natural number called the {\em arity}. Every {\em $\mathscr{S}$-atom} is either an equality or a {\em relational atom} built upon terms and a relation symbol in $\mathscr{S}$. A {\em fact} is a variable-free relational atom, and an {\em $\mathscr{S}$-instance} is a set of  $\mathscr{S}$-facts. A {\em database} is a finite instance in which no null occurs. Given an instance $I$, let $adom(I)$ (resp., $term(I)$) denote the set of constants (resp., terms) occurring in $I$. Given a set $A$ of terms, let $I|_A$ be the maximum subset $J$ of $I$ such that $term(J)\subseteq A$.

Let $I$ and $J$ be $\mathscr{S}$-instances, and $C$ a set of constants. A {\em $C$-homomorphism} from $I$ to $J$ is a function $h:adom(I)\rightarrow adom(J)$ such that $h(I)\subseteq J$ and $h(c)=c$ for all constants $c\in C$. If such $h$ exists, we say $I$ is {\em $C$-homomorphic} to $J$, and write $I\rightarrow_C J$. In addition, we write $I\rightarrowtail_C J$ if $h$ is injective. We say $I$ is {\em $C$-isomorphic} to $J$ if there is a bijective $C$-homomorphism $h$ from $I$ to $J$ such that $h(I)=J$. For simplicity, in the above, $C$ could be dropped if it is empty. A {\em substitution} is a partial function from $\Delta_{\mathrm{v}}$ to $\Delta\cup\Delta_{\mathrm{n}}$.

\subsubsection{Queries}

Fix $\mathscr{S}$ as a schema.  Every {\em $\mathscr{S}$-CQ} is a first-order formula of the form $\exists\mathbfit{y}\,\varphi(\mathbfit{x},\mathbfit{y})$ where $\varphi(\mathbfit{x},\mathbfit{y})$ is a finite but nonempty conjunction of relational $\mathscr{S}$-atoms. An {\em $\mathscr{S}$-UCQ} is a first-order formula built upon $\mathscr{S}$-atoms by using connectives $\wedge,\vee$ and quantifier $\exists$ only. Clearly, every UCQ is equivalent to a disjunction of CQs, and every CQ is also a UCQ. Note that constants are allowed to appear in a query. Given a query (CQ or UCQ) $q$, let $const(q)$ denote the set of all constants that occur in $q$.

A UCQ is called {\em Boolean} if it has no free variables. Let BCQ be short for Boolean CQ. Given a BCQ $q$, let $[q]$ denote a database that consists of all atoms in $q$ where each variable is regarded as a null. In this paper, unless otherwise stated, we only consider Boolean queries. Let $\mathsf{CQ}$ (resp., $\mathsf{UCQ}$) denote the class of Boolean CQs (resp., Boolean UCQs).

\subsubsection{Existential Rule Languages} 


Let $\mathscr{S}$ be a schema. Then every {\em disjunctive embedded dependency (DED)} over $\mathscr{S}$ is a first-order sentence $\sigma$ of the form
\begin{equation}\label{eqn:ded}
\forall\mathbfit{x}\forall\mathbfit{y}(\phi(\mathbfit{x},\mathbfit{y})\rightarrow\exists\mathbfit{z}(\psi_1(\mathbfit{x},\mathbfit{z})\vee\cdots\vee\psi_k(\mathbfit{x},\mathbfit{z}))
\end{equation}
where $\mathbfit{x},\mathbfit{y},\mathbfit{z}$ are tuples of variables, $\phi$ a conjunction of relational $\mathscr{S}$-atoms involving terms only from $\mathbfit{x}\cup\mathbfit{y}$, each $\psi_i$ a conjunction of $\mathscr{S}$-atoms involving terms only from $\mathbfit{x}\cup\mathbfit{z}$, and every variable in $\mathbfit{x}$ has at least one occurrence in $\phi$. For simplicity, we omit universal quantifiers and brackets outside the atoms. Let $head(\sigma)=\{\psi_i:1\le i\le k\}$ and $body(\sigma)=\phi$, called the {\em head} and  {\em body} of $\sigma$, respectively.

{\em Disjunctive tuple-generating dependencies (DTGDs)} are defined as equality-free DEDs, and {\em tuple-generating dependencies (TGDs)} are disjunction-free DTGDs. A TGD is  called {\em linear} if its body consists of a single atom. A DED of the form~(\ref{eqn:ded}) is {\em canonical} if $\psi_i,  1\le i\le k,$ consists of a single atom. It is well-known that, by introducing  auxiliary relation symbols, every set of DEDs (resp, DTGDs, TGDs and linear TGDs) can be converted to an equivalent (under query answering) set of canonical DEDs (resp., DTGDs, TGDs and linear TGDs). Hence, unless stated otherwise, we assume {\em dependencies are canonical} in the rest of this paper. 

Let $D$ be a database, $\Sigma$ a set of DEDs, and ${q}$ a Boolean UCQ. We write $D\cup\Sigma\vDash{q}$ if, for all instances $I$, if $D\subseteq I$ and $I$ is a model of $\sigma$ for all $\sigma\in\Sigma$, then $I$ is also a model of ${q}$, where the notion of {\em model} is defined in a standard way. 

\subsubsection{OMQA-ontologies}

In this subsection, we introduce some notions related to OMQA-ontology. For more details, please refer to~\cite{ZhangZY16}. Let $\mathscr{D}$ and $\mathscr{Q}$ be a disjoint pair of schemas, and $\mathcal{Q}$ a class of Boolean UCQs. Every {\em quasi-OMQA$[\mathcal{Q}]$-ontology} over $(\mathscr{D},\mathscr{Q})$ is a set of database-query pairs $(D,{q})$, where $D$ is a nonempty $\mathscr{D}$-database and ${q}$ a Boolean $\mathscr{Q}$-UCQ in $\mathcal{Q}$ such that $\textit{const}(q)\subseteq{adom}(D)$. Furthermore, an {\em OMQA$[\mathcal{Q}]$-ontology} is a quasi-OMQA$[\mathcal{Q}]$-ontology $O$ that admits the following properties: 
\begin{enumerate}
\item ({\em Closure under Query Conjunctions}) If ${p}\wedge{q}\in\mathcal{Q}$, $(D,{p})\in O$ and $(D,{q})\in O$, then $(D,{p}\wedge{q})\in O$;
\item ({\em Closure under Query Implications}) If ${p}\in\mathcal{Q}$, ${q}\vDash p$ and $(D,{q})\in O$, then
$(D,p)\in O$;
\item ({\em Closure under Injective Database Homomorphisms}) If $(D,{q})\in O$ and $D\rightarrowtail_{\textit{const(q)}} D'$, then
$(D',{q})\in O$;
\item ({\em Closure under Constant Renaming}) If $(D,{q})\in O$ and $\tau$ is a {\em constant renaming} (i.e., a partial injective function from $\Delta$ to $\Delta$), then 
$(\tau(D),\tau(q))\in O$.
\end{enumerate}


Given a set $\Sigma$ of DEDs, let $[\![\Sigma]\!]_{\mathscr{D},\mathscr{Q}}^\mathcal{Q}$ denote the set of all database-query pairs $(D,q)$ where $D$ is a $\mathscr{D}$-database, $q\in\mathcal{Q}$ a $\mathscr{Q}$-UCQ, and $D\cup\Sigma\vDash q$. Given an OMQA$[\mathcal{Q}]$-ontology $O$ over $(\mathscr{D},\mathscr{Q})$, we say $O$ is {\em defined by} $\Sigma$ if $O=[\![\Sigma]\!]_{\mathscr{D},\mathscr{Q}}^\mathcal{Q}$.

The following characterization for DEDs was established in~\cite{ZhangZY16,Zhang2020}.
\begin{thm}[\citeauthor{ZhangZY16}]
An OMQA$[\mathsf{UCQ}]$-ontology
 is defined by a finite set of DEDs iff it is recursively enumerable.
\end{thm}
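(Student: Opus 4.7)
The plan is to prove the biconditional by treating each direction separately; the forward direction is essentially standard, while the backward direction carries the technical weight.

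For the forward direction, suppose $O=[\![\Sigma]\!]_{\mathscr{D},\mathscr{Q}}^{\mathsf{UCQ}}$ for a finite set $\Sigma$ of DEDs. Since membership $(D,q)\in O$ amounts to the first-order entailment $D\cup\Sigma\vDash q$, which is r.e.\ by G\"odel completeness, one can enumerate all $\mathscr{D}$-database/$\mathscr{Q}$-UCQ pairs and dovetail a semi-decision procedure for each, outputting $(D,q)$ the moment entailment is confirmed.

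For the backward direction, assume $O$ is an r.e.\ OMQA$[\mathsf{UCQ}]$-ontology and fix a Turing machine $M$ that enumerates its members $(D_1,q_1),(D_2,q_2),\dots$. The plan is to build a finite $\Sigma$ whose chase on an input $\mathscr{D}$-database $D$ simulates $M$ inside a fresh auxiliary schema $\mathscr{A}$ disjoint from $\mathscr{D}\cup\mathscr{Q}$. Relations in $\mathscr{A}$ encode configurations of $M$ (state, tape contents, head position), with unbounded tape generated on demand through existentials over fresh nulls; a constant-size family of transition DEDs mirrors $M$'s program, and a seed DED copies $D$ into the initial configuration so that the simulation is driven by $D$. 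Whenever the simulation certifies some $(D_i,q_i)$, a family of \emph{witness} DEDs nondeterministically guesses a $const(q_i)$-preserving homomorphism from $D_i$ into $D$; if one is realized, a final \emph{assertion} DED with the disjuncts of $q_i$ in its head forces every model to satisfy $q_i$. Disjunction in DED heads handles both the UCQ form of $q_i$ and the choice among candidate homomorphisms, while equality is used to identify nulls once a homomorphism is committed.

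Completeness follows because $M$ exhausts $O$ and the closure properties (injective database homomorphisms and constant renaming) guarantee that finding any $(D_i,q_i)\in O$ with $D_i\rightarrowtail_{const(q)} D$ and $q_i$ equal to $q$ up to renaming of non-query constants already certifies $(D,q)\in O$. The main obstacle, and the technical heart of the proof, is soundness: ensuring that $\Sigma$ never entails a $\mathscr{Q}$-UCQ outside $O$. This demands a strict separation of $\mathscr{A}$ from $\mathscr{D}\cup\mathscr{Q}$ so that simulation atoms on nulls cannot be projected onto spurious $\mathscr{Q}$-atoms on constants of $D$ except through the gated assertion DEDs, together with a careful argument that the chase, restricted to $\mathscr{D}\cup\mathscr{Q}$, behaves as the characteristic function of $O$. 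The interplay between existentials, disjunctions, and equalities in the encoding must be tuned so that the set of $\mathscr{Q}$-UCQs entailed by $D\cup\Sigma$ is exactly $\{q:(D,q)\in O\}$, and this is where closure under query conjunctions and query implications is used to reconcile the chase's UCQ output with OMQA-ontology membership.
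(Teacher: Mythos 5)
First, note that the paper does not prove this theorem: it is quoted from \cite{ZhangZY16,Zhang2020}, so your proposal can only be measured against the known construction there. Your forward direction is fine and standard (first-order entailment is r.e.\ by completeness, plus dovetailing). The backward direction has the right high-level architecture --- simulate an enumerating Turing machine inside the chase over an auxiliary schema and ``release'' the enumerated queries into $\mathscr{Q}$ --- but as written it contains a genuine gap: your ``family of witness DEDs'' and your ``final assertion DED with the disjuncts of $q_i$ in its head'' are indexed by the enumerated pairs $(D_i,q_i)$, of which there are infinitely many. The theorem requires a \emph{finite} $\Sigma$, so no rule may mention a particular $D_i$ or $q_i$. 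The missing idea is a uniform decoding stage: a fixed finite set of DEDs that reads an \emph{encoding} of a query off the simulated tape (atoms, variables, constants, and for UCQs the disjunctive structure) and materializes a fresh disjoint copy of it in the $\mathscr{Q}$-relations, using head existentials to mint the copy's nulls and head disjunction to choose among the disjuncts of an enumerated UCQ. The paper itself exhibits exactly this kind of machinery (the $\textit{Enc}$, $\textit{True}$, $\textit{Ren}$, $\textit{Has}Q$ relations) in its TGD section; without it your $\Sigma$ is not finite and the proof does not go through.

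A second, smaller problem is the witness step itself. A DED body can only detect an \emph{arbitrary} homomorphism from $D_i$ into $D$, but $O$ is only guaranteed closed under \emph{injective} database homomorphisms, so firing the assertion on a non-injective match is unsound in general. The standard repair (and the one used in the cited construction) is to first use head disjunction to guess a linear order on $adom(D)$, which lets the rules write a complete, faithful description of $D$ (including distinctness of constants) onto the simulated tape; the machine then enumerates $\{q:(D,q)\in O\}$ for the actual input $D$ directly, and no embedding test between different databases is ever needed. With those two repairs --- uniform decoding and exact encoding of $D$ via a guessed order --- your soundness/completeness discussion (resting on closure under query conjunctions and query implications to reconcile the generated copies with membership in $O$) is the right way to finish.
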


For convenience, given a class $\mathcal{Q}$ of Boolean UCQs, every {\em DED$[\mathcal{Q}]$-ontology} (resp., {\em DTGD$[\mathcal{Q}]$-ontology} and {\em TGD$[\mathcal{Q}]$-ontology}) is defined as an OMQA$[\mathcal{Q}]$-ontology which is defined by some finite set of DEDs (resp., DTGDs and TGDs).


\section{DTGDs}
In this section, we examine the program expressive power of DTGDs. To do it, we first present a novel chase algorithm for DTGDs, which also plays a key role in the next section.

\subsubsection{Nondeterministic Chase}

Let $\mathscr{D}$ be a schema. A {\em nondeterministic fact} (over $\mathscr{D}$) is a finite disjunction of ($\mathscr{D}$-)facts. For convenience, we often regard each nondeterministic fact as a set of ground atoms. Every {\em nondeterministic instance} (over $\mathscr{D}$) is defined as a set of nondeterministic facts (over $\mathscr{D}$).

Let $I$ be a nondeterministic instance, and $\sigma$ a DTGD in which $\alpha_1,\dots,\alpha_n$ list all the atoms in the body. We say $\sigma$ is {\em applicable to $I$} if there is a substitution $h$ and a tuple $\mathbfit{F}$ of nondeterministic facts $F_1,\dots,F_n\in I$ such that $h(\alpha_i)\in F_i$ for all $i=1,\dots,n$. In this case, we let $res(\mathbfit{F},\sigma,h)$ denote the nondeterministic fact defined as follows:\vspace{-.2cm}
\begin{equation*}
h'(\textit{head}(\sigma))\cup\bigcup_{i=1}^n F_i\setminus\{h(\alpha_i)\}\vspace{-.2cm}
\end{equation*}
where $h'$ is a substitution that extends $h$ by mapping each existential variable $v$ in $\sigma$ to a null which one-one corresponds to the triple $(\sigma,h(\mathbfit{x}),v)$, 
and $\mathbfit{x}$ denotes the tuple of variables occurring in both the head and the body of $\sigma$. In addition, we call $res(\mathbfit{F},\sigma,h)$ a {\em result of applying $\sigma$ to $I$}. 

Furthermore, given a database $D$ and a set $\Sigma$ of DTGDs, let $chase_0(D,\Sigma)=D$; for $k>0$ let $chase_{k}(D,\Sigma)$ denote the union of $chase_{k-1}(D,\Sigma)$ and the set of all results of applying $\sigma$ to $chase_{k-1}(D,\Sigma)$ for all $\sigma\in\Sigma$. Let $chase(D,\Sigma)$ denote the union of $chase_k(D,\Sigma)$ for all $k\ge 0$.

In above definitions, if $\Sigma$ is a set of TGDs, the procedure of nondeterministic chase will degenerate into the traditional oblivious skolem chase, see, e.g.,~\cite{Marnette2009}.

The following theorem gives the soundness and completeness of the nondeterministic chase.
\begin{thm}\label{thm:sound_comp}
Let $\Sigma$ be a set of DTGDs, $D$ be a database, and $q$ be a Boolean UCQ. Then $D\cup\Sigma\vDash q$ iff ${chase}(D,\Sigma)\vDash q$, where by the notation ${chase}(D,\Sigma)\vDash q$ we denote that $q$ is a logical consequence of ${chase}(D,\Sigma)$ as usual.
\end{thm}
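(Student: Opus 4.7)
The plan is to prove both directions by connecting the nondeterministic chase to a standard disjunctive chase tree, exploiting the fact that the two chases can be set up to share the same null-naming convention: each existential variable $v$ of $\sigma$ fired at trigger $(\sigma,h(\mathbfit{x}))$ is realized by the null corresponding to the triple $(\sigma,h(\mathbfit{x}),v)$, which I identify with the Skolem term $f_v^\sigma(h(\mathbfit{x}))$.

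For soundness ($\Leftarrow$), I would show by induction on $k$ that every $F \in chase_k(D,\Sigma)$ is a logical consequence of $D\cup\Sigma^{\mathrm{sk}}$, the Skolemization of $\Sigma$. The base case $k=0$ is immediate. For the step, let $F = res(\mathbfit{F},\sigma,h)$ with $\mathbfit{F}=(F_1,\dots,F_n)$, and let $N$ be any model of $D\cup\Sigma^{\mathrm{sk}}$ that already satisfies each $F_i$ under the null-to-Skolem interpretation. A two-case split yields the result: either all $h(\alpha_1),\dots,h(\alpha_n)$ hold in $N$, so $\sigma^{\mathrm{sk}}$ forces some head atom $h'(\psi_j)\in h'(head(\sigma))$ into $N$; or some $h(\alpha_i)$ is absent, and since $N\vDash F_i$ as a disjunction, an atom of $F_i\setminus\{h(\alpha_i)\}$ must lie in $N$. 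Either branch places an atom of $F$ in $N$. Every model of $D\cup\Sigma$ Skolem-expands (over the same domain) to a model of $D\cup\Sigma^{\mathrm{sk}}$, which therefore satisfies $chase(D,\Sigma)$; since $q$ is null-free, $chase(D,\Sigma)\vDash q$ transfers to $D\cup\Sigma\vDash q$.

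For completeness ($\Rightarrow$), I argue contrapositively. Assuming $D\cup\Sigma\nvDash q$, I build a model $M$ of $chase(D,\Sigma)$ with $M\nvDash q$. Organize the derivation as a disjunctive chase tree $T$ rooted at $D$: at each applicable trigger $(\sigma,h)$, branch into one child per head disjunct $\psi_j$, extending the current node with $h'(\psi_j)$ under the above null convention; run the tree \emph{fairly}, so that every applicable trigger is eventually processed on every branch. By the standard universality of disjunctive chase trees, the family $\{M_B\}_B$ of branch models is universal for $D\cup\Sigma$ with respect to UCQ answering, so some branch $B^{\ast}$ yields $M_{B^{\ast}}\nvDash q$.

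The crux---and the main obstacle---is the bridging lemma that every branch model $M_B$ satisfies every $F\in chase(D,\Sigma)$. I would prove this by induction on the stage at which $F$ enters the chase. For $F=res(\mathbfit{F},\sigma,h)$, by the inductive hypothesis each $F_i$ has a witness in $M_B$; if for some $i$ that witness differs from $h(\alpha_i)$, it already lies in $F_i\setminus\{h(\alpha_i)\}\subseteq F$ and we are done. Otherwise all $h(\alpha_i)$ are in $M_B$, the body of $\sigma$ matches via $h$, and fairness of $B$ guarantees that some disjunct $h'(\psi_j)\in h'(head(\sigma))\subseteq F$ is in $M_B$. The shared null-naming scheme is essential: it ensures the head atoms created along $B$ are syntactically identical to those listed in $F$, so that set containment---not merely homomorphic matching---witnesses $M_B\cap F\neq\emptyset$. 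Applied to $B^{\ast}$, this gives $M_{B^{\ast}}\vDash chase(D,\Sigma)$ together with $M_{B^{\ast}}\nvDash q$, establishing $chase(D,\Sigma)\nvDash q$ and closing the proof.
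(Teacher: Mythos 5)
Your argument is correct, and it is worth separating the two halves. The soundness half is essentially the paper's own argument in different clothing: the paper fixes an arbitrary model $I$ of $D\cup\Sigma$ and inductively builds a $const(q)$-homomorphism from (a choice of disjuncts of) $chase(D,\Sigma)$ into $I$, which is exactly the choice function your Skolem expansion of $I$ encodes; your two-case split (either all $h(\alpha_i)$ are realized in $N$ and $\sigma^{\mathrm{sk}}$ forces a head disjunct, or some $F_i$ is witnessed by a surviving atom of $F_i\setminus\{h(\alpha_i)\}$) correctly handles the speculative firing built into the nondeterministic chase, where $h(\alpha_i)$ need only be a disjunct of $F_i$ rather than a derived fact. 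The completeness half, however, takes a genuinely different route. The paper argues directly and self-containedly that every \emph{minimal} model $I$ of the nondeterministic instance $chase(D,\Sigma)$ is already a model of $D\cup\Sigma$: by minimality, each body atom $h(\vartheta_i)\in I$ is the unique element of $I$ inside some nondeterministic fact $F_i$ (otherwise $I\setminus\{h(\vartheta_i)\}$ would still be a model), so $I$ can satisfy $res(\mathbfit{F},\sigma,h)$ only through a head disjunct; positivity of $q$ then closes the argument. You instead build a fair disjunctive chase tree, invoke its universality to extract a branch model refuting $q$, and prove a bridging lemma that every branch model satisfies every nondeterministic fact. Your bridging lemma is the right statement and its induction is sound, crucially because you fixed the shared null-naming convention so that membership of the fired head atom in $F$ is syntactic rather than up to homomorphism. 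The trade-off: the paper's minimal-model trick needs no external theorem and no fairness bookkeeping, whereas your route imports the universality of the disjunctive chase tree for DTGDs as a black box; if your proof were to stand alone, that universality claim would itself require the usual inductive branch-selection argument, which is comparable in length to the paper's minimality argument.
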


\comment{

\begin{lem}
If $chase(D,\Sigma)\vDash q$ then $D\cup\Sigma\vDash q$.
\end{lem}

\begin{proof}
Suppose $chase(D,\Sigma)\vDash q$. We need to prove $D\cup\Sigma\vDash q$. Let $I$ be a model of $D\cup\Sigma$, and let $C$ be the set of constants occurring in $q$. To yield the desired conclusion, as $q$ is preserved under $C$-homomorphisms, it suffices to show that there is a model $J$ of $chase(D,\Sigma)$ such that $J\rightarrow_C I$. 

Let $J_0=D$, and $\tau_0$ be the identity function with domain $adom(D)$. Clearly, $J_0$ is a minimal model of $chase_0(D,\Sigma)$, and $\tau_0$ is a $C$-homomorphism from $J_0$ to $I$.

For the case $k>0$, let us assume that $J_{k-1}$ is a minimal model of $chase_{k-1}(D,\Sigma)$, and $\tau_{k-1}$ is a $C$-homomorphism from $J_{k-1}$ to $I$. Let $J_k$ denote the smallest superset of $J_{k-1}$ and $\tau_k$ the smallest extension of $\tau_{k-1}$ such that, if $\sigma\in\Sigma$, $body(\sigma)=\{\alpha_1,\dots,\alpha_n\}$ and $h(body(\sigma))\subseteq J_{k-1}$ for some substitution $h$, then both of the following properties hold:
\begin{enumerate}
\item $res(\mathbfit{F},\sigma,h)\in J_k$, where $\mathbfit{F}$ is a tuple of nondeterministic facts $F_1,\dots,F_n$ such that $h(\alpha_i)\in F_i$ for $1\le i\le n$. Note that $J_{k-1}$ is a minimal model of $chase_{k-1}(D,\Sigma)$, which implies the existence of $\mathbfit{F}$ and the applicability of $\sigma$ to $chase_{k-1}(D,\Sigma)$. Let $h'$ be a substitution that extends $h$ in the way defined by $chase$. 
\item $\tau_{k}(h'(x))=g'(x)$ for every existential variable $x$ in $\sigma$, where $g'$ is a substitution extending $\tau_{k-1}\circ h$ such that $g'(head(\sigma))\subseteq I$. Note that $\tau_{k-1}$ is a $C$-homomorphism from $J_{k-1}$ to $I$. We thus have $\tau_{k-1}(h(body(\sigma)))\subseteq I$. As $I$ is a model of $\sigma$, the substitution $g'$ always exists.
\end{enumerate}

By definition, it is easy to see that $J_k$ is a minimal model of $chase_{k}(D,\Sigma)$, and $\tau_k$ is a $C$-homomorphism from $J_k$ to $I$. Let $J=\bigcup_{k\ge 0}J_k$ and $\tau=\bigcup_{k\ge 0}\tau_k$. It is thus not difficult to verify that $J$ is a model of $chase(D,\Sigma)$, and $\tau$ is a $C$-homomorphism from $J$ to $I$. These complete the proof.
\end{proof}

\begin{lem}
If $D\cup\Sigma\vDash q$ then $chase(D,\Sigma)\vDash q$.
\end{lem}

\begin{proof}
Suppose $D\cup\Sigma\vDash q$. To prove $chase(D,\Sigma)\vDash q$, by the preservation of $q$ under $const(q)$-homomorphisms, it suffices to prove that every minimal model of $chase(D,\Sigma)$ is a model of $q$. Let $I$ be a minimal model of $chase(D,\Sigma)$. It thus remains to show that $I$ is a model of $q$.

On the other hand, it is clear that $D\subseteq chase(D,\Sigma)$. By assumption, we have  $chase(D,\Sigma)\cup\Sigma\vDash q$. So, to yield that $I$ is a model of $q$, it suffices to show that $I$ is a model of $\Sigma$. Let $\sigma\in\Sigma$. Since $\Sigma$ is canonical, we assume $\sigma$ is  of the form
\begin{equation*}
\bigwedge_{i=1}^m\vartheta_i(\mathbfit{x},\mathbfit{y})\rightarrow\exists\mathbfit{z}\bigvee_{j=1}^n\alpha_j(\mathbfit{x},\mathbfit{z})
\end{equation*}
where $\vartheta_i,1\le i\le m$, and $\alpha_j, 1\le j\le n$, are atoms. Suppose $h$ is a substitution such that $h(\vartheta_i)\in I$ for all $1\le i\le m$. Now our task is to show that there is some $1\le j\le n$ and a substitution $h'$ extending $h$ such that $h'(\alpha_j)\in I$. 

Before completing the proof, we claim that for each $1\le i\le m$ there exists a nondeterministic fact $F\in chase(D,\Sigma)$ such that $F\cap I=\{h(\vartheta_i)\}$. Otherwise, let $i$ be an index such that $h(\vartheta_i)\not\in F$ for any $F\in chase(D,\Sigma)$; then it is not difficult to verify that $I\setminus\{h(\vartheta_i)\}$ is also a model of $chase(D,\Sigma)$, which contradicts with the minimality of $I$. 

Let $\mathbfit{F}$ be a tuple of nondeterministic facts $F_1,\dots,F_m\in chase(D,\Sigma)$ such that $F_i\cap I=\{h(\vartheta_i)\}$ for $1\le i\le m$. Let $k$ be an integer such that $\{F_1,\dots,F_m\}\subseteq chase_k(D,\Sigma)$. By definition of the chase procedure, we know  
\begin{equation*}
res(\mathbfit{F},\sigma,h)\in chase(D,\Sigma),
\end{equation*}
where 
\begin{equation*}
res(\mathbfit{F},\sigma,h)=\{h'(\vartheta_i):1\le i\le m\}\cup\bigcup_{i=1}^m F_i\setminus\{h(\vartheta_i)\},
\end{equation*}
and $h'$ is a substitution that extends $h$ in the way defined by $chase$.
So, $I$ must be a model of $res(\mathbfit{F},\sigma,h)$. Consequently, $h'(\alpha_j)\in I$ for some $1\le j\le n$, which is as desired.
\end{proof}
}
%

Now we generalize the notion of homomorphism from instances to nondeterministic instances. Let $I$ and $J$ be nondeterministic instances over the same schema. Given a set $C$ of constants, a function $h:adom(I)\rightarrow adom(J)$ is called a {\em $C$-homomorphism} from $I$ to $J$, written $h:I\rightarrow_C J$, if we have $h(I)\subseteq J$ and $h(c)=c$ for all constants $c\in C$. 

The following proposition shows that the nondeterministic chase preserves generalized homomorphisms. This property will play an important role in our first characterization. 

\begin{prop}\label{lem:chase_hom_prv}
Let $\Sigma$ be a set of DTGDs, let $D$ and $D'$ be databases, and let $C$ be a set of constants. If there exists a $C$-homomorphism $\tau$ from $D$ to $D'$, then there exists a $C$-homomorphism $\tau'\supseteq\tau$ from  $chase(D,\Sigma)$ to $chase(D',\Sigma)$.
\end{prop}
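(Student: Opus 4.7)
The plan is to construct $\tau'$ as the union of a chain of $C$-homomorphisms $\tau'_k : adom(chase_k(D,\Sigma))\rightarrow adom(chase_k(D',\Sigma))$, proceeding by induction on the chase stage $k$. The base case sets $\tau'_0 := \tau$, which is a $C$-homomorphism from $chase_0(D,\Sigma)=D$ to $chase_0(D',\Sigma)=D'$ by assumption.

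For the inductive step, suppose $\tau'_{k-1}$ is already a $C$-homomorphism from $chase_{k-1}(D,\Sigma)$ to $chase_{k-1}(D',\Sigma)$. Consider any new nondeterministic fact $F=res(\mathbfit{F},\sigma,h)\in chase_k(D,\Sigma)\setminus chase_{k-1}(D,\Sigma)$, produced by some $\sigma\in\Sigma$ with body atoms $\alpha_1,\dots,\alpha_n$, a tuple $\mathbfit{F}=(F_1,\dots,F_n)$ with $F_i\in chase_{k-1}(D,\Sigma)$, and a substitution $h$ with $h(\alpha_i)\in F_i$. By the inductive hypothesis each $\tau'_{k-1}(F_i)$ lies in $chase_{k-1}(D',\Sigma)$ and contains $(\tau'_{k-1}\circ h)(\alpha_i)$, so $\sigma$ is applicable to $chase_{k-1}(D',\Sigma)$ via $\mathbfit{F'}:=\tau'_{k-1}(\mathbfit{F})$ and $h^*:=\tau'_{k-1}\circ h$, producing $F':=res(\mathbfit{F'},\sigma,h^*)\in chase_k(D',\Sigma)$. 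To extend $\tau'_{k-1}$, I map each fresh null named by the triple $(\sigma,h(\mathbfit{x}),v)$ in the left-hand application to the corresponding fresh null named by $(\sigma,h^*(\mathbfit{x}),v)=(\sigma,\tau'_{k-1}(h(\mathbfit{x})),v)$ in the right-hand application; this is well-defined because the triple determines the null uniquely and $\tau'_{k-1}$ is already fixed on the frontier values $h(\mathbfit{x})$. A direct computation then yields $\tau'_k(F)=F'$, so $\tau'_k(chase_k(D,\Sigma))\subseteq chase_k(D',\Sigma)$.

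Finally, setting $\tau':=\bigcup_{k\ge 0}\tau'_k$ gives a function $adom(chase(D,\Sigma))\to adom(chase(D',\Sigma))$ that is the identity on $C$, extends $\tau$, and satisfies $\tau'(chase(D,\Sigma))\subseteq chase(D',\Sigma)$, which is the required $C$-homomorphism.

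The routine portion is the compatibility check $\tau'_k(res(\mathbfit{F},\sigma,h))=res(\tau'_{k-1}(\mathbfit{F}),\sigma,\tau'_{k-1}\circ h)$; the only place where care is needed is the null-naming convention. Because the chase generates nulls that correspond to triples depending on the frontier values $h(\mathbfit{x})$ rather than on $h$ itself, the extension of $\tau'_{k-1}$ to fresh nulls is forced (and consistent across different applications that share a frontier tuple), which is precisely what makes the homomorphism property propagate. This bookkeeping around Skolem-style null naming is the main, though modest, obstacle in the argument.
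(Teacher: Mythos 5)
Your proposal is correct and follows essentially the same route as the paper's proof: induction on the chase stage, transferring applicability of each DTGD along the homomorphism, and extending the map on the Skolem nulls via the triple-naming convention so that $\tau'_k(res(\mathbfit{F},\sigma,h))=res(\tau'_{k-1}(\mathbfit{F}),\sigma,\tau'_{k-1}\circ h)$. Your explicit remark that the null's name depends only on the frontier values $h(\mathbfit{x})$, which forces a consistent extension, is exactly the point the paper's construction relies on.
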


\comment{
\begin{proof}
Let $\tau$ be a $C$-homomorphism from $D$ to $D'$. Our task is to extend $\tau$ to a $C$-homomorphism $\tau'$ from $chase(D,\Sigma)$ to $chase(D',\Sigma)$. For convenience, let $\tau_0$ denote $\tau$. Let $k$ be any positive integer. Suppose $\tau_{k-1}$ is a $C$-homomorphism from $chase_{k-1}(D,\Sigma)$ to $chase_{k-1}(D',\Sigma)$. We need to prove that $\tau_{k-1}$ can be extended to a $C$-homomorphism $\tau_k$ from $chase_{k}(D,\Sigma)$ to $chase_{k}(D',\Sigma)$. If such a $\tau_k$ indeed exists, it is easy to verify that $\tau'=\bigcup_{k\ge 0}\tau_k$ is a $C$-homomorphism from $chase(D,\Sigma)$ to $chase(D',\Sigma)$.

So it remains to construct the $C$-homomorphism $\tau_k$. To do this, we first prove a property as follows: {\em Every DTGD in $\Sigma$ applicable to $chase_{k-1}(D,\Sigma)$ is also applicable to $chase_{k-1}(D',\Sigma)$.} Let $\sigma\in\Sigma$ be a DTGD which is applicable to $chase_{k-1}(D,\Sigma)$, and let $\alpha_1,\dots,\alpha_n$ list all the atoms in the body of $\sigma$. Then there must be a tuple $\mathbfit{F}$ of nondeterministic facts $F_1,\dots,F_n\in chase_{k-1}(D,\Sigma)$ and a substitution $h$ such that $h(\alpha_i)\in F_i$ for  $1\le i\le n$. By assumption, $\tau_{k-1}$ is a $C$-homomorphism from $chase_{k-1}(D,\Sigma)$ to $chase_{k-1}(D',\Sigma)$. Consequently, we have 
\begin{equation*}
\{\tau_{k-1}(F_1),\dots,\tau_{k-1}(F_n)\}\subseteq chase_{k-1}(D',\Sigma).
\end{equation*}
It is also clear that $\tau_{k-1}(h(\alpha_i))\in\tau_{k-1}(F_i)$ for $1\le i\le n$, which implies that $\sigma$ is applicable to $chase_{k-1}(D',\Sigma)$.

Let $\tau_k$ denote the smallest extension of $\tau_{k-1}$ such that, if $\sigma\in\Sigma$ is applicable to $chase_{k-1}(D,\Sigma)$ via some substitution $h$ and some tuple $\mathbfit{F}$ of nondeterministic facts in $chase_{k-1}(D,\Sigma)$, then $\tau_k(h'(x))=g'(x)$ for each existential variable $x$ in $\sigma$, where $h'$ (resp., $g'$) is the substitution extending $h$ (resp., $\tau_{k-1}\circ h$) and introduced in $res(\mathbfit{F},\sigma, h)$ (resp., $res(\tau_{k-1}(\mathbfit{F}),\sigma, \tau_{k-1}\circ h)$). It is easy to verify that 
{\small
$$\tau_k(res(\mathbfit{F},\sigma, h))=res(\tau_{k-1}(\mathbfit{F}),\sigma, \tau_{k-1}\circ h)\in chase_k(D',\Sigma),$$
}

\vspace{-.35cm}
\noindent which implies $\tau_k(chase_k(D,\Sigma))\subseteq chase_k(D',\Sigma)$. Consequently, $\tau_k$ is a $C$-homomorphism from $chase_k(D,\Sigma)$ to $chase_k(D',\Sigma)$. This thus completes the proof.
\end{proof}
}

\subsubsection{Characterization}  In this subsection, we establish a characterization for DTGDs.
%
Before proceeding, we need to present some properties for OMQA-ontologies.

Let $\mathcal{Q}$ be a class of UCQs.  An OMQA$[\mathcal{Q}]$-ontology $O$ is said to be {\em closed under database homomorphisms} if, for all $(D,q)\in O$, if $D'$ is a database with $D\rightarrow_{\textit{const}(q)} D'$, then $(D',q)\in O$; and $O$ is {\em closed under constant substitutions} if, for all $(D,q)\in O$, if $\tau$ is a {\em constant substitution} (i.e., a partial function from $\Delta$ to $\Delta$), then $(\tau(D),\tau(q))\in O$.


The following two propositions tell us that ontologies defined by DTGDs are closed under both of above properties.

\begin{prop}\label{prop:close_dh}
Every DTGD$[\mathsf{UCQ}]$-ontology is closed under database homomorphisms.
\end{prop}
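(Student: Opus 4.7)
The plan is to reduce the closure under database homomorphisms to a homomorphism-preservation property for the nondeterministic chase, using Theorem 2 and Proposition 3 as the two main tools.

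First I would unfold the definitions. Assume $O = [\![\Sigma]\!]_{\mathscr{D},\mathscr{Q}}^{\mathsf{UCQ}}$ for some finite set $\Sigma$ of DTGDs, and let $(D,q)\in O$, $C := \textit{const}(q)$, and $\tau : D \to_C D'$ with $D'$ a $\mathscr{D}$-database. My goal is $D'\cup\Sigma\vDash q$. By Theorem~\ref{thm:sound_comp} it suffices to prove $chase(D',\Sigma)\vDash q$, given that $chase(D,\Sigma)\vDash q$. Proposition~\ref{lem:chase_hom_prv} immediately supplies a $C$-homomorphism $\tau' \supseteq \tau$ from $chase(D,\Sigma)$ to $chase(D',\Sigma)$, so everything reduces to transporting entailment of $q$ along $\tau'$ at the level of nondeterministic instances.

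The key step (and where one must be a little careful) is the transport itself, because the chase yields \emph{nondeterministic} instances and $\vDash$ for a nondeterministic instance $I$ means that every classical model of the disjunctions in $I$ satisfies $q$. So let $M'$ be an arbitrary classical model of $chase(D',\Sigma)$; I want to construct a classical model $M$ of $chase(D,\Sigma)$ together with a $C$-homomorphism from $M$ into $M'$. The construction is pointwise: for every nondeterministic fact $F \in chase(D,\Sigma)$, its image $\tau'(F)$ is a nondeterministic fact in $chase(D',\Sigma)$, so $M'$ picks some atom $b \in \tau'(F)\cap M'$; choose any $a\in F$ with $\tau'(a)=b$ and put $a$ into $M$. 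Then $M$ is a model of $chase(D,\Sigma)$ by construction, and $\tau'\!\restriction_{adom(M)}$ is a $C$-homomorphism from $M$ to $M'$. Since $chase(D,\Sigma)\vDash q$, we have $M\vDash q$; UCQs are preserved under $C$-homomorphisms (since $q$ only mentions constants in $C$), so $M'\vDash q$. As $M'$ was arbitrary, $chase(D',\Sigma)\vDash q$, and another application of Theorem~\ref{thm:sound_comp} gives $D'\cup\Sigma\vDash q$, hence $(D',q)\in O$, as desired.

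The main obstacle, if any, is precisely the bookkeeping around the nondeterministic semantics in the previous paragraph: the entailment $I\vDash q$ for a nondeterministic instance $I$ must be interpreted via its classical models, and one needs the pointwise choice argument to lift a model of the target chase to a model of the source chase. Once this is spelled out, the rest is just an application of the two earlier results, so the proof should be short.
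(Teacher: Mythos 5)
Your proposal is correct and follows essentially the same route as the paper's proof: reduce to the chase via Theorem~\ref{thm:sound_comp}, invoke Proposition~\ref{lem:chase_hom_prv} for the $C$-homomorphism $\tau'$, and then lift an arbitrary model of $chase(D',\Sigma)$ back to a model of $chase(D,\Sigma)$ that maps into it, concluding by preservation of UCQs under $C$-homomorphisms. The only cosmetic difference is that the paper takes the full preimage $\{\alpha : \tau'(\alpha)\in M'\}$ as the lifted model rather than your one-atom-per-nondeterministic-fact selection; both yield a model of $chase(D,\Sigma)$ that is $C$-homomorphic to $M'$.
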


\comment{
\begin{proof}
Let $\Sigma$ be a finite set of DTGDs. Let $\mathscr{D}$ and $\mathscr{Q}$ be a pair of schemas, $D$ a $\mathscr{D}$-database, and $q$ a Boolean $\mathscr{Q}$-UCQ. Suppose $D\cup\Sigma\vDash q$, and let $D'$ be a $\mathscr{D}$-database such that $D\rightarrow_C D'$, where $C$ denotes $const(q)$. To yield the desired proposition, it is sufficient to prove $D'\cup\Sigma\vDash q$. 

According to the completeness of nondeterministic chase, we have $chase(D,\Sigma)\vDash q$, and by the soundness of nondeterministic chase, it suffices to prove $chase(D',\Sigma)\vDash q$. Let $J$ be a model of $chase(D',\Sigma)$. We need to show that $J$ is a model of $q$. By Proposition~\ref{lem:chase_hom_prv}, we know that there is a $C$-homomorphism $\tau$ from $chase(D,\Sigma)$ to $chase(D',\Sigma)$. Let 
\begin{equation*}
I=\{\alpha\in atom(chase(D,\Sigma)): \tau(\alpha)\in J\},
\end{equation*}
where $atom(chase(D,\Sigma))$ denotes the set of all atoms that occur in some nondeterministic fact in $chase(D,\Sigma)$. Next we prove that $I$ is a model of $chase(D,\Sigma)$. 

Let $F$ be a nondeterministic fact in $chase(D,\Sigma)$. Clearly, we have $\tau(F)\in chase(D',\Sigma)$, which implies that $J$ is a model of $\tau(F)$, i.e., $\tau(F)\cap J\ne\emptyset$. By definition, we know that $F\cap I\ne\emptyset$, or equivalently, $I$ is a model of $F$. Due to the arbitrariness of $F$, we know that $I$ is indeed a model of $chase(D,\Sigma)$. Since $q$ is a consequence of $chase(D,\Sigma)$, we then obtain that $I$ is a model of $q$. On the other hand, by definition, $I$ is clearly $C$-homomorphic to $J$. Since $q$ is preserved under $C$-homomorphism, we immediately have that $J$ is also a model of $q$, which is that we need.
\end{proof}
}

\begin{prop}\label{prop:close_cs}
Every DTGD$[\mathsf{UCQ}]$-ontology is closed under constant substitutions.
\end{prop}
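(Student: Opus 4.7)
The plan is to mirror the (commented-out) proof of Proposition~\ref{prop:close_dh}, adapted to the fact that here the query itself is renamed by $\tau$. Concretely, I would fix a finite set $\Sigma$ of DTGDs defining the ontology, a pair $(D,q)$ in the ontology, and a constant substitution $\tau$. Unfolding the ontology definition, the side condition $const(\tau(q))\subseteq adom(\tau(D))$ follows from $const(q)\subseteq adom(D)$ by applying $\tau$, so the only real task is to derive $\tau(D)\cup\Sigma\vDash \tau(q)$ from $D\cup\Sigma\vDash q$. Applying Theorem~\ref{thm:sound_comp} on both sides, this reduces to deriving $chase(\tau(D),\Sigma)\vDash \tau(q)$ from $chase(D,\Sigma)\vDash q$.

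The key tool is Proposition~\ref{lem:chase_hom_prv} in its $C=\emptyset$ form. Since $\tau$ (extended by identity where undefined) restricts to a homomorphism $D\to\tau(D)$, I would lift it to a homomorphism $\tau'\supseteq\tau$ from $chase(D,\Sigma)$ into $chase(\tau(D),\Sigma)$. Next, for an arbitrary model $J$ of $chase(\tau(D),\Sigma)$, I would form the pullback instance $I=\{\alpha\in atom(chase(D,\Sigma)):\tau'(\alpha)\in J\}$, where $atom(\cdot)$ denotes the set of atoms appearing in the nondeterministic facts of its argument, and verify that $I$ models each $F\in chase(D,\Sigma)$: the image $\tau'(F)$ lies in $chase(\tau(D),\Sigma)$, so $J$ must contain some $\tau'(\alpha)$ with $\alpha\in F$, whence $\alpha\in I\cap F$. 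Theorem~\ref{thm:sound_comp} then gives $I\vDash q$.

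To close, I would transport a witness from $I$ into $J$. Writing a satisfied disjunct of $q$ as $\exists\mathbfit{y}\,\varphi(\mathbfit{y},\mathbfit{c})$ with $\mathbfit{c}\subseteq const(q)$, and picking an assignment $g$ such that every atom of $\varphi(g(\mathbfit{y}),\mathbfit{c})$ lies in $I$, applying $\tau'$ pushes all of those atoms into $J$; because $\tau'$ extends $\tau$ on constants, each component of $\mathbfit{c}$ becomes the corresponding component of $\tau(\mathbfit{c})$, producing a witness for the corresponding disjunct of $\tau(q)$ in $J$. Since $J$ was arbitrary, $chase(\tau(D),\Sigma)\vDash \tau(q)$, as required.

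The conceptually new point, and the one I expect to be the main subtlety, is that $\tau$ does not fix $const(q)$, so I cannot simply invoke preservation of $q$ under $const(q)$-homomorphisms the way the proof of Proposition~\ref{prop:close_dh} does. Instead I must use the covariance of UCQs under arbitrary constant-renaming homomorphisms, and it is precisely the $C=\emptyset$ form of Proposition~\ref{lem:chase_hom_prv} that delivers a homomorphism of the required flexible type. Once this observation is in place, the remainder of the argument runs strictly in parallel to the proof of closure under database homomorphisms.
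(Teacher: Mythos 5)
Your proposal is correct and follows essentially the same route as the paper's proof: both reduce to the chase via Theorem~\ref{thm:sound_comp}, lift $\tau$ to $\tau'$ on $chase(D,\Sigma)$ via Proposition~\ref{lem:chase_hom_prv}, build from an arbitrary model $J$ of the target chase a model $I$ of $chase(D,\Sigma)$ (the paper picks one disjunct per nondeterministic fact where you take the full pullback, an immaterial difference), and then push the CQ witness through $\tau'$, using that $\tau'$ extends $\tau$ on constants to land in the correct disjunct of $\tau(q)$. The subtlety you flag — that $\tau$ need not fix $const(q)$, so one argues with the $C=\emptyset$ homomorphism rather than preservation under $const(q)$-homomorphisms — is exactly how the paper handles it.
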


\comment{
\begin{proof}
Let $\Sigma$ be a finite set of DTGDs. Let $\mathscr{D}$ and $\mathscr{Q}$ be a pair of schemas, $D$ a $\mathscr{D}$-database, and $q$ a Boolean $\mathscr{Q}$-UCQ. Let $\tau$ be a constant substitution, i.e., a partial function from $\Delta$ to $\Delta$. By the soundness and completeness of nondeterministic chase, to yield Proposition~\ref{prop:close_cs}, it suffices to prove that $chase(D,\Sigma)\vDash q$ implies $chase(\tau(D),\Sigma)\vDash\tau(q)$.

Suppose we have $chase(D,\Sigma)\models q$. Our task is to prove $chase(\tau(D),\Sigma)\vDash\tau(q)$. It is easy to verify that $\tau$ is actually a homomorphism from $D$ to $\tau(D)$. According to Proposition~\ref{lem:chase_hom_prv}, there is a homomorphism $\tau'$ from $chase(D,\Sigma)$ to $chase(\tau(D),\Sigma)$ such that $\tau\subseteq\tau'$. Consequently, we have $$\tau'(chase(D,\Sigma))\subseteq chase(\tau(D),\Sigma).$$ As a consequence, to prove $chase(\tau(D),\Sigma)\vDash\tau(q)$, it suffices to prove $\tau'(chase(D,\Sigma))\vDash\tau(q)$. Let $J$ be a model of $\tau'(chase(D,\Sigma))$. Obviously, for every nondeterministic fact $F\in chase(D,\Sigma)$, there exists at least one disjunct, denoted $\alpha_F$, of $F$ such that $\tau'(\alpha_F)\in J$. Let $I$ denote the set consisting of $\alpha_F$ for all $F\in chase(D,\Sigma)$. It is easy to see that $I$ is a model of $chase(D,\Sigma)$. By assumption, we have $chase(D,\Sigma)\vDash q$, which implies that $I\models q$. W.l.o.g., we assume that $q$ is of the form $q_1\vee\cdots\vee q_n$ where each $q_i$ is a BCQ. Then there exists some $k\in\{1,\dots,n\}$ and a substitution $h$ such that $h([q_k])\subseteq I$, which implies that 
$$h([\tau(q_k)])=\tau(h([q_k]))=\tau'(h([q_k]))\subseteq\tau'(I)=J.$$
On the other hand, it is clear that $\tau(q)=\tau(q_1)\vee\cdots\vee\tau(q_n)$, which implies that $J$ is a model of $\tau(q)$. By the arbitrariness of $J$, we thus have $\tau'(chase(D,\Sigma))\vDash\tau(q)$ as desired.
\end{proof}
}

Moreover, we can show that the properties above exactly capture the class of DED-ontologies definable by DTGDs. 

\begin{thm}\label{thm:char_dtgd}
A DED$[\mathsf{UCQ}]$-ontology is defined by a finite set of DTGDs iff it is closed under both database homomorphisms and constant substitutions.\vspace{-.1cm}
\end{thm}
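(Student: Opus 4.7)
The implication $(\Rightarrow)$ is immediate, since every DTGD$[\mathsf{UCQ}]$-ontology is in particular a DED$[\mathsf{UCQ}]$-ontology and Propositions~\ref{prop:close_dh} and~\ref{prop:close_cs} supply the two closure properties.

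For $(\Leftarrow)$, assume $O$ is a DED$[\mathsf{UCQ}]$-ontology closed under database homomorphisms and constant substitutions, and let $\Sigma$ be a finite set of canonical DEDs with $[\![\Sigma]\!]_{\mathscr{D},\mathscr{Q}}^{\mathsf{UCQ}} = O$. My plan is to build a finite set $\Sigma^\circ$ of DTGDs and show $[\![\Sigma^\circ]\!]_{\mathscr{D},\mathscr{Q}}^{\mathsf{UCQ}} = O$. First, pre-normalize $\Sigma$ by deleting every DED whose head contains an equality disjunct $z = t$ with $z$ an existentially quantified variable: the valuation $z \mapsto t$ makes the whole head true, so the DED is logically valid and does not affect $O$. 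For each surviving $\sigma$, obtain $\sigma^\circ$ by discarding every equality disjunct from the head---after pre-normalization each such disjunct has the form $x_1 = x_2$ between frontier terms---and set $\Sigma^\circ = \{\sigma^\circ : \sigma \in \Sigma\}$. Because each $\sigma^\circ$ logically entails $\sigma$, the inclusion $[\![\Sigma]\!] \subseteq [\![\Sigma^\circ]\!]$ is automatic, so the whole burden is proving $[\![\Sigma^\circ]\!] \subseteq [\![\Sigma]\!]$.

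To establish the remaining inclusion, fix $(D, q) \in [\![\Sigma^\circ]\!]$ and pick a constant renaming $\tau$ that is the identity on $\textit{const}(q)$ and injectively sends every other constant of $adom(D)$ to a fresh constant, so that $\tau$ restricted to $adom(D)$ is a $\textit{const}(q)$-isomorphism of $D$ onto $\tau(D)$. Proposition~\ref{lem:chase_hom_prv} then yields an isomorphism between $chase(D, \Sigma^\circ)$ and $chase(\tau(D), \Sigma^\circ)$ that fixes $\textit{const}(q)$, and Theorem~\ref{thm:sound_comp} gives $\tau(D) \cup \Sigma^\circ \vDash q$. The crucial step is to lift this to $\tau(D) \cup \Sigma \vDash q$: the only additional freedom $\Sigma$ grants over $\Sigma^\circ$ is to satisfy a triggered DED via a frontier equality disjunct $x_1 = x_2$, but on $\tau(D)$ such a disjunct is either tautological (when the body substitution sends $x_1$ and $x_2$ to the same term) or unsatisfiable under the unique-name convention (when they are sent to distinct fresh constants), so every model of $\tau(D) \cup \Sigma$ is already a model of $\tau(D) \cup \Sigma^\circ$. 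Hence $(\tau(D), q) \in [\![\Sigma]\!] = O$, and applying closure of $O$ under the partial-inverse constant substitution $\tau^{-1}$ recovers $(D, q) \in O$.

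The principal obstacle is to discharge the claim of the previous paragraph rigorously when nondeterministic derivations introduce nulls: two distinct nulls can be identified by a model without violating the unique-name convention, so a frontier-equality disjunct whose body match lands on two distinct nulls is not automatically excluded. I would handle this by an induction along the nondeterministic chase carrying the invariant that, whenever an equality branch would identify two nulls, the same identification is realizable by a further constant substitution on $D$ followed by a database homomorphism; both closure hypotheses are thus needed together, which is why neither suffices alone. Composing these simulations cleanly into a single pair of a constant substitution and a database homomorphism admissible under the OMQA$[\mathsf{UCQ}]$-ontology closure axioms is the delicate technical heart of the argument.
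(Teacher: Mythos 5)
There is a genuine gap in the $(\Leftarrow)$ direction, at the step you yourself flag as the ``crucial step.'' You claim that every model of $\tau(D)\cup\Sigma$ is already a model of $\tau(D)\cup\Sigma^\circ$ because each frontier-equality disjunct is either tautological or unsatisfiable after renaming constants apart. But the \emph{tautological} case is precisely where the inclusion breaks: if a body match sends $x_1$ and $x_2$ to the same element (the same null, or the same constant as in a fact like $R(a,a)$, which no renaming can pull apart), then $\sigma$ is satisfied via the true disjunct $x_1=x_2$ while $\sigma^\circ$, which has lost that disjunct, may fail. Concretely, for $\sigma = R(x_1,x_2)\rightarrow x_1=x_2\vee\exists z\,S(x_1,z)$ and $D=\{R(a,a)\}$, the instance $D$ itself models $\sigma$ but not $\sigma^\circ$, so $D\cup\Sigma^\circ\vDash\exists x\exists z\,S(x,z)$ while $D\cup\Sigma\not\vDash\exists x\exists z\,S(x,z)$. (This particular $\Sigma$ violates closure under constant substitutions, so it is not a counterexample to the theorem, but it does refute your model-inclusion claim, and your argument does not invoke the closure hypotheses at the point where they would be needed to rescue it.) Your final paragraph concedes a version of this problem for nulls and defers it to an induction ``carrying the invariant that \dots the same identification is realizable by a further constant substitution on $D$ followed by a database homomorphism,'' but that induction is exactly the content of the theorem and is not carried out. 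A smaller issue: if a head consists solely of equality disjuncts (an EGD-like rule), discarding them leaves an empty head, which is not a DTGD, and your construction does not say what $\sigma^\circ$ is in that case.

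Your approach is also genuinely different from the paper's, which does not attempt to delete equalities at all. Instead it introduces a fresh binary symbol $\textit{Eq}$, axiomatizes it as a congruence (reflexivity over the active domain, symmetry, transitivity, and the substitution rule~(\ref{eqn:eq_diffusion})), and replaces $=$ by $\textit{Eq}$ throughout $\Sigma$. The easy direction expands a model of $\Sigma$ with the identity interpretation of $\textit{Eq}$; the hard direction takes a minimal model $J$ of the DTGD set, quotients it by the equivalence relation that $\textit{Eq}$ defines, and uses closure under constant substitutions to transfer $D\cup\Sigma\vDash q$ to $\tau(D)\cup\Sigma\vDash\tau(q)$ for the quotient map $\tau$, then pulls the query back to $J$. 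That construction sidesteps entirely the case analysis on which your argument founders: equalities that become true are absorbed into the congruence classes rather than having to be argued away. If you want to salvage your route, you would need to prove the deferred invariant, and the natural way to do so essentially reconstructs the paper's quotient argument.
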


\begin{proof}[Sketch of Proof]
The direction of ``only-if" immediately follows from Propositions~\ref{prop:close_dh} and~\ref{prop:close_cs}. It thus remains to consider the converse. 
Let $O$ be a DED$[\mathsf{UCQ}]$-ontology closed under both database homomorphisms and constant substitutions, and let $\Sigma$ be a finite set of DEDs that defines $O$. We need to construct a finite set $\Sigma'$ of DTGDs which plays the same role as $\Sigma$ under the semantics of UCQ-answering. 

To implement the construction, we introduce $\textit{Eq}$ as a fresh binary relation symbol, and use some constraints to assure that $\textit{Eq}$ defines an equivalence relation. Clearly, such constraints can be represented by several DTGDs in a routine way. Furthermore, for every ($k$-ary) relation symbol $R$ occurring in $\Sigma$, we employ a DTGD of the form\vspace{-.05cm}
\begin{equation}\label{eqn:eq_diffusion}
{\wedge}_{i=1}^k\textit{Eq}(x_i,y_i)\wedge{R}(x_1,\dots,x_k)\rightarrow{R}(y_1,\dots,y_k)\vspace{-.05cm}
\end{equation}
to assure that all terms (constants or nulls) equivalent w.r.t. $\textit{Eq}$ will play the same role in $R$. 

Moreover, we simulate each DED $\sigma\in\Sigma$ by a DTGD $\sigma^\ast$, which is obtained from $\sigma$ by substituting $\textit{Eq}$ for every occurrence of the equality symbol $=$. Let $\Sigma'$ be the set consisting of all the DTGDs mentioned above.
%
%
%
Thanks to the closure of $O$ under both database homomorphisms and constant substitutions, one can prove that the transformation preserves the semantics of UCQ-answering, i.e., $D\cup\Sigma\vDash q$ iff $D\cup\Sigma'\vDash q$ for all $\mathscr{D}$-databases $D$ and Boolean $\mathscr{Q}$-UCQs $q$. Thus, $\Sigma'$ is the desired DTGD set, which completes the proof.
\end{proof}

\comment{
\begin{proof}
The direction of ``only-if" immediately follows from Propositions~\ref{prop:close_dh} and~\ref{prop:close_cs}. It thus remains to consider the converse. 
Let $O$ be a DED$[\mathsf{UCQ}]$-ontology over some schema pair $(\mathscr{D},\mathscr{Q})$, and suppose it is closed under both database homomorphisms and constant substitutions. We need to prove that $O$ is defined by some finite set of DTGDs. Let $\Sigma$ be a finite set of DEDs that defines $O$. Let $\mathscr{S}$ be the schema of $\Sigma$. We introduce $\textit{Eq}$ as a fresh binary relation symbol, and $\textit{Dom}$ as a fresh unary relation symbol. Let $\Sigma_e$ be a set that consists of all the DTGDs defined in the following 1--3:
\begin{enumerate}
\item For each relation symbol $R\in\mathscr{S}$, if the arity of $R$ is $k$ and $1\le i\le k$, a DTGD of the form
\begin{equation}
R(x_1,\dots,x_k)\rightarrow\textit{Dom}(x_i),
\end{equation}
is used to collect terms (constants or nulls) occurring in the $i$-th position of $R$, 
where $k$ denotes the arity of $R$. 
\item The following DTGDs are employed to assert that $\textit{Eq}$ defines an equivalence relation in the intended model:
\begin{eqnarray}
\textit{Dom}(x)\rightarrow\textit{Eq}(x,x)\\
\textit{Eq}(x,y)\rightarrow\textit{Eq}(y,x)\\
\textit{Eq}(x,y)\wedge\textit{Eq}(y,z)\rightarrow\textit{Eq}(x,z)
\end{eqnarray}
\item 
For each relation symbol $R\in\mathscr{S}$, if the arity of $R$ is $k$, a DTGD of the form
\begin{equation}\label{eqn:eq_diffusion}
{\wedge}_{i=1}^k\textit{Eq}(x_i,y_i)\wedge{R}(x_1,\dots,x_k)\rightarrow{R}(y_1,\dots,y_k)
\end{equation}
is introduced to assure that all terms (constants or nulls) equivalent w.r.t. $\textit{Eq}$ will play the same role in $R$. 
\end{enumerate}


Let $\Sigma^\ast$ be a set of DTGDs obtained from $\Sigma$ by substituting $\textit{Eq}$ for all occurrences of the equality symbol $=$. Let $\Sigma^+$ denote $\Sigma^\ast\cup \Sigma_e$. Next we prove that, for all $\mathscr{D}$-databases $D$ and Boolean $\mathscr{Q}$-UCQs $q$, we have $D\cup\Sigma\vDash q$ iff $D\cup\Sigma^+\vDash q$. 

For the easy direction, let us assume $D\cup\Sigma^+\vDash q$. We need to prove $D\cup\Sigma\vDash q$. Let $I$ be a model of $D\cup\Sigma$, and let 
\begin{equation*}
J=I\cup\{(a,a):a\in adom(I)\}.
\end{equation*}
It is easy to see that $J$ is a model of $D\cup\Sigma^+$, which implies that $I$ is also a model of $q$. This then yileds $D\cup\Sigma\vDash q$.

For the converse, we assume $D\cup\Sigma\vDash q$. Let $J$ be a minimal model of $D\cup\Sigma^+$. Let $\sim$ denote the binary relation 
$$
\{(a,b):\textit{Eq}(a,b)\in J\}.
$$
Since $J$ is a model of $\Sigma_e$, $\sim$ must be an equivalence relation on $adom(J)$. For each $a\in adom(J)$, let $\widehat{a}$ denote the equivalence class of $a$ under $\sim$. We define $\tau$ as a function that maps each $c\in adom(D)$ to $\widehat{c}$. Since $O$ is closed under constant substitutions, from the assumption $D\cup\Sigma\vDash q$, we know that $\tau(D)\cup\Sigma\vDash\tau(q)$ holds. Let
\begin{equation*}
I=\{R(\widehat{a}_1,\dots,\widehat{a}_k): R(a_1,\dots,a_k)\in J\text{ and }R\in\mathscr{S}\}.
\end{equation*}
It is not difficult to see that $I$ is a model of $\tau(D)\cup\Sigma$, which implies that $I$ is a model of $\tau(q)$. From it, we infer that $J$ is a model of $q$, which yields  $D\cup\Sigma^+\vDash q$ as desired.
\end{proof}

}

Let $\mathsf{UCQ}^-$ denote the class of all Boolean UCQs involving no constant. For query answering with queries in $\mathsf{UCQ}^-$, the above characterization can be simplified as follows: 

\begin{cor}\label{cor:char_dtgd}
A DED$[\mathsf{UCQ}^-]$-ontology is defined by a finite set of DTGDs iff it is closed under database homomorphisms.\vspace{-.2cm}
\end{cor}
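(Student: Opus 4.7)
The plan is to bootstrap Theorem~\ref{thm:char_dtgd}. The ``only-if'' direction is immediate: Proposition~\ref{prop:close_dh} guarantees that the DTGD$[\mathsf{UCQ}]$-ontology defined by any finite set of DTGDs is closed under database homomorphisms, and this closure obviously survives restriction of the query class to $\mathsf{UCQ}^-$.

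For the converse, let $O$ be a DED$[\mathsf{UCQ}^-]$-ontology over $(\mathscr{D},\mathscr{Q})$ that is closed under database homomorphisms, and let $\Sigma$ be a finite set of DEDs defining $O$. The crucial observation is that, on the class $\mathsf{UCQ}^-$, closure under database homomorphisms already implies closure under constant substitutions. Indeed, for any constant substitution $\tau$, the restriction $\tau|_{adom(D)}$ is a homomorphism from $D$ to $\tau(D)$; and since every $q\in\mathsf{UCQ}^-$ satisfies $\textit{const}(q)=\emptyset$, no constants need to be preserved, so $(D,q)\in O$ entails $(\tau(D),q)\in O$ by the assumed closure. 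Because $\tau(q)=q$, this is precisely $(\tau(D),\tau(q))\in O$.

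With both closure properties in hand on the class $\mathsf{UCQ}^-$, I then follow the construction from the proof of Theorem~\ref{thm:char_dtgd}: introduce a fresh binary symbol $\textit{Eq}$ together with the usual equivalence-relation axioms and the diffusion rules of the form~\eqref{eqn:eq_diffusion}, and replace each occurrence of $=$ in $\Sigma$ by $\textit{Eq}$, obtaining a finite set $\Sigma'$ of DTGDs. The correctness argument from Theorem~\ref{thm:char_dtgd} invokes closure under constant substitutions only when pulling a minimal model of $\Sigma'$ back to one of $\Sigma$ via the substitution $c\mapsto\widehat{c}$ onto $\textit{Eq}$-equivalence classes, and that substitution is itself a homomorphism on the active domain of $D$; hence, for every $q\in\mathsf{UCQ}^-$, the closure we have is sufficient to derive $D\cup\Sigma\vDash q \Leftrightarrow D\cup\Sigma'\vDash q$, so $\Sigma'$ defines $O$. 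I expect the main (albeit modest) obstacle to be verifying that no other use of constant-substitution closure sneaks into the argument, which amounts to tracing the proof of Theorem~\ref{thm:char_dtgd} line by line to confirm that every appeal to $(D,q)\in O\Rightarrow(\tau(D),\tau(q))\in O$ can be recast as an appeal to closure under database homomorphisms once $\textit{const}(q)=\emptyset$.
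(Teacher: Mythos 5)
Your proposal is correct and matches the paper's intent: the corollary is stated as a simplification of Theorem~\ref{thm:char_dtgd}, and the underlying reason is exactly your key observation that for constant-free queries a constant substitution $\tau$ acts as a $\emptyset$-homomorphism from $D$ to $\tau(D)$ with $\tau(q)=q$, so closure under database homomorphisms subsumes closure under constant substitutions and the construction of Theorem~\ref{thm:char_dtgd} goes through unchanged for the query class $\mathsf{UCQ}^-$. Your care in re-tracing the proof rather than invoking the theorem as a black box (since $O$ is a $\mathsf{UCQ}^-$-ontology, not a $\mathsf{UCQ}$-ontology) is exactly the right level of diligence.
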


\section{TGDs}

In this section, let us consider another important existential rule language TGDs, a sublanguage of DTGDs in which disjunctions are not allowed to appear in the rule head. 

\subsubsection{Characterization for CQ-answering}


We first show that, in the case of CQ-answering, disjunctions can be removed from DTGDs. In other words, TGDs have the same expressive power as DTGDs under CQ-answering. 

\begin{thm}\label{thm:dtgd2tgd_cqa}
Every DTGD$[\mathsf{CQ}]$-ontology is defined by a finite set of TGDs.
\end{thm}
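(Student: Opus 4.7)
Plan: Given a finite DTGD set $\Sigma$ that defines the ontology $O$, the plan is to construct a finite TGD set $\Sigma'$ such that $D\cup\Sigma\vDash q$ iff $D\cup\Sigma'\vDash q$ for every database $D$ and every Boolean CQ $q$. By Theorem~\ref{thm:sound_comp}, this reduces to matching the CQ-entailments of the nondeterministic chase $chase(D,\Sigma)$ with those of a deterministic TGD chase. Since $q$ is a CQ (a conjunction of atoms, no top-level disjunction), $chase(D,\Sigma)\vDash q$ holds iff $q$ admits a homomorphism into every ``branch'' of the chase, i.e., every selection of one atom per nondeterministic fact. The core idea is that this ``all-branch'' condition can be captured by a deterministic chase over an expanded schema with fresh auxiliary predicates, because a CQ cannot express any genuinely disjunctive information about the branches and hence only the common (``certain'') atoms ever matter.

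Concretely, for each DTGD $\sigma:\phi(\mathbfit{x},\mathbfit{y})\to\exists\mathbfit{z}(\psi_1(\mathbfit{x},\mathbfit{z})\vee\cdots\vee\psi_k(\mathbfit{x},\mathbfit{z}))$ in $\Sigma$, I would introduce a fresh relation symbol $R_\sigma$ of arity $|\mathbfit{x}|+|\mathbfit{z}|$ and place in $\Sigma'$ the TGD $\phi(\mathbfit{x},\mathbfit{y})\to\exists\mathbfit{z}\,R_\sigma(\mathbfit{x},\mathbfit{z})$. The atom $R_\sigma(\mathbfit{x},\mathbfit{z})$ is a ``placeholder'' for ``some disjunct of $\sigma$ is witnessed at $\mathbfit{z}$'', without committing to which one, so that on its own it reveals nothing about the original schema. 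I would then augment $\Sigma'$ by a set of ``propagation'' TGDs of the form $R_\sigma(\mathbfit{x},\mathbfit{z})\to\alpha$ (with additional existentials when needed) for every atom $\alpha$ that is a common consequence of \emph{every} disjunct $\psi_i(\mathbfit{x},\mathbfit{z})$ modulo the remaining rules in $\Sigma$. These propagation TGDs would be produced by a hyper-resolution-style saturation that combines the disjuncts of each DTGD with the other (possibly already translated) rules, and which, by a joint induction, is performed simultaneously for all DTGDs.

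The main obstacle I expect is twofold: showing that the saturation yields a finite $\Sigma'$, and proving correctness of the translation. For finiteness, the key point is that, since CQs are preserved under homomorphism and the schema is fixed, only the minimal (most general) common consequences need be included, and these can be bounded using the structure of $\Sigma$ and a normal form on the nulls introduced by the disjuncts. For correctness, I would compare $chase(D,\Sigma')$ with $chase(D,\Sigma)$ branch by branch: soundness (from the TGD side) amounts to showing that any atom derivable through $R_\sigma$ is, by construction, witnessed in every branch, so any CQ entailed by $\Sigma'$ is already CQ-entailed by $\Sigma$; completeness (the harder direction) requires, given a CQ $q$ with $chase(D,\Sigma)\vDash q$, exhibiting a homomorphism from $q$ into $chase(D,\Sigma')$, which I would build by collapsing the branches of the nondeterministic chase onto the deterministic $R_\sigma$-chase and using Proposition~\ref{lem:chase_hom_prv} to transport the resulting homomorphism along the chase.
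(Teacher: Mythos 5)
There is a genuine gap, and it sits at the heart of your plan. The premise that ``a CQ cannot express any genuinely disjunctive information about the branches and hence only the common (`certain') atoms ever matter'' is false: from $A(a)\vee B(a)$ together with $A(x)\to\exists y\,E(x,y)$ and $B(x)\to\exists y\,E(y,x)$, the CQ $\exists u\exists v\,E(u,v)$ is certain even though no single atom is true in every branch. What has to be reproduced is the set of \emph{common CQ-consequences} of the branches, i.e.\ the CQ-theory of the (generally infinite) universal model $U_O(D)$ of Lemma~\ref{lem:query2um}, not the set of certain atoms. Your propagation rules $R_\sigma(\mathbfit{x},\mathbfit{z})\to\alpha$ with $\alpha$ a single atom cannot even express a common consequence such as ``there is an $E$-cycle of length $6$ through $x$'' arising from one disjunct creating a $2$-cycle and the other a $3$-cycle through $x$; that consequence is an irreducible conjunctive pattern with five fresh variables. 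Worse, once the remaining rules are recursive the \emph{minimal} common consequences can form an infinite antichain under homomorphism (e.g.\ two branches generating $E$-cycles whose lengths range over two disjoint infinite sets of primes have minimal common consequences of all lengths $pq$), so no finite saturation of head patterns, however generous, can generate them all; and deciding whether a candidate $\alpha$ is a common consequence ``modulo the remaining rules'' is CQ-entailment from TGDs, which is undecidable, so the saturation is not even an effective procedure. The finiteness claim you flag as the main obstacle is exactly the point at which the approach breaks, and it cannot be repaired by bounding ``minimal'' consequences.

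The paper avoids any static precomputation: it constructs $\Sigma^\ast$ so that the \emph{deterministic chase itself} carries out the recursively enumerable computation of all BCQs entailed by $chase(D,\Sigma)$ --- nondeterministic facts are encoded as terms via $\textit{Enc}$ and flag relations, the encoded disjunctions are closed under the rules of $\Sigma$, under merging, and under DNF normalization, and derivable BCQs are detected by matching against derived ground DNFs (using Lemma~\ref{lem:gcq2bcq}) --- and it then \emph{materializes} the universal model $U_O(D)=\bigoplus_{adom(D)}\{[q]:(D,q)\in O\}$ by emitting a fresh, disjointly renamed copy of $[q]$ for every derivable $q$, with correctness supplied by Lemma~\ref{lem:query2um}. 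Some mechanism of this computational strength is unavoidable here; your soundness/completeness outline via Proposition~\ref{lem:chase_hom_prv} is reasonable in spirit, but it needs to be rebuilt on top of a construction that computes, rather than precomputes, the common consequences.
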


To prove this theorem, it suffices to translate every set of DTGDs to a set of TGDs such that they define the same ontology under CQ-answering. Suppose $O$ is a DTGD$[\mathsf{CQ}]$-ontology over a schema pair $(\mathscr{D},\mathscr{Q})$, and $\Sigma$ a set of canonical DTGDs that defines $O$. The general idea is to construct a set ${\Sigma^\ast}$ of TGDs such that the deterministic chase on $\Sigma^\ast$ simulates the nondeterministic chase on $\Sigma$. The desired simulation employs a technique used in Section 3 of~\cite{ZhangZ2017} in which the progression of disjunctive logic programs is simulated by normal logic programs. The main difficulty here is that we need to treat CQ-answering.


To encode a nondeterministic fact, we need a set of numbers and an encoding function. The encoding function is defined by a ternary relation symbol $\textit{Enc}$. By $\textit{Enc}(x,y,z)$ we mean that $z$ encodes the pair $(x,y)$. Numbers used in the encoding are collected by a unary relation symbol $\textit{Num}$. Note that numbers here are not necessary to be natural numbers. For a technical reason, we also use a unary relation symbol $\textit{GT}$ to collect the set of all ground terms that would be used.  
Next, we show how to implement the encoding.

\medskip
For every relation symbol $R\in\mathscr{D}$, we introduce the TGDs\vspace{-.1cm}
\begin{eqnarray}
{R}(x_1,\dots,x_k)&\rightarrow&{\wedge}_{i=1}^k\left(\textit{Num}(x_i)\wedge\textit{GT}(x_i)\right)\\
&\rightarrow&\exists x\, \textit{Flag}_R(x)\wedge\textit{Num}(x)\vspace{-.1cm}
\end{eqnarray}
where $k$ is the arity of $R$, and $\textit{Flag}_R$ a unary relation symbol that defines a flag for the relation $R$. The first TGD asserts that all parameters of $R$ are both numbers and ground terms, and the second one asserts that the flag for $R$ must exist and, in particular, it is also a number.

To define the encoding function, we use the TGD\vspace{-.1cm}
\begin{equation}
\textit{Num}(x)\wedge\textit{Num}(y)\rightarrow\exists z\, \textit{Enc}(x,y,z)\wedge\textit{Num}(z)\vspace{-.1cm}
\end{equation}
which asserts that, for all numbers $x$ and $y$, there is a number $z$ to encode the pair $(x,y)$. With the relations defined above, we are then able to encode (ground) atoms. For example, to encode the atom $\alpha=R(x_1,x_2)$, we use the formula\vspace{-.1cm}
$$
\textit{Flag}_R(y_1)\wedge\textit{Enc}(y_1,x_1,y_2)\wedge\textit{Enc}(y_2,x_2,y_3)\vspace{-.1cm}
$$
which asserts that $y_3$ is a number encoding the atom $\alpha$. Note that $\alpha$ is regarded as the triple $(y_1,x_1,x_2)$ where $y_1$ is the flag of $R$, denoting where the encoding of the first element of the tuple is. In addition, to simplify the notation, given a formula $\varphi(z_0,\mathbfit{z})$, we often use $\varphi(\lceil\alpha\rceil,\mathbfit{z})$ to denote\vspace{-.1cm}
\begin{equation}\label{eqn:enc_formula}
\textit{Flag}_R(y_1)\hspace{-.02cm}\wedge\hspace{-.02cm}\textit{Enc}(y_1,\hspace{-.02cm}x_1,\hspace{-.02cm}y_2)\hspace{-.02cm}\wedge\hspace{-.02cm}\textit{Enc}(y_2,\hspace{-.02cm}x_2,\hspace{-.02cm}y_3)\hspace{-.02cm}\wedge\hspace{-.02cm} \varphi(y_3,\hspace{-.02cm}\mathbfit{z}).
\end{equation}

To encode a disjunction (resp., conjunction) of formulas, we need a flag to denote where the encoding of the first disjunct (resp., conjunct) is. To generate such flags, we use 
\begin{eqnarray}
&\rightarrow&\exists x\, \textit{Flag}_{\textit{d}}(x)\wedge\textit{Num}(x)\\
&\rightarrow&\exists x\, \textit{Flag}_{\textit{c}}(x)\wedge\textit{Num}(x)
\end{eqnarray}
where $\textit{Flag}_{\textit{d}}$ (resp., $\textit{Flag}_{\textit{c}}$) is a unary relation symbol intended to define the flag of encoding disjunction (resp., conjunction). The way of encoding a disjunction (conjunction) is similar to that for atoms, but with a different flag. In addition, the notation $\lceil\cdot\rceil$ can also be extended to disjunctions and conjunctions in an obvious way.  

With the above relations, we are able to encode nondeterministic facts. To access nondeterministic facts, some relations are needed. We introduce fresh relation symbols $
\textit{NF}$, $\textit{Mrg}$ and $\textit{Eq}$. By $\textit{NF}(x)$ we mean that $x$ encodes a nondeterministic fact. By $\textit{Mrg}(x,y,z)$ we denote that $z$ encodes a disjunction (which is a nondeterministic fact) of the nondeterministic facts encoded by $x$ and $y$. Moreover, $\textit{Eq}(x,y)$ asserts that the nondeterministic facts encoded by $x$ and $y$ are equivalent, i.e., they consist of the same set of ground atoms. We only show how to define the merging operation: 
\begin{eqnarray}
\textit{NF}(x)\wedge\textit{Flag}_{\textit{d}}(y)\!\!\!\!\!&\rightarrow\!\!\!\!\!&\textit{Mrg}(x,y,x)\\
\!\!\!\!\!\!\!\!\!\!\!\!\textit{Mrg}(x,u,v)\!\wedge\!\textit{Enc}(u,w,y)\!\wedge\!\textit{Enc}(v,w,z)\!\!\!\!\!&\rightarrow\!\!\!\!\!&\textit{Mrg}(x,y,z)
\end{eqnarray}
To simplify the notation, let $\textit{Mrg}(t_1,\dots,t_k;x_k)$ be short for 
$$
\textit{Flag}_{\textit{d}}(x_0)\wedge\textit{Mrg}(t_1,x_0,x_1)\wedge\cdots\wedge\textit{Mrg}(t_k,x_{k-1},x_k).
$$

Next let us construct TGDs to simulate the nondeterministic chase on $\Sigma$. We introduce $\textit{True}$ as a fresh unary relation symbol, and by $\textit{True}(x)$ we mean that the formula  encoded by $x$ can be inferred from the set of nondeterministic facts generated by the chase. For each canonical DTGD $\sigma\in\Sigma$, if $\alpha_1,\dots,\alpha_k$ list all atoms in the body of $\sigma$, we use the following TGDs to simulate the nondeterministic chase for $\sigma$:
\begin{eqnarray}
\begin{aligned}
{\wedge}_{i=1}^k(\textit{NF}(v_i)\wedge\,&\textit{Enc}(u_i,\lceil \alpha_i\rceil,v_i)\wedge\textit{True}(v_i))\\
\wedge\,\textit{Mrg}(u_1&,\dots,u_k; y)\rightarrow\exists\mathbfit{z}\,\textit{T}_\sigma(\mathbfit{x},y,\mathbfit{z})\wedge\textit{Num}(\mathbfit{z})
\end{aligned}\\
\textit{T}_\sigma(\mathbfit{x},y,\mathbfit{z})\wedge\textit{Mrg}(y,\lceil head(\sigma)\rceil,w)\rightarrow\textit{True}(w)
\end{eqnarray}
where $\mathbfit{x}$ (resp., $\mathbfit{z}$) is the tuple of universal (resp., existential) variables in $\sigma$, $\textit{T}_\sigma$ is a fresh relation symbol of arity $|\mathbfit{x}|+1+|\mathbfit{z}|$, and $\textit{Num}(\mathbfit{z})$ is a conjunction of $\textit{Num}(v)$ for all $v\in\mathbfit{z}$.

To initialize the truth of relations over the data schema $\mathscr{D}$, for each $k$-ary $R\in\mathscr{D}$, we introduce the following TGD:
\begin{equation}
\begin{aligned}
R(x_1,\dots,x_k)\wedge\textit{Flag}_{R}(y_0)\wedge\textit{Enc}(y_0,&\,x_1,y_1)\wedge\cdots\\
\wedge\,\textit{Enc}(y_{k-1},x_k,y_k&)\rightarrow\textit{True}(y_k)
\end{aligned}
\end{equation}


To make sure that the equivalent facts play the same role in the chase procedure, we define the following TGD:
\begin{equation}
\textit{NF}(x)\wedge\textit{NF}(y)\wedge\textit{True}(x)\wedge\textit{Eq}(x,y)\rightarrow\textit{True}(y)
\end{equation}

Let $\Sigma'$ denote the set of all TGDs defined above. Fix a database $D$. By definition, it is easy to see that symbol $\textit{Enc}$ defines an encoding function in $\textit{chase}(D,\Sigma')$. That is, for all numbers $a,b$ defined by $\textit{Num}$ in $\textit{chase}(D,\Sigma')$, there is exactly one term $c$ such that $\textit{Enc}(a,b,c)$ holds in $\textit{chase}(D,\Sigma')$. Moreover, each symbol in $\textit{Flag}_R,\textit{Flag}_{\textit{d}},\textit{Flag}_{\textit{c}}$ defines exactly one number (called a flag) in $\textit{chase}(D,\Sigma')$. Given a nondeterministic fact $F$, let $\langle F\rangle$ denote the number encoding $F$ under the defined encoding function and flags. By an induction on  chase, one can prove the following:
\begin{lem}
$F\!\in\! chase(D,\hspace{-0.04cm}\Sigma)$ iff $\textit{True}(\langle F\rangle)\!\in\! chase(D,\hspace{-0.04cm}\Sigma')$.
\end{lem}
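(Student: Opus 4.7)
The plan is to prove both directions by induction on the stage index of the respective chase, and to reduce all the encoding-level bookkeeping to three easily verified invariants about the deterministic chase of $\Sigma'$: (i) $\textit{Enc}$ defines a total function on $\textit{Num}\times\textit{Num}$; (ii) each of $\textit{Flag}_R$, $\textit{Flag}_{\textit{d}}$, $\textit{Flag}_{\textit{c}}$ is inhabited by a unique null (so that flag-guarded encodings are canonical); and (iii) the TGDs defining $\textit{Mrg}$ together with the $\textit{Eq}$-closure TGD make $\textit{Mrg}$ behave, modulo $\textit{Eq}$, as the set-theoretic union of the disjuncts of two encoded nondeterministic facts. These invariants are immediate consequences of the definitions of the auxiliary TGDs and can be formally established by a preliminary, purely syntactic induction on the chase.

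For the direction ($\Rightarrow$), I would induct on the least $k$ with $F\in\textit{chase}_k(D,\Sigma)$. For $k=0$, $F$ is a ground fact $R(a_1,\dots,a_n)\in D$, so the initialization TGD for $R$ fires and, after $n$ applications of $\textit{Enc}$, produces $\textit{True}(\langle F\rangle)$ in $\textit{chase}(D,\Sigma')$. For the inductive step, $F=\textit{res}(\mathbfit{F},\sigma,h)$ where $\sigma\in\Sigma$, $\mathbfit{F}=(F_1,\dots,F_m)$ lies in $\textit{chase}_{k-1}(D,\Sigma)$, and $h(\alpha_i)\in F_i$ for each body atom $\alpha_i$ of $\sigma$. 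By the induction hypothesis, $\textit{True}(\langle F_i\rangle)\in\textit{chase}(D,\Sigma')$ for all $i$. Using invariant (i), the encodings of the atoms $h(\alpha_i)$ inside each $F_i$ exist, and the $\textit{Mrg}$ TGDs let us form the encoding of $\bigcup_i F_i\setminus\{h(\alpha_i)\}$. The simulation TGDs for $\sigma$ then introduce fresh nulls for its existential variables (through $\textit{T}_\sigma$), and a final $\textit{Mrg}$ with $\lceil \textit{head}(\sigma)\rceil$ produces exactly $\textit{True}(\langle F\rangle)$.

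For the direction ($\Leftarrow$), I would induct on the least stage of $\textit{chase}(D,\Sigma')$ at which $\textit{True}(\langle F\rangle)$ appears. The only TGDs that can produce a new $\textit{True}$-atom are (a) the initialization TGDs for relation symbols of $\mathscr{D}$, (b) the simulation TGDs for some $\sigma\in\Sigma$, and (c) the $\textit{Eq}$-closure TGD. Case (a) gives $F\in D\subseteq\textit{chase}_0(D,\Sigma)$; case (b) forces the existence of already-true encodings $\langle F_1\rangle,\dots,\langle F_m\rangle$ together with a matching substitution $h$ and an $\textit{Mrg}$-sequence, which by the induction hypothesis yields nondeterministic facts $F_i\in\textit{chase}(D,\Sigma)$ with $h(\alpha_i)\in F_i$, hence $F=\textit{res}(\mathbfit{F},\sigma,h)\in\textit{chase}(D,\Sigma)$; case (c) uses invariant (iii) and the fact that the canonical map $F\mapsto\langle F\rangle$ is well-defined on $\textit{chase}(D,\Sigma)$ only up to the equivalence on nondeterministic facts induced by $\textit{Eq}$, so the conclusion transfers.

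The main technical obstacle I expect is the third invariant, i.e., making rigorous the claim that $\textit{Mrg}$, together with $\textit{Eq}$ and the flag uniqueness of invariant (ii), correctly implements (multi-)set union of disjuncts in encoded form. This is where the same physical $F$ may correspond to many distinct encoding numbers in $\textit{chase}(D,\Sigma')$, depending on the order in which $\textit{Mrg}$ was applied; the $\textit{Eq}$-closure TGD is precisely what neutralizes this ambiguity, and carrying the resulting ``$\textit{True}$ is $\textit{Eq}$-closed'' side condition through the induction is the only nontrivial bookkeeping in the argument.
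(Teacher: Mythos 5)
The paper offers no proof of this lemma beyond the phrase ``by an induction on chase,'' so your two-directional induction on chase stages is exactly the intended argument, and you have correctly isolated the two genuine difficulties: a nondeterministic fact has many encodings depending on the order in which $\textit{Mrg}$ is applied, so the induction hypothesis must be the $\textit{Eq}$-closed statement ``$\textit{True}$ holds of \emph{every} encoding of $F$'' rather than of the canonical $\langle F\rangle$ alone. Your three invariants are immediate from the oblivious Skolem semantics of the auxiliary rules, and your case analysis of the $\textit{True}$-producing rules in $\Sigma'$ (initialization, the second simulation rule, $\textit{Eq}$-closure) is complete, since the $\textit{Normalize}$/$\textit{Match}$ rules belong to $\Sigma^\ast\setminus\Sigma'$.

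One point needs more care than your sketch gives it: the well-definedness of $\langle F\rangle$ on nulls. In $chase(D,\Sigma)$ the null created for an existential variable $v$ of $\sigma$ corresponds to the triple $(\sigma,h(\mathbfit{x}),v)$ and is therefore independent of the tuple $\mathbfit{F}$ of nondeterministic facts used, whereas in the simulation rule the variable $y$ (encoding the merged remainder $\bigcup_i F_i\setminus\{h(\alpha_i)\}$) occurs in both body and head, so the Skolem nulls introduced for $\mathbfit{z}$ depend on $y$ and hence on $\mathbfit{F}$. Consequently a single null of $chase(D,\Sigma)$ corresponds to a whole family of $\textit{Num}$-elements of $chase(D,\Sigma')$, and $\textit{Eq}$ --- which by the paper's stipulation identifies only encodings of the \emph{same} set of ground atoms, not of atoms differing by a null renaming --- does not collapse this family. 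Your $(\Rightarrow)$ direction therefore produces $\textit{True}(c)$ for \emph{some} encoding $c$ of an isomorphic copy of $F$, not necessarily for the fixed $\langle F\rangle$; to make the stated iff literally true one must either synchronize the choice of representative nulls in the definition of $\langle\cdot\rangle$ with the derivations, or weaken the lemma to hold up to an $adom(D)$-preserving renaming of nulls (which is all the downstream Proposition requires). This is arguably an imprecision inherited from the lemma statement itself, but it is the one place where the induction as you describe it does not close.
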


With this lemma, to construct the desired TGD set $\Sigma^\ast$, it remains to define some TGDs which generate the BCQs derivable from $chase(D,\Sigma)$. The following property will play an important role in implementing this task.

\begin{lem}\label{lem:gcq2bcq}
Let $\Sigma$ be a finite set of DTGDs, $D$ a database, and $q$ a BCQ of the form $\exists\mathbfit{x}\varphi(\mathbfit{x})$ where $\varphi$ is quantifier-free and $\mathbfit{x}$ is a tuple of length $k$ which lists all the free variables in $\varphi$. Then $D\cup\Sigma\vDash q$ iff there exists a finite set $T\subseteq term(chase(D,\Sigma))^k$ such that $chase(D,\Sigma)\vDash\bigvee_{\mathbfit{t}\in T}\varphi(\mathbfit{t})$.  
\end{lem}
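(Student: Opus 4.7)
The plan is to handle the two directions separately. The right-to-left direction should be almost immediate: any tuple $\mathbfit{t}\in term(chase(D,\Sigma))^k$ is a legitimate witness for the existential in $q=\exists\mathbfit{x}\varphi(\mathbfit{x})$, so each disjunct $\varphi(\mathbfit{t})$ entails $q$; hence $chase(D,\Sigma)\vDash q$, and Theorem~\ref{thm:sound_comp} upgrades this to $D\cup\Sigma\vDash q$.

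For the forward direction, my first step is to invoke Theorem~\ref{thm:sound_comp} to replace the hypothesis $D\cup\Sigma\vDash q$ by $chase(D,\Sigma)\vDash q$, so that everything subsequently works at the propositional level on the chase. I would then pass to Herbrand models over $U=term(chase(D,\Sigma))$: viewing each ground atom with terms in $U$ as a propositional variable and each nondeterministic fact in $chase(D,\Sigma)$ as a ground clause, I can reduce $chase(D,\Sigma)\vDash q$ to the statement that every Herbrand model over $U$ satisfies $\varphi(\mathbfit{t})$ for some $\mathbfit{t}\in U^k$. The core of the argument is then a compactness step: supposing for contradiction that no finite $T\subseteq U^k$ yields $chase(D,\Sigma)\vDash\bigvee_{\mathbfit{t}\in T}\varphi(\mathbfit{t})$, I obtain for each finite $T$ a Herbrand model $M_T$ of $chase(D,\Sigma)$ falsifying every $\varphi(\mathbfit{t})$ with $\mathbfit{t}\in T$; the propositional theory
\[
\Phi \;=\; chase(D,\Sigma) \,\cup\, \{\neg\varphi(\mathbfit{t}) : \mathbfit{t}\in U^k\}
\]
is then finitely satisfiable, and by propositional compactness admits a model $M$, which is a Herbrand model of the chase with no witness for $\varphi$ in $U^k$, contradicting what was derived.

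The main obstacle, and the step I would write most carefully, is justifying the Herbrand reduction, i.e., the equivalence between the first-order entailment $chase(D,\Sigma)\vDash q$ and the corresponding statement over Herbrand models on $U$. The non-trivial direction uses that $q$ is positive existential: given any first-order model $M'$ of the chase, the canonical Herbrand model $M$ defined by $R^M(\mathbfit{t})\Leftrightarrow R^{M'}(\mathbfit{t}^{M'})$ remains a model of every nondeterministic fact and agrees with $M'$ on each ground instance $\varphi(\mathbfit{t})$ with $\mathbfit{t}\in U^k$, so a Herbrand countermodel yields a first-order countermodel and vice versa. Once this reduction is in place, the compactness step is routine and the lemma follows.
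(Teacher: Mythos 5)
Your proof is correct, and it reaches the lemma by a slightly different route than the paper. The shared skeleton is Theorem~\ref{thm:sound_comp} plus a compactness argument plus a Herbrand-style grounding of the existential quantifier; the difference is where compactness is applied. The paper exploits the stratification $chase(D,\Sigma)=\bigcup_{n\ge 0}chase_n(D,\Sigma)$: since the chase is a set of ground clauses, $chase(D,\Sigma)\vDash q$ already follows from some finite subset of it, hence from $chase_n(D,\Sigma)$ for some $n$, and the existential quantifier can then be grounded over the finite active domain of that stage, yielding the explicit witness set $T=term(chase_n(D,\Sigma))^k$. You instead apply propositional compactness to $chase(D,\Sigma)\cup\{\neg\varphi(\mathbfit{t}):\mathbfit{t}\in U^k\}$ and argue by contradiction; your justification of the Herbrand reduction --- a Herbrand model over $U$ is itself a first-order countermodel with domain $U$, and conversely the canonical Herbrand collapse of an arbitrary model preserves the ground clauses of the chase and each ground instance $\varphi(\mathbfit{t})$ --- is precisely the content the paper compresses into ``it is easy to check.'' Both arguments are valid. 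The paper's version buys a concrete, canonical $T$ tied to a finite chase level, which matches how the disjunction is eventually enumerated by the TGD encoding that uses this lemma; yours buys only the existence of some finite $T$, which is all the lemma literally asserts. One small point to make explicit in a full write-up: the Herbrand domain should also carry $const(q)$; since atoms mentioning terms outside $U$ can be uniformly falsified without violating any chase clause, witnesses are still forced into $U^k$, so the reduction survives this edge case.
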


\comment{
\begin{proof}
By the soundness and completeness of chase, we have $D\cup\Sigma\vDash q$ iff $chase(D,\Sigma)\vDash q$. By compactness, the latter holds iff there exists $n\ge 0$ such that $chase_n(D,\Sigma)\vDash q$. Let $T$ denote $term(chase_n(D,\Sigma))^k$. It is easy to check that $chase_n(D,\Sigma)\vDash q$ iff $chase_n(D,\Sigma)\vDash\bigvee_{\mathbfit{t}\in S}\varphi(\mathbfit{t})$. Since $T$ is always a finite set, the lemma follows.
\end{proof}
}

To implement the above idea, we need more relation symbols, including $\textit{DNF}$ and $\textit{Normalize}$. By $\textit{DNF}(x)$ we denote that the (quantifier-free) formula encoded by $x$ is of disjunctive normal form (DNF), and by $\textit{Normalize}(x,y,z)$ we mean that $z$ encodes a DNF-formula obtained from the conjunction of (DNF-formulas encoded by) $x$ and $y$ by applying the distributive law. Such relations can be defined in TGDs by recursions in a routine way. We omit the details here.


To encode BCQs, we need to generate an infinite number of variables, which can be done by the following TGDs:
\begin{eqnarray}
&\rightarrow&\exists x\, \textit{Var}(x)\wedge\textit{Num}(x)\\
 \textit{Var}(x)&\rightarrow&\exists y\,\textit{Next}(x,y)\wedge\textit{Var}(y)\wedge\textit{Num}(y)
\end{eqnarray}
where $\textit{Var}(x)$ asserts that $x$ is a variable, and $\textit{Next}(x,y)$ denotes that $y$ is the variable immediately after $x$. The generated variables will be used as numbers. Furthermore, we use $\textit{BCQ}(x)$ to denote that $x$ encodes a BCQ. Note that all variables in a BCQ are existential, so we can omit the quantifiers, and simply regard it as a finite conjunction of atoms.

In addition, we introduce a fresh binary relation symbol $\textit{Match}$. By $\textit{Match}(x,y)$ we mean that $y$ encodes a ground DNF-formula in which each disjunct $\psi$ is an instantiation of the BCQ $q$ encoded by $x$, that is, $\psi$ can be obtained from $q$ by substituting some ground term for each existential variable.  

With the above relations, we are now able to generate all the numbers encoding BCQs derivable from $chase(D,\Sigma)$.
\begin{eqnarray}
\textit{True}(x)\wedge\textit{True}(y)\wedge\textit{Normalize}(x,y,z)&\!\!\!\!\!\rightarrow\!\!\!\!\!&\textit{True}(z)\\
\!\!\!\!\!\!\!\!\!\!\textit{BCQ}(x)\wedge\textit{DNF}(y)\wedge\textit{True}(y)\wedge\textit{Match}(x,y)&\!\!\!\!\!\rightarrow\!\!\!\!\!&\textit{True}(x)
\end{eqnarray}

To make sure that the BCQs encoded by this class of numbers are derivable from $chase(D,\Sigma^\ast)$, we employ Zhang {\em et al.}'s technique of generating universal model (see Subsection 5.4 and Proposition 11 in~\cite{ZhangZY16}). Given a class $\mathbb{K}$ of databases over the same schema and a set $C$ of constants, let $\bigoplus_C\mathbb{K}$ denote the {\em $C$-disjoint union} of $\mathbb{K}$, that is, the instance $\bigcup\{D^\ast:D\in\mathbb{K}\}$ where, for every $D\in\mathbb{K}$, $D^\ast$ is an isomorphic copy of $D$ such that, for each pair of distinct databases $D_1$ and $D_2$ in $\mathbb{K}$, only constants from $C$ will be shared by $D^\ast_1$ and $D^\ast_2$. 

Given an OMQA$[\mathsf{CQ}]$-ontology $O$ and a database $D$ over a proper schema, the {\em universal model} of $O$ w.r.t. $D$, denoted $U_O(D)$, is defined as follows:\vspace{-.1cm}
 $$U_O(D)={\bigoplus}_{adom(D)}\{[q]:(D,q)\in O\}.\vspace{-.1cm}$$

\begin{lem}[\citeauthor{ZhangZY16} \citeyear{ZhangZY16}]\label{lem:query2um}
Let $O$ be an OMQA$[\mathsf{CQ}]$-ontology $O$ over a schema pair $(\mathscr{D},\mathscr{Q})$, $D$ a $\mathscr{D}$-database and $q$ a $\mathscr{Q}$-BCQ. Then $(D,q)\in O$ iff $U_O(D)\models q$.
\end{lem}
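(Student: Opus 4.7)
The plan is to prove the two directions separately. The forward direction will follow almost directly from the definition of $U_O(D)$, while the backward direction will rely essentially on the closure axioms of OMQA-ontologies, in particular closure under query conjunctions and closure under query implications.

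For the direction $(D,q)\in O \Rightarrow U_O(D)\vDash q$, I would argue that by construction $U_O(D)$ contains (up to an isomorphic copy fixing $\textit{adom}(D)$) the database $[q]$ as one of the summands of the $\textit{adom}(D)$-disjoint union. The canonical assignment of each existential variable of $q$ to its own null in $[q]$ witnesses $q$ in $[q]$, and composing with the isomorphism into $U_O(D)$ would yield a satisfying assignment of $q$ in $U_O(D)$ that fixes $\textit{const}(q)\subseteq\textit{adom}(D)$.

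For the converse, assume a substitution $h$ of the variables of $q$ into terms of $U_O(D)$ with $h(q)\subseteq U_O(D)$ and $h$ fixing $\textit{const}(q)$. I would first pass from $q$ to a sharper query $q'$ by replacing every variable $y$ whose image $h(y)$ is a constant of $\textit{adom}(D)$ by that constant, so that $q'\vDash q$ and every remaining variable of $q'$ is sent by $h$ to a null. Since the copies $[p]^*$ inside $U_O(D)$ share only constants of $\textit{adom}(D)$, each null of $U_O(D)$ lies in a unique copy; hence two null-variables of $q'$ whose $h$-images belong to the same copy form an equivalence class, and every atom of $q'$ that contains at least one null-variable must map entirely into a single copy $[p_j]^*$ with $(D,p_j)\in O$. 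Atoms of $q'$ containing no variables can be attributed to any copy whose database contains them.

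This will partition $q'$ into sub-queries $q'_1,\dots,q'_k$ with variable-disjoint existential parts such that $h$ restricts to a homomorphism from $q'_j$ into $[p_j]^*$, yielding $p_j\vDash q'_j$. Closure of $O$ under query implications will then give $(D,q'_j)\in O$ for each $j$; closure under query conjunctions will give $(D,q'_1\wedge\cdots\wedge q'_k)\in O$; and since $q'_1\wedge\cdots\wedge q'_k\vDash q'\vDash q$, a final application of closure under implications will deliver $(D,q)\in O$. The main technical obstacle will be managing the interaction between shared constants of $\textit{adom}(D)$ and the disjointness of the null-parts across copies; the preparatory step of passing from $q$ to $q'$ is meant precisely to isolate the shared-constant case so that the remaining atoms can be attributed to a single copy without ambiguity.
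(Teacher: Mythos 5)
Your proposal is correct and follows essentially the same route as the paper's proof: the forward direction is read off from the definition of $U_O(D)$, and the converse first grounds the variables mapped to constants of $adom(D)$, then splits the resulting query into pieces that each land in a single disjoint copy $[p_j]^*$ (the paper phrases this as a decomposition into inseparable/prime conjuncts, you phrase it via the uniqueness of the copy containing each null), and finally applies closure under query implications and query conjunctions. No gaps.
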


\comment{

\begin{proof}
The direction of ``only-if" is trivial; so we only show the converse. Let ${q}$ be a $\mathscr{Q}$-BCQ and $D$ a $\mathscr{D}$-database such that $U_O(D)\models{q}$. By definition, there exists a homomorphism $h$ from $[{q}]$ to $U_O(D)$ such that $h(c)=c$ for all constants $c$ from $C$. Let ${q}'$ be a BCQ obtained from ${q}$ by substituting $h(x)$ for all variables $x$ such that $h(x)\in C$, and let $h'$ be a restriction of $h$ to all variables and constants occurring in ${q}'$. Then it is clear that ${q}'\models{q}$ and that $h'$ is a homomorphism from $[{q}']$ to $U_O(D)$. W.l.o.g., suppose ${q}'$ is of the form ${q}'_1\wedge\cdots\wedge {q}'_k$, 
where each ${q}'_i$ is prime. 
Take $i\in\{1,\dots,k\}$. Obviously, $h'$ is a homomorphism from $[{q}'_i]$ to $U_O(D)$. Since ${q}'_i$ is prime, by the definition of $U_O(D)$ we know that there is at least one BCQ ${p}_i$ such that $(D,{p}_i)\in O$ and that $h'({q}'_i)$ is contained in the disjoint copy of ${p}_i$ in $U_O(D)$. From the latter we conclude that ${p}_i\models{q}'_i$. Since $O$ is closed under query implications, we infer that $(D,{q}'_i)\in O$. Furthermore, since $O$ is closed under query conjunctions, we then obtain that $(D,{q}')\in O$. As we have proved previously, it holds that ${q}'\models{q}$. Again by the query implication closure of $O$, we conclude that $(D,{q})\in O$, which completes the proof. 
\end{proof}

}

With the above lemma, it remains to show how to generate the universal model $U_O(D)$
.
Let $a$ be a number that encodes a BCQ ${q}$ such that $\textit{True}(a)$ holds in the intended instance. For all $\mathscr{Q}$-atoms $\alpha$, we first test whether $\alpha$ appears in ${q}$. If the answer is yes we then copy $\alpha$ to the universal model. Since $U_O(D)$ is defined by a disjoint union of $[{q}]$, a renaming of variables in ${q}$ would be necessary, which can be achieved by using existential variable in the rule head to generate nulls. We introduce a relation symbol $\textit{Ren}$, and by $\textit{Ren}(y,z,x)$ we mean that $y$ will be replaced with $z$ in the copy of BCQ (encoded by) $x$. Below are some TGDs to implement it:
\begin{eqnarray}
\textit{BCQ}(x)\wedge\textit{Var}(y)\rightarrow\exists z\,\textit{Ren}(y,z,x)\\
\textit{BCQ}(x)\wedge\textit{GT}(y)\rightarrow\textit{Ren}(y,y,x)
\end{eqnarray}
where the second TGD means that all the constants appearing in the BCQ will not be changed in the copy.

To generate the universal model $U_O(D)$, we still need to introduce a relation symbol $\textit{Has}Q$ for each relation symbol $Q\in\mathscr{Q}$. By $\textit{Has}Q(\mathbfit{y},x)$ we mean that $Q(\mathbfit{y})$ is an atom appearing in the BCQ encoded by $x$. By traversing the whole BCQ, it is easy to see that $\textit{Has}Q$ can be defined by TGDs. To copy all the atoms involving $Q$ and appearing in the BCQ to the universal model, we employ the following TGD:
\begin{eqnarray}
\textit{BCQ}(x)\hspace{-0.03cm}\wedge\hspace{-0.03cm}\textit{True}(x)\hspace{-0.03cm}\wedge\hspace{-0.03cm}\textit{Has}Q(\mathbfit{y},x)\hspace{-0.03cm}\wedge\hspace{-0.03cm}\textit{Ren}(\mathbfit{y},\!\mathbfit{z},\!x)\rightarrow{Q}(\mathbfit{z})
\end{eqnarray}
where $\textit{Ren}(\mathbfit{y},\mathbfit{z},x)$ denotes formula $\bigwedge_{1\le j\le k}\textit{Ren}(y_j,z_j,x)$ if $\mathbfit{y}=y_1\cdots y_k$, $\mathbfit{z}=z_1\cdots z_k$, and $k$ is the arity of ${Q}$.

Let $\Sigma^\ast$ be the set of TGDs defined in this subsection. Then the following property holds, which yields Theorem~\ref{thm:dtgd2tgd_cqa}.

\begin{prop}
For every pair of $\mathscr{D}$-database $D$ and $\mathscr{Q}$-BCQ $q$, we have $chase(D,\Sigma)\vDash q$ iff $chase(D,\Sigma^\ast)\vDash q$.
\end{prop}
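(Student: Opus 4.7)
The plan is to combine three previously stated results: the encoding lemma establishing $F\in chase(D,\Sigma)$ iff $\textit{True}(\langle F\rangle)\in chase(D,\Sigma')$ (which continues to hold for $\Sigma^\ast\supseteq\Sigma'$, since the added rules never derive $\textit{True}$ on encodings of nondeterministic facts), Lemma~\ref{lem:gcq2bcq} which reduces BCQ entailment from $chase(D,\Sigma)$ to entailment of a finite ground DNF of its instantiations, and Lemma~\ref{lem:query2um} which reduces BCQ entailment by an OMQA$[\mathsf{CQ}]$-ontology to a homomorphism test against its universal model. Throughout, let $O$ denote the DTGD$[\mathsf{CQ}]$-ontology defined by $\Sigma$.

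For $\Rightarrow$, assume $chase(D,\Sigma)\vDash q$ with $q=\exists\mathbfit{x}\,\varphi(\mathbfit{x})$. By Lemma~\ref{lem:gcq2bcq}, pick a finite $T$ such that $chase(D,\Sigma)\vDash\bigvee_{\mathbfit{t}\in T}\varphi(\mathbfit{t})$; by compactness, select finitely many $F_1,\dots,F_m\in chase(D,\Sigma)$ such that any simultaneous choice of one disjunct per $F_i$ already contains $\varphi(\mathbfit{t})$ for some $\mathbfit{t}\in T$. The encoding lemma supplies $\textit{True}(\langle F_i\rangle)$ in $chase(D,\Sigma^\ast)$ for each $i$; iterated applications of the $\textit{Normalize}$ rule, augmented by the $\textit{Eq}$-propagation rule for duplicate elimination, then yield $\textit{True}(\langle\psi\rangle)$ for a ground DNF $\psi$ whose disjuncts are all of the form $\varphi(\mathbfit{t})$, $\mathbfit{t}\in T$. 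The $\textit{Match}$ rule produces $\textit{True}(\langle q\rangle)$, and the final $\textit{Has}Q$/$\textit{Ren}$-based TGDs copy every $\mathscr{Q}$-atom of $q$ into $chase(D,\Sigma^\ast)$ with fresh nulls replacing the variables, giving a homomorphic image of $q$ and hence $chase(D,\Sigma^\ast)\vDash q$.

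For $\Leftarrow$, assume $chase(D,\Sigma^\ast)\vDash q$ via a homomorphism $h$. Since $q$ consists of $\mathscr{Q}$-atoms and these are produced in $chase(D,\Sigma^\ast)$ only by the final generating rule, $h(q)$ decomposes into copies of BCQs $p_1,\dots,p_r$, each encoded by some $x_j$ with $\textit{True}(x_j)\wedge\textit{BCQ}(x_j)$, whose variables have been renamed via disjoint fresh nulls while data constants are preserved. The key sub-claim is: whenever $\textit{True}(x)\wedge\textit{BCQ}(x)$ holds for $x$ encoding a BCQ $p$, we have $chase(D,\Sigma)\vDash p$. This is shown by induction on the chase step at which $\textit{True}(x)$ first appears, tracing through $\textit{Match}$ back to a DNF $\psi$ with $\textit{True}(\langle\psi\rangle)$ and through iterated $\textit{Normalize}$ applications back to a set of $\textit{True}$-marked nondeterministic facts whose conjunction entails $\psi$ and hence $p$. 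Combining the claim over $p_1,\dots,p_r$ and using the disjoint-union structure of the $\textit{Ren}$-generated nulls yields a homomorphism from $q$ to $U_O(D)$ fixing the data constants of $D$, whence Lemma~\ref{lem:query2um} gives $(D,q)\in O$, i.e.\ $chase(D,\Sigma)\vDash q$.

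The main obstacle will be the sub-claim above: the induction must faithfully track the correctness of each auxiliary gadget ($\textit{Enc}$, $\textit{Mrg}$, $\textit{Normalize}$, $\textit{Match}$, $\textit{Ren}$) so that every syntactic encoding carried along by $\textit{True}$ has a semantic counterpart entailed by the nondeterministic chase of $\Sigma$; in particular, verifying that $\textit{Normalize}$ can only ever combine $\textit{True}$-marked formulas whose underlying semantic claims are jointly derivable from $chase(D,\Sigma)$ is the central technical step. The forward direction is comparatively mechanical, essentially a faithful simulation mirroring the construction.
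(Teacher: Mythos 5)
Your proposal follows essentially the same route the paper intends: the proposition is stated there without an explicit proof, as the culmination of the whole construction, and your argument assembles exactly the ingredients the paper supplies for that purpose --- the encoding lemma for $\textit{True}(\langle F\rangle)$, Lemma~\ref{lem:gcq2bcq} for reducing to a finite ground disjunction, the $\textit{Normalize}$/$\textit{Match}$ gadgets, and the universal-model copying combined with Lemma~\ref{lem:query2um} for the backward direction. The one detail worth tightening is in the forward direction: normalizing $F_1\wedge\cdots\wedge F_m$ yields disjuncts that merely \emph{contain} an instantiation $\varphi(\mathbfit{t})$ rather than equal one, so $\textit{Match}$ must be read (or defined) up to sub-conjunction for its rule to fire --- a point the paper itself leaves inside its omitted ``routine'' definitions of the auxiliary relations.
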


\subsubsection{Characterization for UCQ-answering}

It is worth noting that the translation proposed in the last subsection does not work for UCQ-answering. In this subsection, we examine the expressive power of TGDs for this case. 

We first define a property. An OMQA$[\mathsf{UCQ}]$-ontology $O$ is said to {\em admit query constructivity} if $(D,p\vee q)\in O$ implies either $(D,p)\in O$ or $(D,q)\in O$.
The following theorem tells us that the above property exactly captures the definability of a DTGD$[\mathsf{UCQ}]$-ontology by TGDs.

\begin{thm}\label{thm:char_tgds}
A DTGD$[\mathsf{UCQ}]$-ontology is defined by a finite set of TGDs iff it admits query constructivity.
\end{thm}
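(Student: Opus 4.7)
My plan is to prove the two directions separately, deriving ``only if'' directly from the universal-model property of the oblivious TGD chase, and reducing ``if'' to the CQ-case already handled by Theorem~\ref{thm:dtgd2tgd_cqa}.

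For the ``only if'' direction I would assume $O$ is defined by a finite TGD set $\Sigma$. Since $\Sigma$ is disjunction-free, the nondeterministic chase degenerates into the usual oblivious Skolem chase, so $chase(D,\Sigma)$ is a single ordinary instance in which every nondeterministic fact is a singleton. Theorem~\ref{thm:sound_comp} then gives $D\cup\Sigma\vDash p\vee q$ iff $chase(D,\Sigma)\vDash p\vee q$, and since a UCQ holds in a single instance iff one of its disjuncts does, I conclude $chase(D,\Sigma)\vDash p$ or $chase(D,\Sigma)\vDash q$, whence $(D,p)\in O$ or $(D,q)\in O$.

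For the ``if'' direction, let $O$ be a DTGD$[\mathsf{UCQ}]$-ontology admitting query constructivity, defined over $(\mathscr{D},\mathscr{Q})$ by a finite DTGD set $\Sigma$. I would first let $O^{c}$ denote the CQ-restriction of $O$ and observe that $O^{c}$ is a DTGD$[\mathsf{CQ}]$-ontology: the four closure axioms are inherited from those of $O$ (noting that a conjunction of BCQs is a BCQ), and by construction $O^{c}=[\![\Sigma]\!]^{\mathsf{CQ}}_{\mathscr{D},\mathscr{Q}}$. Theorem~\ref{thm:dtgd2tgd_cqa} then yields a finite TGD set $\Sigma^{\ast}$ with $[\![\Sigma^{\ast}]\!]^{\mathsf{CQ}}_{\mathscr{D},\mathscr{Q}} = O^{c}$. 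I would then show that the same $\Sigma^{\ast}$ satisfies $[\![\Sigma^{\ast}]\!]^{\mathsf{UCQ}}_{\mathscr{D},\mathscr{Q}} = O$. Given $(D, p_1\vee\cdots\vee p_n)\in O$, iterated query constructivity produces some $i$ with $(D, p_i)\in O^{c}$, so $D\cup\Sigma^{\ast}\vDash p_i$ and hence $D\cup\Sigma^{\ast}\vDash p_1\vee\cdots\vee p_n$. Conversely, if $D\cup\Sigma^{\ast}\vDash p_1\vee\cdots\vee p_n$, the ``only if'' direction applied to the TGD set $\Sigma^{\ast}$ extracts some $i$ with $D\cup\Sigma^{\ast}\vDash p_i$, yielding $(D,p_i)\in O^{c}\subseteq O$, and closure of $O$ under query implications delivers $(D, p_1\vee\cdots\vee p_n)\in O$.

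The real technical work is entirely absorbed into Theorem~\ref{thm:dtgd2tgd_cqa}; once that CQ-translation is in hand, the only remaining subtlety is to recognize that query constructivity is precisely the property needed to lift the CQ-equivalence to a UCQ-equivalence. I do not expect any further obstacle beyond correctly checking the closure axioms for $O^{c}$, since TGD-defined ontologies automatically admit query constructivity (by the universal-model argument above), and the hypothesis on $O$ lets the single-universal-model behavior of $\Sigma^{\ast}$ capture the full UCQ-behavior of $O$ without any additional chase construction.
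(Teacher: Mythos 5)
Your proposal is correct and follows essentially the same route as the paper: the ``only if'' direction is the identical deterministic-chase argument, and your ``if'' direction applies Theorem~\ref{thm:dtgd2tgd_cqa} to the CQ-restriction and then lifts the CQ-equivalence to UCQs via query constructivity plus closure under query implications, which is exactly the content of the paper's Lemma~\ref{lem:cq2ucq} (you simply inline that lemma rather than stating it separately). No gaps.
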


To prove this theorem, we need some notation and property. Given an OMQA$[\mathsf{UCQ}]$-ontology $O$, let $O|_{\mathsf{CQ}}$ denote 
$\{(D,q)\in O:q\in\mathsf{CQ}\}$ which is an OMQA$[\mathsf{CQ}]$-ontology.

\begin{lem}\label{lem:cq2ucq}
Let $O$ and $O'$ be OMQA$[\mathsf{UCQ}]$-ontologies that admit query constructivity. If $O|_{\mathsf{CQ}}\!=\!O'|_{\mathsf{CQ}}$ then $O\!=\!O'$.
\end{lem}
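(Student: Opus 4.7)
The plan is to reduce membership of UCQ-pairs in each ontology to membership of CQ-pairs, using query constructivity to peel off disjuncts and closure under query implications to glue things back together. By symmetry it suffices to show $O\subseteq O'$; so fix an arbitrary $(D,q)\in O$ with $q$ a Boolean $\mathscr{Q}$-UCQ, and write $q=q_1\vee\cdots\vee q_n$ where each $q_i$ is a BCQ.

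First I would argue by induction on $n$ that there exists an index $i\in\{1,\dots,n\}$ with $(D,q_i)\in O$. The base case $n=1$ is trivial. For the inductive step, regard $q$ as $q_1\vee(q_2\vee\cdots\vee q_n)$; by query constructivity of $O$, either $(D,q_1)\in O$ (and we are done), or $(D,q_2\vee\cdots\vee q_n)\in O$, in which case the inductive hypothesis supplies the desired index. Note that each $q_i$ is itself a Boolean $\mathscr{Q}$-UCQ with $\textit{const}(q_i)\subseteq\textit{const}(q)\subseteq\textit{adom}(D)$, so every intermediate pair is a legitimate element of a quasi-OMQA$[\mathsf{UCQ}]$-ontology, and constructivity is applicable.

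Next I would transfer this witness to $O'$. Since $q_i$ is a BCQ, the pair $(D,q_i)$ lies in $O|_{\mathsf{CQ}}$, hence by hypothesis in $O'|_{\mathsf{CQ}}\subseteq O'$. Now $q_i\vDash q$ (as $q_i$ is a disjunct of $q$) and $q\in\mathsf{UCQ}$, so closure of $O'$ under query implications yields $(D,q)\in O'$. This proves $O\subseteq O'$, and the reverse inclusion follows by the symmetric argument, establishing $O=O'$.

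There is no real obstacle: the only subtle point is to check that peeling off disjuncts keeps us inside the OMQA$[\mathsf{UCQ}]$ framework (the constant-inclusion side condition and query-class membership), which is immediate from the fact that $\mathsf{UCQ}$ is closed under taking disjuncts and that $\textit{const}$ is monotone with respect to this operation.
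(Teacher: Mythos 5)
Your proposal is correct and follows essentially the same route as the paper's own proof: use query constructivity to extract a single disjunct $q_i$ with $(D,q_i)\in O$, pass through $O|_{\mathsf{CQ}}=O'|_{\mathsf{CQ}}$, and recover $(D,q)\in O'$ via closure under query implications. Your explicit induction on the number of disjuncts just spells out a step the paper leaves implicit (its constructivity property is stated only for binary disjunctions), so there is no substantive difference.
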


\comment{
\begin{proof}
Suppose $O|_{\mathsf{CQ}}=O'|_{\mathsf{CQ}}$. We need to prove $O=O'$. Let $(D,q)\in O$. Due to the symmetry, it suffices to show that $(D,q)\in O'$. Suppose $q$ is of the form $q_1\vee\cdots\vee q_n$ where each $q_i$ is a BCQ. By the query constructivity of $O$, there must exist $i\in\{1,\dots,n\}$ such that $(O,q_i)\in O$, which implies $(O,q_i)\in O|_{\mathsf{CQ}}$ immediately. By assumption, we then obtain $(O,q_i)\in O'|_{\mathsf{CQ}}$. Consequently, we have $(O,q_i)\in O'$. On the other hand, it is clearly true that $q_i\vDash q$. Since $O'$ is closed under query implications, we thus conclude that $(O,q)\in O'$, which completes the proof.
\end{proof}
}

Now we are in the position to prove Theorem~\ref{thm:char_tgds}.

\begin{proof}[Proof of Theorem~\ref{thm:char_tgds}]
The direction of ``if" follows from Lemma~\ref{lem:cq2ucq} and Theorem~\ref{thm:dtgd2tgd_cqa}. For the converse, we assume $O$ is defined by a finite set $\Sigma$ of TGDs. Let $(D,p\vee q)\in O$, where $p$ and $q$ are Boolean UCQs. By the completeness of the chase, it holds that $chase(D,\Sigma)\models p\vee q$. Note that $chase(D,\Sigma)$ here is a deterministic instance. We thus have either $chase(D,\Sigma)\models p$ or $chase(D,\Sigma)\models q$. By the soundness of the chase, either $(D,p)\in O$ or $(D,q)\in O$ must be true, which yields the desired direction .
\end{proof}

\begin{example}
Let $\mathscr{D}$ be the schema $\{P\}$, and $\mathscr{Q}$ be the schema $\{Q,R\}$, where $P,Q$ and $R$ are unary relation symbols. Let $\Sigma$ be a set consisting of a single DTGD defined as follows:
\begin{equation}
P(x)\rightarrow Q(x)\vee R(x)
\end{equation}
Let $D=\{P(a)\}$. Clearly, $D\cup\Sigma\vDash Q(a)\vee R(a)$, but neither $D\cup\Sigma\vDash Q(a)$ nor $D\cup\Sigma\vDash R(a)$. So the ontology defined by $\Sigma$ over $(\mathscr{D},\mathscr{Q})$ does not admit query constructivity.
\end{example}

By the above example and Theorem~\ref{thm:char_tgds}, we thus have:

\begin{cor}
There is a DTGD$[\mathsf{UCQ}]$-ontology that is not defined by any finite set of TGDs.
\end{cor}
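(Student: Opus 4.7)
The plan is straightforward: combine the Example with the ``only if'' direction of Theorem~\ref{thm:char_tgds}. Concretely, I would take $O$ to be the DTGD$[\mathsf{UCQ}]$-ontology $[\![\Sigma]\!]_{\mathscr{D},\mathscr{Q}}^{\mathsf{UCQ}}$, where $\mathscr{D}=\{P\}$, $\mathscr{Q}=\{Q,R\}$ and $\Sigma=\{P(x)\rightarrow Q(x)\vee R(x)\}$ are as in the Example. Since $\Sigma$ is a finite set of DTGDs, $O$ is by definition a DTGD$[\mathsf{UCQ}]$-ontology, so it remains only to argue that $O$ cannot be defined by any finite set of TGDs.

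For this, I would observe that the Example already exhibits a failure of query constructivity for $O$: with $D=\{P(a)\}$, we have $(D,Q(a)\vee R(a))\in O$ but neither $(D,Q(a))\in O$ nor $(D,R(a))\in O$. The first holds because every model of $D\cup\Sigma$ must contain $Q(a)$ or $R(a)$; the latter two fail because $D\cup\{R(a)\}$ and $D\cup\{Q(a)\}$ are models of $D\cup\Sigma$ that respectively falsify $Q(a)$ and $R(a)$. Applying the ``only if'' direction of Theorem~\ref{thm:char_tgds}, every DTGD$[\mathsf{UCQ}]$-ontology definable by a finite set of TGDs must admit query constructivity; since $O$ does not, it is not TGD-definable, which establishes the corollary. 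There is no real obstacle here once the Example is in hand; the only point requiring care is to verify the two counter-models witnessing $D\cup\Sigma\nvDash Q(a)$ and $D\cup\Sigma\nvDash R(a)$, which is immediate.
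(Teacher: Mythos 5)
Your proposal is correct and matches the paper's own argument: the paper derives this corollary exactly by combining the Example (which exhibits the failure of query constructivity for the ontology defined by $P(x)\rightarrow Q(x)\vee R(x)$) with the ``only if'' direction of Theorem~\ref{thm:char_tgds}. Your explicit verification of the two counter-models witnessing $D\cup\Sigma\nvDash Q(a)$ and $D\cup\Sigma\nvDash R(a)$ is a welcome, if minor, addition.
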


The next corollary immediately follows from Theorem~\ref{thm:char_tgds} and Lemma~\ref{lem:cq2ucq}. With it, to examine the expressive power of TGDs, we need only to consider CQ-answering. In the next section, we will thus focus on CQ-answering.
\begin{cor}
Let $\mathscr{D}$ and $\mathscr{Q}$ be a pair of schemas. Let $\Sigma$ and $\Sigma'$ be finite sets of TGDs. Then 
\begin{equation*}
[\![\Sigma]\!]^{\mathsf{UCQ}}_{\mathscr{D},\mathscr{Q}}=[\![\Sigma']\!]^{\mathsf{UCQ}}_{\mathscr{D},\mathscr{Q}}\text{ iff }[\![\Sigma]\!]^{\mathsf{CQ}}_{\mathscr{D},\mathscr{Q}}=[\![\Sigma']\!]^{\mathsf{CQ}}_{\mathscr{D},\mathscr{Q}}.
\end{equation*}
\end{cor}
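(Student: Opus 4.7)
The plan is to reduce the corollary to the two results just established, namely Theorem~\ref{thm:char_tgds} and Lemma~\ref{lem:cq2ucq}. Both directions will be easy once one observes the compatibility between the UCQ- and CQ-semantics of a TGD set.

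For the forward direction, I would simply note that, by the definition of $[\![\cdot]\!]^{\mathcal{Q}}_{\mathscr{D},\mathscr{Q}}$, a pair $(D,q)$ with $q\in\mathsf{CQ}$ lies in $[\![\Sigma]\!]^{\mathsf{CQ}}_{\mathscr{D},\mathscr{Q}}$ precisely when it lies in $[\![\Sigma]\!]^{\mathsf{UCQ}}_{\mathscr{D},\mathscr{Q}}$, since $\mathsf{CQ}\subseteq\mathsf{UCQ}$ and the entailment $D\cup\Sigma\vDash q$ is independent of which query class we sit in. Hence $[\![\Sigma]\!]^{\mathsf{CQ}}_{\mathscr{D},\mathscr{Q}} = [\![\Sigma]\!]^{\mathsf{UCQ}}_{\mathscr{D},\mathscr{Q}}\big|_{\mathsf{CQ}}$, and analogously for $\Sigma'$; thus equality of the UCQ-ontologies restricts to equality of the CQ-ontologies immediately.

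For the converse, assume $[\![\Sigma]\!]^{\mathsf{CQ}}_{\mathscr{D},\mathscr{Q}}=[\![\Sigma']\!]^{\mathsf{CQ}}_{\mathscr{D},\mathscr{Q}}$. Since $\Sigma$ and $\Sigma'$ are finite sets of TGDs, the ontologies $O:=[\![\Sigma]\!]^{\mathsf{UCQ}}_{\mathscr{D},\mathscr{Q}}$ and $O':=[\![\Sigma']\!]^{\mathsf{UCQ}}_{\mathscr{D},\mathscr{Q}}$ are DTGD$[\mathsf{UCQ}]$-ontologies definable by TGDs, so Theorem~\ref{thm:char_tgds} gives that both $O$ and $O'$ admit query constructivity. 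By the previous paragraph, the hypothesis says exactly that $O|_{\mathsf{CQ}}=O'|_{\mathsf{CQ}}$. Invoking Lemma~\ref{lem:cq2ucq}, we conclude $O=O'$, which is the desired UCQ-equality.

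There is no real obstacle; the only point that needs a moment's care is checking that $[\![\Sigma]\!]^{\mathsf{UCQ}}_{\mathscr{D},\mathscr{Q}}$ really is an OMQA$[\mathsf{UCQ}]$-ontology (so that Lemma~\ref{lem:cq2ucq} applies), but this is immediate from the standard closure properties of entailment under query conjunctions, query implications, injective database homomorphisms, and constant renamings, all of which are built into the semantics of $D\cup\Sigma\vDash q$. Everything else is just chaining Theorem~\ref{thm:char_tgds} with Lemma~\ref{lem:cq2ucq}.
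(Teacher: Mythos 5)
Your proposal is correct and matches the paper's intended argument exactly: the paper states that this corollary ``immediately follows from Theorem~\ref{thm:char_tgds} and Lemma~\ref{lem:cq2ucq},'' and your chaining of the trivial restriction observation with query constructivity (from Theorem~\ref{thm:char_tgds}) and Lemma~\ref{lem:cq2ucq} is precisely that derivation. Nothing further is needed.
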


\section{Linear TGDs}

In this section, we focus on the program expressive power of linear TGDs. 
Before establishing the characterization, we need to recall some notions and make a few assumptions.

\subsubsection{Tree Automata} 
First recall some notions of tree automata. For more details, please refer to, e.g.,~\cite{tata2007}. 

Let $\mathcal{L}$ be a nonempty set of labels. An {\em $\mathcal{L}$-labeled tree} $T$ is a quadruple $(V,E,r,L)$ where $E\subseteq V\times V$, $(V,E)$ defines a tree with the {\em root} $r\in V$ in a standard way, and $L:V\rightarrow\mathcal{L}$ is called the {\em label function}. $T$ is called {\em finite} if $V$ is finite.

Every {\em ranked input alphabet} is a finite and nonempty set of {\em input symbols}, each is a pair $\omega=(\ell(\omega),ar(\omega))$, where $\ell(\omega)$ is the {\em letter} of $\omega$, and $ar(\omega)$ a natural number called the {\em arity} of $\omega$. Given a ranked input alphabet $\Omega$, an {\em $\Omega$-ranked tree} is a finite labeled tree $\mathfrak{T}=(V,E,r,L)$ over $\Omega$ such that every node $v\in V$ has exactly $ar(L(v))$ children in $\mathfrak{T}$. 

For convenience, we often use expressions built over $\Omega$ to denote ranked trees. A nullary input symbol $\pi\in\Omega$ denotes a ranked tree consisting of a single node with the label $\pi$. Let $\omega\in\Omega$ be a $k$-ary symbol, $e_1,\dots,e_k$ be expressions denoting $\Omega$-ranked trees $\mathfrak{T}_1,\dots,\mathfrak{T}_k$. We then use the expression $\omega(e_1,\dots,e_k)$ to denote the $\Omega$-ranked tree $\mathfrak{T}$, in which the root $r$ is labeled as $\omega$, such that for every $i=1,\dots,k$, $\mathfrak{T}_i$ is a subtree of $\mathfrak{T}$ and the $i$-th child of $r$ is the root of $\mathfrak{T}_i$.

Moreover, a {\em nondeterministic (bottom-up) tree automaton} (NTA) $\mathcal{A}$ is defined as a quadruple $(S, F, \Omega, \Theta)$ where
\begin{enumerate} 
\item $S$ is a finite set of {\em states};
\item $F\subseteq S$ is a set of {\em final state}s;
\item $\Omega$ is a ranked input alphabet; 
\item $\Theta\subseteq \Omega\times S^\ast\times S$ is a transition relation which consists of
{\em transition rules} of the form $
(\omega, (s_1,\dots,s_k), s_0)
$, 
where 
$\omega\in\Omega$ is a $k$-ary symbol for some $k$ and $s_0,\dots,s_k\in S$.
\end{enumerate}
%

Let $e$ and $e'$ be expressions built over $\Omega$ and $S$, where states in $S$ are regarded as unary symbols. We say $e'$ is a {\em legal transition} from $e$ if there is an $\Omega$-ranked tree $t$ and a transition rule $(\omega, \mathbfit{s}, s')\in\Theta$ such that $e\ne e'$ and $e'$ is obtained from $e$ by substituting $s'(\omega(\mathbfit{t}))$ for exactly one occurrence of $\omega(s_1(t_1),\dots,s_k(t_k))$, where both $\mathbfit{s}$ and $\mathbfit{t}$ are $k$-tuples for some $k$, and $s_i$ (resp., $t_i$) is the $i$-th component of $\mathbfit{s}$ (resp., $\mathbfit{t}$). 
Every {\em run} of $\mathcal{A}$ on an $\Omega$-ranked tree $t$ is a finite sequence of expressions $e_0,\dots,e_n$ such that $e_0=t$, $e_i$ is a legal transition from $e_{i-1}$ for $0<i\le n$, and there is no legal transition from $e_n$. 

An NTA $\mathcal{A}=(S,F,\Omega,\Theta)$ is said to {\em accept} an $\Omega$-ranked tree $\mathfrak{T}$ if there is a run $e_0,\dots,e_n$ of $\mathcal{A}$ on $\mathfrak{T}$ and a final state $s\in F$ such that $e_n=s(\mathfrak{T})$. An $\Omega$-ranked tree language $\mathbb{L}$, i.e., a set of $\Omega$-ranked trees, is said to be {\em recognized} by $\mathcal{A}$ if every $\Omega$-ranked tree is accepted by $\mathcal{A}$ if, and only if, it is in $\mathbb{L}$. It is well-known that a ranked tree language is recognized by some NTA iff it is regular, see, e.g.,~\cite{tata2007}.

An NTA $\mathcal{A}$ is called {\em oblivious} if for every pair of transition rules $(\omega,\mathbfit{s},s_0)$ and $(\omega',\mathbfit{s}',s'_0)$ of $\mathcal{A}$, if $\ell(\omega)=\ell(\omega')$ then we have $s_0=s'_0$. In other words, the transition of $\mathcal{A}$ only depends on the letter of the current input symbol. 
Given a ranked tree $\mathfrak{T}=(V,E,r,L)$, the {\em accompanying tree} of $\mathfrak{T}$, denoted $\ell(\mathfrak{T})$, is defined as the labeled tree $(V,E,r,\ell(L))$ where $\ell(L)(v)=\ell(L(v))$ for all $v\in V$. Given a ranked tree language $\mathbb{L}$, the {\em accompanying tree language} of $\mathbb{L}$ is the class of $\ell(\mathfrak{T})$ for all $\mathfrak{T}\in \mathbb{L}$. Interestingly, a ranked tree language is recognized by an oblivious NTA iff it is regular and its accompanying tree language is closed under prefixes.

\subsubsection{Automata That Accept BCQs} 

Let $q$ be a BCQ. Let $\mathcal{L}_q$ denote the set of order pairs $\langle X,\Phi\rangle$ where $X$ is a finite set of variables or constants, and $\Phi\subseteq[q]$.  A {\em tree representation} of $q$ is a finite $\mathcal{L}_q$-labeled tree $\mathfrak{R}=(V,E,r,L)$ such that   
\begin{enumerate}
\item $[q]=\bigcup_{v\in V}L^2(v)$, and $term(L^2(v))\subseteq L^1(v)$ for every $v\in V$, where, for $i\in\{1,2\}$,  by $L^i(v)$ we denote the $i$-th component of $L(v)$;
\item the subgraph of $\mathfrak{R}$ induced by the set $\{v\!\in\! V:t\!\in\! L^1(v)\}$ is connected for every $t\in \Delta\cup\Delta_{\mathrm{v}}$;
\item for all $v\in V$, all constants in $L^1(v)$ also occur in $L^1(r)$.
\end{enumerate} 
%
%
The {\em width} of $\mathfrak{R}$ is the maximum cardinality of $L^1(v)$ for all $v\in V$.
In particular, a tree representation $\mathfrak{R}=(V,E,r,L)$ of $q$ is called {\em linear} if, for each $v\in V$, we have $|L^2(v)|\le 1$. 

Note that a tree representation of $q$ is not necessary a tree decomposition, but based on any tree decomposition of $q$, one can easily construct a tree representation.

Next we show how to encode BCQs as inputs of an NTA. Let $\mathscr{Q}$ be a schema and $q$ a $\mathscr{Q}$-BCQ. Let $\mathfrak{R}=(V,E,r,L)$ be a tree representation of $q$. A rough idea of encoding $q$ is by directly regarding $\mathfrak{R}$ as the accompanied tree of a ranked tree. However, this is infeasible because
the ranked input alphabet is required to be finite, while the BCQs that we have to consider may involve an unbounded number of terms.

A natural idea to resolve the mentioned issue is by reusing variables. For example, suppose $v_1,v_2$ and $v_3$ are nodes in $\mathfrak{R}$ such that $v_2$ is a child of $v_1$, and $v_3$ a child of $v_2$. Suppose 
\begin{align*}
L(v_1)&=(\{x_1,x_2,x_3\},\{R(x_1,x_2,x_3)\}), \\
L(v_2)&=(\{x_2,x_3,x_4\},\{S(x_3,x_4)\}), \\
L(v_3)&=(\{x_3,x_4,x_5\},\{T(x_5,x_4,x_5))\}. 
\end{align*}
By the definition of tree representation, $x_1$ is not allowed to appear in $v_3$ and its descendants. We thus can reuse $x_1$ in $v_3$, and let $L(v_3)=(\{x_3,x_4,x_1\}, \{T(x_1,x_4,x_1)\})$. We assume all the variables occurring in $v_3$ but not in $v_2$ are fresh variables. Clearly, by reusing variables, only $2k$ variables are needed to encode a tree representation of the width $k$. 

Let $\mathcal{V}$ be a set that consists of $2k$ variables. Let $At$ denote the set of $\mathscr{Q}$-atoms involving terms only from $const(q)$ and $\mathcal{V}$. Let $\mathcal{L}$ be a label set consisting of all the pairs $\omega=(X,\Phi)$ such that $X\subseteq const(q)\cup\mathcal{V}$ and $\Phi$ is either $\emptyset$ or $\{\alpha\}$ for some $\alpha\in At$. Clearly, $\mathcal{L}$ is finite. By the technique of reusing variables, $\mathfrak{R}$ can be represented as an $\mathcal{L}$-labeled tree. Suppose $\mathfrak{R}'=(V,E,r,L')$ is the mentioned tree. Let $\mathfrak{T}$ denote the ranked tree $(V,E,r,L^\ast)$ where $L^\ast(v)=(L'(v),n)$ if $v\in V$ has exactly $n$ children. Clearly, from $\mathfrak{T}$ one can easily obtain $q$. We call  $\mathfrak{T}$ a {\em ranked tree representation} of $q$.

We say an NTA $\mathcal{A}$ {\em accepts} $q$ if  $\mathcal{A}$ accepts $\mathfrak{T}$ for some ranked tree representation $\mathfrak{T}$ of $q$; and $\mathcal{A}$ {\em recognizes} a class $\mathcal{C}$ of $\mathscr{Q}$-BCQs if for all $\mathscr{Q}$-BCQs $q$, $\mathcal{A}$ accepts $q$ iff $q\in\mathcal{C}$.

\subsubsection{Characterization}


We first define some notions and notations. A BCQ $q$ is called {\em nontrivial} if $[q]\ne\emptyset$, and $q$ is called a {\em proper subquery} of another BCQ $p$ if $[q]\subsetneq [p]$. A BCQ $q$ is called {\em inseparable} if there are no nontrivial proper subqueries $q_1$ and $q_2$ of $q$ such that $q$ is equivalent to $q_1\wedge q_2$. 
Let $\mathcal{C}$ be a class of BCQs. A BCQ $q\in\mathcal{C}$ is said to be {\em most specific} w.r.t. $\mathcal{C}$ if the following holds:
\begin{itemize}
\item if there is a partial function $s\!:\!\Delta_{\mathrm{v}}\!\rightarrow\!\Delta$ that maps at least one variable occurring in $q$ to a constant, then $s(q)\not\in\mathcal{C}$.
\end{itemize}
In addition, a BCQ $q\in\mathcal{C}$ is said to be {\em prime} w.r.t. $\mathcal{C}$ if it is inseparable and most specific w.r.t. $\mathcal{C}$.



Given an OMQA$[\mathsf{CQ}]$-ontology $O$ and a database $D$, let $O(D)$ denote the class of BCQs $q$ such that $(D,q)\in O$.


%
%

%
%
%
Now we have a characterizations for linear TGDs.

\begin{thm}\label{thm:char_ltgd}
Let $\mathscr{D}$ and $\mathscr{Q}$ be schemas. An OMQA$[\mathsf{CQ}]$-ontology $O$ over $(\mathscr{D},\mathscr{Q})$ is defined by a finite set of linear TGDs iff it admits both of the following properties:\vspace{-.1cm}
\begin{enumerate}
\item (Data Constructivity) If $D$ and $D'$ are $\mathscr{D}$-databases and $q\in O(D\cup D')$ is prime w.r.t. $O(D\cup D')$, then we have either $q\in O(D)$ or $q\in O(D')$.
\item (NTA-recognizability of Queries) For every $\mathscr{D}$-database $D$ with a single fact, there exists an oblivious NTA which recognizes $O(D)$.\vspace{-.15cm}
\end{enumerate}
\end{thm}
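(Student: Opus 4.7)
The plan is to prove both directions by exploiting the tree-shaped chase produced by linear TGDs.

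For the ``only if'' direction, assume $O=[\![\Sigma]\!]$ for a finite set $\Sigma$ of linear TGDs. To verify data constructivity, observe that because every body in $\Sigma$ is a single atom, the nondeterministic chase satisfies $chase(D\cup D',\Sigma)=chase(D,\Sigma)\cup chase(D',\Sigma)$, where the nulls generated from $D$ and from $D'$ are kept disjoint and only constants in $adom(D)\cap adom(D')$ are shared. A prime query $q\in O(D\cup D')$ is witnessed by a homomorphism from $[q]$ into this chase; inseparability of $q$ forces any such homomorphism to land on a single side, for otherwise the atoms embedded on one side and on the other would yield a nontrivial decomposition of $q$ into proper subqueries, while most-specificness rules out the remaining possibility that a variable of $q$ has to be collapsed to a shared constant in order to realize the embedding across both sides. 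For NTA-recognizability, fix $D=\{\alpha\}$. The chase $chase(\{\alpha\},\Sigma)$ is a tree rooted at $\alpha$ of bounded branching (at most $|\Sigma|$), and the ``type'' of each subtree, i.e., which subqueries it can host, depends only on the relational atom at its root up to renaming of nulls. An oblivious NTA can then be designed whose states track these finitely many types and whose transitions simulate, bottom-up along a ranked tree representation of a candidate BCQ, the composition of subquery embeddings into the chase tree.

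For the converse, assume $O$ satisfies the two properties. The first step is to reduce, via data constructivity together with the OMQA-ontology closure axioms (injective database homomorphisms, query conjunction and implication), the characterization of $O$ to the family $\{O(D):D\text{ a single-fact }\mathscr{D}\text{-database}\}$, of which up to constant renaming there are only finitely many representatives. For each such $D$, the hypothesis supplies an oblivious NTA $\mathcal{A}_D$ recognizing $O(D)$. I would then build, uniformly from the $\mathcal{A}_D$'s, a finite set $\Sigma^\ast$ of linear TGDs as follows: introduce a fresh predicate $R_s$ for each state $s$ appearing in some $\mathcal{A}_D$, with arity equal to the bounded width of the ranked tree representations read by the NTAs; for each transition rule $(\omega,(s_1,\dots,s_k),s_0)$ of some $\mathcal{A}_D$, introduce a linear TGD whose body is an atom over $R_{s_0}$ and whose head introduces the relational atom indicated by the letter $\ell(\omega)$ together with $k$ children annotated by $R_{s_1},\dots,R_{s_k}$ via fresh existential variables. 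Seed the construction with rules that, from each single-fact database, activate the initial configuration of $\mathcal{A}_D$. Obliviousness of the $\mathcal{A}_D$'s is essential here: the state assigned to a node depends only on the letter at that node, which is exactly the kind of information that a linear TGD, having access to only one atom at a time, can propagate.

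The main obstacle lies in the correctness of this ``if'' construction. Soundness, that every BCQ entailed by $chase(D,\Sigma^\ast)$ already belongs to $O(D)$, reduces to converting a homomorphism from the BCQ into the constructed chase into an accepting run of some $\mathcal{A}_D$; this proceeds by induction on the chase rank of the matched atoms and exploits the fact that the chase of a linear-TGD program from a single fact is itself tree-shaped, so the run of the NTA can be read off the same tree. Completeness, that every $q\in O(D)$ is entailed by $chase(D,\Sigma^\ast)$, proceeds in the opposite direction by taking an accepting run of $\mathcal{A}_D$ on a ranked tree representation of $q$ and matching it step-by-step to a subtree of the chase, where the auxiliary state-predicates allow the required embedding to be read off directly from the transition rules. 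A subtle point is to keep the width of the tree representations, and hence the arities of the auxiliary predicates, uniformly bounded across all $\mathcal{A}_D$'s; this is possible because the width is bounded by a function of the maximum arity of symbols in $\mathscr{Q}$, which depends only on the schemas and not on the individual queries under consideration.
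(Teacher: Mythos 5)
Your proposal is correct and follows essentially the same route as the paper: the ``only-if'' direction via the decomposition $chase(D\cup D',\Sigma)\leftrightarrow chase(D,\Sigma)\cup chase(D',\Sigma)$ for linear TGDs (with most-specificness excluding collapses onto shared constants and inseparability forcing one side) and an oblivious NTA whose states are atom types with transitions given by one-step entailment; and the ``if'' direction by reducing to single-fact databases through data constructivity plus the closure axioms, then compiling each given oblivious NTA into linear TGDs with one auxiliary predicate per state (the paper indexes by state--label pairs to fix arities, a detail your single-index $R_s$ glosses over but which is easily repaired). No substantive gap.
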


\begin{proof}[Sketch of Proof]
Due to space limit, we only give a proof for the direction of ``only-if". Suppose $O$ is defined by a finite set $\Sigma$ of linear TGDs. We need to show that $O$ admits Properties 1 and 2. Property 1 can be proven by a careful induction on the chase. Below we prove that $O$ admits Property 2. 


Let $D$ be a $\mathscr{D}$-database with a single fact. Now let us construct an oblivious NTA that recognizes $O(D)$. Let $\mathscr{S}$ denote the schema of $\Sigma$. Let $k$ be the maximum arity of relation symbols in $\mathscr{S}$. Let $\mathcal{V}$ be the set that consists of pairwise distinct variables $x_1,\dots,x_{2k}$. Let $At$ be the set of all atoms built upon relation symbols from $\mathscr{S}$ and terms from $adom(D)\cup\mathcal{V}$. We introduce $\lceil\log_2 (|At|+2)\rceil$ fresh variables, and let $\mathcal{V}_0$ denote the set that consists of these variables. Let $\iota$ be an injective function from $At$ to $2^{\mathcal{V}_0}\setminus\{\emptyset,\mathcal{V}_0\}$. Thus, every atom in $At$ can be encoded by a set of variables in $\mathcal{V}_0$. 

With the above assumptions, we are now able to define the NTA. Let $S=At\cup\{\diamond\}$ be the set of states, and let $F=\{\diamond\}$ be the set of final state where $\diamond$ is used as the unique final state. Furthermore, let $\mathcal{L}$ be a label set which consists of \vspace{-.15cm}
\begin{enumerate}
\item $(term(\alpha),\{\alpha\})$ for each $\alpha\in At$ which is a $\mathscr{Q}$-atom;
\item $(term(\alpha)\cup\iota(\alpha),\emptyset)$ for each $\alpha\in At$;
\item $(adom(D)\cup\mathcal{V}_0,\emptyset)$. \vspace{-.15cm}
\end{enumerate}
For convenience, let $\lambda:\mathcal{L}\rightarrow S$ be a function that maps each label $\ell\in\mathcal{L}$ of the form 1 or 2 to the atom (state) $\alpha$, and maps the label of the form 3 to the final state $\diamond$. Clearly, $\lambda$ is well-defined. Let $\Omega$ be a ranked input alphabet which consists of ordered pairs $(\ell,m)$ for all $\ell\in\mathcal{L}$ and all $0\le m\le |At|$, where each $(\ell, m)$ is used as an $m$-ary input symbol.

%

Furthermore, let $\Theta$ be a set consisting of\vspace{-.15cm}
\begin{enumerate}
\item $((\ell,1), \alpha, \diamond)$ if $\lambda(\ell)=\diamond$ and $D=\{\alpha\}$;
\item $((\ell,m), (\alpha_1,\dots,\alpha_m), \alpha)$ if $\lambda(\ell)=\alpha$, $0\le m\le |At|$, $\alpha,\alpha_1,\dots,\alpha_m\in At$ and for $1\le i\le m$, $\{\alpha\}\cup\Sigma\vDash\exists\mathbfit{x}_i\alpha_i$, where $\mathbfit{x}_i$ denotes a tuple consisting of all the variables that occur in $\alpha_i$ but not in $\alpha$.\vspace{-.15cm}
\end{enumerate} 

Let $\mathcal{A}=(S,F,\Omega,\Theta)$. Since $\lambda$ is a well-defined function, we know that $\mathcal{A}$ is an oblivious NTA. Next we show that $\mathcal{A}$ recognizes the class $O(D)$. By the definition of $\mathcal{A}$, it is easy to see that every $\mathscr{Q}$-BCQ accepted by $\mathcal{A}$ belongs to $O(D)$. 

Conversely, let $q\in O(D)$. We need to prove that $\mathcal{A}$ accepts $q$. 
Let $\mathfrak{D}$ be a labeled tree constructed as follows:\vspace{-.1cm}
\begin{enumerate}
\item Create the root $r$ with the label $L(r)=(adom(D),D)$;
\item 
For each node $v$ already in $\mathfrak{D}$, if there is an atom $\alpha\in At$ such that $L^2(v)\cup\Sigma\vDash\exists\mathbfit{x}\,\alpha$, 
then create a child $v'$ for $v$ and let $L(v')\!=\!(term(\alpha^\ast),\{\alpha^\ast\})$, where $\alpha^\ast$ is obtained from $\alpha$ by substituting fresh variables for variables in $\mathbfit{x}$.\vspace{-.1cm}
\end{enumerate}
Let $atom(\mathfrak{D})$ be the set of all atoms appearing in $\mathfrak{D}$. By definition we know that $chase(D,\Sigma)$ is $adom(D)$-isomorphic to a subset of $atom(\mathfrak{D})$. 
%
%
Let $C$ denote $const(q)$. As $q\in O(D)$, according to the construction of $\mathfrak{D}$, it is not difficult to prove that $[q]$ is $C$-isomorphic to a subset of $atom(\mathfrak{D})$.

Let $Q$ be a subset of $atom(\mathfrak{D})$ that is $C$-isomorphic to $[q]$. Let $\mathfrak{D}_q$ be a minimal connected subgraph of $\mathfrak{D}$ that covers $Q$ and the root $r$. Suppose $\mathfrak{D}_q=(V,E,r,L)$. Next, let $\mathfrak{R}_q$ be the labeled tree $(V,E,r,L_0)$ where $L_0$ is defined as follows: \vspace{-.6cm}
\begin{enumerate}
\item for the root $r$, let $L_0(r)=(adom(D)\cup\mathcal{V}_0,\emptyset)$;
\item for every $v\in V$ with the label $L(v)=(term(\alpha),\{\alpha\})$, let $L_0(v)=(term(\alpha)\cup\iota(\alpha),\emptyset)$ if $\alpha\not\in Q$, and $L_0(v)=L(v)$ otherwise. \vspace{-.2cm}
\end{enumerate}
Clearly, $\mathfrak{R}_q$ is a finite and linear tree representation of $q$. By the technique mentioned in the last subsection, such a tree can be naturally encoded by an $\Omega$-ranked tree, which can be easily showed to be accepted by $\mathcal{A}$ by a routine check.
\comment{

``If": Suppose $O$ admits Properties 1 and 2. Let $\mathscr{D}$ and $\mathscr{Q}$ be the data and query schemas, respectively. Let $D$ be a $\mathscr{D}$-database. According to Property 2, there is an oblivious NTA $\mathcal{A}=(S,F,\Omega,\Theta)$ that recognizes $O(D)$. Our task is to prove that there is a finite set of linear TGDs that defines $O$. 

Before doing this, let us consider a simple case:

\medskip
{\noindent\em Claim.} If $|D|=1$ then there is a finite set $\Sigma_{\mathcal{A}}$ of linear TGDs such that, for every $\mathscr{Q}$-BCQ $q$, $D\cup\Sigma_{\mathcal{A}}\vDash q$ iff $\mathcal{A}$ accepts $q$.
\medskip

We first assume this claim holds. It is clear that, up to isomorphism, there are only a finite number of $\mathscr{D}$-databases with a single fact. Let $D_1,\dots,D_n$ be a complete list of such databases. According to Claim, for each $1\le i\le n$, there exists a finite set $\Sigma_i$ of linear TGDs such that, for every $\mathcal{Q}$-BCQ $q$, we have $D\cup\Sigma_i\vDash q$ iff $q\in O(D_i)$.

For each relation symbol $R$ occurring in $\Sigma_i$ for some $i$, we introduce a fresh relation symbol $R_i$ of the same arity as $R$. Let $\Sigma_i'$ be the TGD set obtained from $\Sigma_i$ by substituting $R_i$ for all occurrences of $R$. Now we know that schemas of $\Sigma_i'$, $1\le i\le n$, are pairwise disjoint. 

Let $\Gamma_1$ be a set consisting of TGD
\begin{equation}
R(\mathbfit{x})\rightarrow R_i(\mathbfit{x})
\end{equation} 
for all $1\le i\le n$ and $R\in\mathscr{D}$, where $\mathbfit{x}$ is a variable tuple of a proper length. Let $\Gamma_2$ be a set consisting of TGD
\begin{equation}
R_i(\mathbfit{x})\rightarrow R(\mathbfit{x})
\end{equation} 
for all $1\le i\le n$ and $R\in\mathscr{Q}$, where $\mathbfit{x}$ is a variable tuple of a proper length. Furthermore, let
$$\Sigma=\Gamma_1\cup\Gamma_2\cup\bigcup_{i=1}^n\Sigma_i.$$ 
Now our task is to prove that $\Sigma$ defines $O$. Let $D$ be a $\mathscr{D}$-database with a single fact and $q$ be a $\mathscr{Q}$-BCQ. According to Lemma~\ref{lem:single2general}, it suffices to show that $D\cup\Sigma\vDash q$ iff $q\in O(D)$, which can be done by a careful check.

%

To complete the proof, it thus remains to prove the claim. 
Let $D$ be a $\mathscr{D}$-database consisting of a single fact. According to Property 2, there is an oblivious NTA $\mathcal{A}=(S,F,\Omega,\Theta)$ that recognizes $O(D)$. Based on the automaton, we are then able to construct the desired set of linear TGDs.

Before the construction, we first define some notations. For each constant $c$, we introduce  $v_c$ as a fresh variable that have never been used in $\Omega$. Given an atom $\alpha$, let $\widehat{\alpha}$ denote the atom obtained from $\alpha$ by substituting $v_c$ for each constant $c$. Given a tuple $\mathbfit{c}$ of constants $c_1,\dots,c_k$, let $\widehat{\mathbfit{c}}$ denote the tuple of variables $v_{c_1},\dots,v_{c_k}$. Let $\mathcal{L}$ be the set of labels used in $\Omega$. By assumption, we know that each label in $\mathcal{L}$ is an ordered pair $\omega=(X,\Theta)$ where $X$ is a set of variables. Let $|\omega|$ denote the number of variables in $X$. For each pair $(s,\omega)\in S\times\mathcal{L}$, we introduce a fresh relation symbol $T_{s,\omega}$ of arity $|\omega|$.

Now we let $\Sigma_\mathcal{A}$ be a set of linear TGDs defined as follows: 
\begin{enumerate}
\item Suppose $\alpha$ is the only fact in $D$. If $((\omega,\ell), \mathbfit{s},s')\in\Theta$ and $s'\in F$, then let $\Sigma_\mathcal{A}$ contain the TGD
\begin{equation}
\widehat{\alpha}\rightarrow\exists \mathbfit{x}\, T_{s',\omega}(\widehat{\mathbfit{c}},\mathbfit{x})
\end{equation} 
where $\mathbfit{x}$ (resp., $\mathbfit{c}$) is the tuple of variables (resp., constants) appearing in $\omega$. 

\item If $\{((\omega_1,\ell_1),\mathbfit{s}_1,s_1'),((\omega_2,\ell_2),\mathbfit{s}_2,s'_2)\}\subseteq\Theta$ and $s_1'\in\mathbfit{s}_2$, then let $\Sigma_\mathcal{A}$ contain the TGD
\begin{equation}
T_{s_2',\omega_2}(\widehat{\mathbfit{c}}_2,\mathbfit{x},\mathbfit{y})\rightarrow\exists\mathbfit{z}\,T_{s_1',\omega_1}(\widehat{\mathbfit{c}}_1,\mathbfit{x},\mathbfit{z})
\end{equation}
where, for $i=1$ or $2$, $\mathbfit{c}_i$ denotes the tuple of constants occurring in $\omega_i$; $\mathbfit{x}$ is the tuple of variables that occur in both $\omega_1$ and $\omega_2$; $\mathbfit{y}$ (resp., $\mathbfit{z}$) is the tuple of variables occurring in $\omega_2$ but not in $\omega_1$ (resp., in $\omega_1$ but not in $\omega_2$).

\item If $(X,\{\alpha\})\in\mathcal{L}$ and $s\in S$, let $\Sigma_\mathcal{A}$ contain the TGD
\begin{equation}
T_{s,\omega}(\widehat{\mathbfit{c}},\mathbfit{x})\rightarrow\widehat{\alpha}
\end{equation}
where $\mathbfit{c}$ (resp., $\mathbfit{x}$) is the tuple of constants (resp., variables) in $X$, and $\alpha$ is an atom.
\end{enumerate}
Note that all variables used above are assumed to be ordered in a fixed way, and variables in a tuple will follow this order.  


Clearly, $\Sigma_{\mathcal{A}}$ is a finite set of linear TGDs. By a careful induction, it is not difficult to prove that $D\cup\Sigma_{\mathcal{A}}\vDash q$ iff $\mathcal{A}$ accepts $q$ for every $\mathscr{Q}$-BCQ $q$, which yields the claim.}
\end{proof}

\comment{

\begin{lem}\label{lem:single2general}
Let $O$ and $O'$ be OMQA$[\mathsf{CQ}]$-ontologies over the same schema pair, both admit the database constructivity. Then $O=O'$ iff $O(D)=O'(D)$ for every database $D$ (over a proper schema) which consists of a single fact.
\end{lem}

\begin{proof}
The direction of ``only-if" is trivial. Only prove the converse. Suppose $(\mathscr{D},\mathscr{Q})$ is the schema pair of $O$ and $O'$. We first assume $O(D)=O'(D)$ for every $\mathscr{D}$-database $D$ which consists of a single fact. Let $D_0$ be a $\mathscr{D}$-database and $q$ a $\mathscr{Q}$-BCQ such that $q\in O(D_0)$. To yield the desired direction, due to the symmetry, it is sufficient to prove that $q\in O'(D_0)$. Clearly, there exist a sequence of prime queries $q_1,\dots, q_k$ of $O(D_0)$ such that $q$ is equivalent to $q_1\wedge\cdots\wedge q_k$. By the database constructivity of $O$, for each $i=1,\dots,k$ there exists a database $D_i\subseteq D_0$ with a single fact such that $q_i\in O(D_i)$, which implies $q_i\in O'(D_i)$ according to the assumption $O(D)=O'(D)$. Since $O'$ is closed under injective database homomorphisms, we obtain $q_i\in O'(D_0)$ for $1\le i\le k$. Futhermore, as $O'$ is also closed under query conjunctions, we thus have $q\in O'(D_0)$ as desired. 
\end{proof}


\begin{lem}\label{lem:data_constructivity}
Let $\Sigma$ be a set of linear TGDs, $D$ and $D'$ be databases, $\mathcal{C}$ be the class of BCQs $p$ over a given schema $\mathscr{Q}$ such that $D\cup D'\cup\Sigma\vDash p$. Let $q\in\mathcal{C}$ be a BCQ that is prime w.r.t. $\mathcal{C}$. Then we have either $D\cup\Sigma\vDash q$ or $D'\cup\Sigma\vDash q$. 
\end{lem}

\begin{proof}
Let $C$ denote $adom(D)\cup adom(D')$. To simplify the proof, w.l.o.g., we assume that 
\begin{equation}\label{eqn:disjointness}
term(chase(D,\Sigma))\cap term(chase(D',\Sigma))\subseteq C.
\end{equation}
Note that, if this is not true, by properly renaming nulls one can find an isomorphic copy of $chase(D',\Sigma)$ to make it true. 

To prove the desired lemma, we need some notation and a property. Given instances $I$ and $J$ over the same schema and a set $X$ of constants, we write $I\leftrightarrow_X J$ if both $I\rightarrow_X J$ and $J\rightarrow_X I$ hold. Next let us prove the desired property:

\medskip
{\noindent\em Claim.} 
 $chase(D\cup D'\!,\Sigma)\!\leftrightarrow_C\! chase(D,\Sigma)\cup chase(D'\!,\Sigma)$.\!
 
 \begin{proof}
 It suffices to show that, for all $i\ge 0$, we have
 \begin{equation}\label{eqn:dc_statement}
 chase_i(D\cup D',\Sigma)\leftrightarrow_C chase_i(D,\Sigma)\cup chase_i(D',\Sigma).
 \end{equation}
 
 It is trivial for the case where $i=0$. For the case where $i>0$, let us assume as inductive hypothesis that 
 \begin{equation}\label{eqn:dc_indhypothesis}
  \begin{aligned}
 chase_{i-1}(&D\cup D',\Sigma)\leftrightarrow_C \\
 &chase_{i-1}(D,\Sigma)\cup chase_{i-1}(D',\Sigma).
 \end{aligned}
\end{equation}
We need to prove Property~(\ref{eqn:dc_statement}). Only prove the hard direction, that is, 
 \begin{equation}
 \!\,chase_{i}(D\!\cup\! D',\Sigma) \!\rightarrow_C \! chase_{i}(D,\Sigma)\!\cup\! chase_{i}(D',\Sigma).\!\!\!\!
\end{equation}
Let $\tau$ be a $C$-homomorphism from $chase_{i-1}(D\cup D',\Sigma)$ to $chase_{i-1}(D,\Sigma)\cup chase_{i-1}(D',\Sigma)$.
Let $\sigma\in\Sigma$ and $h$ be a substitution such that $\sigma$ is applicable to $chase_{i-1}(D\cup D',\Sigma)$ via $h$. Let $h'$ be a substitution extending $h$ and defined according to the chase procedure. Let $\alpha$ be the only atom in the body of $\sigma$. Then we know that $h(\alpha)\in chase_{i-1}(D\cup D',\Sigma)$. Clearly, either $\tau(h(\alpha))\in chase_{i-1}(D,\Sigma)$ or $\tau(h(\alpha))\in chase_{i-1}(D',\Sigma)$ must hold. Due to the symmetry, we only consider the former, that is, $\tau(h(\alpha))\in chase_{i-1}(D,\Sigma)$. Let $g$ denote the substitution $\tau\circ h$. Then $\sigma$ must be applicable to $chase_{i-1}(D,\Sigma)$ via $g$. Let $g'$ be a substitution extending $g$ and defined by the chase procedure on $D$ and $\Sigma$. Let $\tau'$ be a function extending $\tau$ by mapping $h'(v)$ to $g'(v)$ for each existential variable $v$ in $\sigma$. Let $\tau^\ast$ be the function defined by repeating such a procedure for all TGDs $\sigma\in\Sigma$. It is not difficult to see that $\tau^\ast$ is a $C$-homomorphism from $chase_{i}(D\cup D',\Sigma)$ to $chase_{i}(D,\Sigma)\cup chase_{i}(D',\Sigma)$, which completes the proof for the desired direction.
 \end{proof}
 
Now we are in the position to prove the desired lemma. Since $q\in\mathcal{C}$, we know $D\cup D'\cup\Sigma\vDash q$. By the soundness and completeness of the chase, we have $chase(D\cup D',\Sigma)\models q$, and it is sufficient to show that either $chase(D,\Sigma)\models q$ or $chase(D',\Sigma)\models q$ holds. By the above claim, we know $chase(D,\Sigma)\cup chase(D',\Sigma)\models q$, which means that there exists a $C$-homomorphism $h$ from $q$ to $chase(D,\Sigma)\cup chase(D',\Sigma)$. Since $q$ is inseparable, for every (existential) variable $v$ in $q$, $h(v)$ must not be a constant. Otherwise, let $q'$ be the BCQ obtained from $q$ by substituting $h(v)$ for $v$. It is obvious that $chase(D,\Sigma)\cup chase(D',\Sigma)\models q'$. By Claim, we have $D\cup D'\cup\Sigma\vDash q'$, which  contradicts with the primality of $q$. Thus, $h$ maps each variable in $q$ to a null. By Assumption~(\ref{eqn:disjointness}) and the inseparability of $q$, we conclude that either $h(q)\subseteq chase(D,\Sigma)$ or $h(q)\subseteq chase(D',\Sigma)$ should be true, which then yields the desired conclusion.
%
\end{proof}
}

%
%

%
%

\section{Conclusion and Related Work}

We have established a number of novel characterizations for the program expressive power of DTGDs, TGDs as well as linear TGDs. These results make significant contributions towards a complete picture for the (absolute) program expressive power of existential rule languages. As a byproduct, we have proposed a new chase algorithm called nondeterministic chase for DTGDs, and proved that it is sound and complete for UCQ-answering. Moreover, we have observed that queries derivable from linear TGDs are recognizable by a natural class of tree automata, and this may shed light on optimizing ontology by automata techniques.

Besides the data and program expressive power, there has been some earlier research motivated to characterize other kinds of expressive power of existential rule languages. For example, \citeauthor{CateK2010}~\shortcite{CateK2010} characterized the source-to-target TGDs (a class of acyclic TGDs) and its subclasses under the semantics of schema mapping; by regarding ontology languages as logical languages, \cite{MakowskyV1986,ZhangZJ2020,ConsoleKP2021} established a number of model-theoretic characterizations for existential rule languages, including DEDs, DTGDs, TGDs, equality-generating dependencies, full TGDs, guarded TGDs as well as linear TGDs. 


\section*{Acknowledgements}

We would like to thank anonymous referees for their helpful comments and suggestions. This work was supported by the National Key R\&D Program of China (2020AAA0108504, 2021YFB0300104) and the National Natural Science Foundation of China (61806102, 61972455). 

\bibliography{aaai22}

\begin{thebibliography}{25}
\providecommand{\natexlab}[1]{#1}

\bibitem[{Abiteboul, Hull, and Vianu(1995)}]{AbiteboulHV1995}
Abiteboul, S.; Hull, R.; and Vianu, V. 1995.
\newblock \emph{Foundations of Databases}.
\newblock Addison-Wesley.

\bibitem[{Arenas, Gottlob, and Pieris(2014)}]{ArenasGP2014}
Arenas, M.; Gottlob, G.; and Pieris, A. 2014.
\newblock Expressive languages for querying the semantic web.
\newblock In Hull, R.; and Grohe, M., eds., \emph{Proceedings of {PODS}-2014},
  14--26.

\bibitem[{Baget et~al.(2011)Baget, Lecl{\`{e}}re, Mugnier, and
  Salvat}]{BagetLMS11}
Baget, J.; Lecl{\`{e}}re, M.; Mugnier, M.; and Salvat, E. 2011.
\newblock On rules with existential variables: Walking the decidability line.
\newblock \emph{Artif. Intell.}, 175(9-10): 1620--1654.

\bibitem[{Beeri and Vardi(1981)}]{BV81}
Beeri, C.; and Vardi, M.~Y. 1981.
\newblock The Implication Problem for Data Dependencies.
\newblock In \emph{Proceedings of {ICALP}-1981}, 73--85.

\bibitem[{Bellomarini et~al.(2017)Bellomarini, Gottlob, Pieris, and
  Sallinger}]{BGPS2017}
Bellomarini, L.; Gottlob, G.; Pieris, A.; and Sallinger, E. 2017.
\newblock Swift Logic for Big Data and Knowledge Graphs.
\newblock In Sierra, C., ed., \emph{Proceedings of {IJCAI}-2017}, 2--10.

\bibitem[{Bienvenu et~al.(2014)Bienvenu, ten Cate, Lutz, and
  Wolter}]{BienvenuCLW14}
Bienvenu, M.; ten Cate, B.; Lutz, C.; and Wolter, F. 2014.
\newblock Ontology-Based Data Access: {A} Study through Disjunctive Datalog,
  CSP, and {MMSNP}.
\newblock \emph{{ACM} Trans. Database Syst.}, 39(4): 33:1--33:44.

\bibitem[{Cal{\`{\i}}, Gottlob, and Lukasiewicz(2012)}]{CaliGL2012}
Cal{\`{\i}}, A.; Gottlob, G.; and Lukasiewicz, T. 2012.
\newblock A general Datalog-based framework for tractable query answering over
  ontologies.
\newblock \emph{J. Web Sem.}, 14: 57--83.

\bibitem[{Cal{\`{\i}} et~al.(2010)Cal{\`{\i}}, Gottlob, Lukasiewicz, Marnette,
  and Pieris}]{CaliGLMP2010}
Cal{\`{\i}}, A.; Gottlob, G.; Lukasiewicz, T.; Marnette, B.; and Pieris, A.
  2010.
\newblock Datalog+/-: {A} Family of Logical Knowledge Representation and Query
  Languages for New Applications.
\newblock In \emph{Proceedings of {LICS}-2010}, 228--242.

\bibitem[{Cal{\`{\i}}, Gottlob, and Pieris(2012)}]{CaliGP12}
Cal{\`{\i}}, A.; Gottlob, G.; and Pieris, A. 2012.
\newblock Towards more expressive ontology languages: The query answering
  problem.
\newblock \emph{Artif. Intell.}, 193: 87--128.

\bibitem[{Comon et~al.(2007)Comon, Dauchet, Gilleron, L\"oding, Jacquemard,
  Lugiez, Tison, and Tommasi}]{tata2007}
Comon, H.; Dauchet, M.; Gilleron, R.; L\"oding, C.; Jacquemard, F.; Lugiez, D.;
  Tison, S.; and Tommasi, M. 2007.
\newblock \emph{Tree Automata Techniques and Applications}.
\newblock Online Book.

\bibitem[{Console, Kolaitis, and Pieris(2021)}]{ConsoleKP2021}
Console, M.; Kolaitis, P.~G.; and Pieris, A. 2021.
\newblock Model-theoretic Characterizations of Rule-based Ontologies.
\newblock In \emph{Proceedings of {PODS}-2021}, 416--428.

\bibitem[{Fagin et~al.(2005)Fagin, Kolaitis, Miller, and Popa}]{FaginKMP2005}
Fagin, R.; Kolaitis, P.; Miller, R.~J.; and Popa, L. 2005.
\newblock Data exchange: Semantics and query answering.
\newblock \emph{Theor. Comput. Sci.}, 336(1): 89--124.

\bibitem[{Gottlob, Rudolph, and Simkus(2014)}]{GRS2014}
Gottlob, G.; Rudolph, S.; and Simkus, M. 2014.
\newblock Expressiveness of guarded existential rule languages.
\newblock In \emph{Proceedings of {PODS}-2014}, 27--38.

\bibitem[{Kr{\"{o}}tzsch and Rudolph(2011)}]{KrotzschR11}
Kr{\"{o}}tzsch, M.; and Rudolph, S. 2011.
\newblock Extending Decidable Existential Rules by Joining Acyclicity and
  Guardedness.
\newblock In Walsh, T., ed., \emph{Proceedings of {IJCAI}-2011}, 963--968.

\bibitem[{Lenzerini(2002)}]{Lenzerini2002}
Lenzerini, M. 2002.
\newblock Data Integration: {A} Theoretical Perspective.
\newblock In \emph{Proceedings of PODS-2002}, 233--246.

\bibitem[{Leone et~al.(2012)Leone, Manna, Terracina, and Veltri}]{LeoneMTV12}
Leone, N.; Manna, M.; Terracina, G.; and Veltri, P. 2012.
\newblock Efficiently Computable Datalog{\(\exists\)} Programs.
\newblock In Brewka, G.; Eiter, T.; and McIlraith, S.~A., eds.,
  \emph{Proceedings of {KR}-2012}.

\bibitem[{Makowsky and Vardi(1986)}]{MakowskyV1986}
Makowsky, J.~A.; and Vardi, M.~Y. 1986.
\newblock On the Expressive Power of Data Dependencies.
\newblock \emph{Acta Inf.}, 23(3): 231--244.

\bibitem[{Marnette(2009)}]{Marnette2009}
Marnette, B. 2009.
\newblock Generalized schema-mappings: from termination to tractability.
\newblock In \emph{Proceedings of {PODS}-2009}, 13--22.

\bibitem[{Rudolph and Thomazo(2015)}]{RudolphT2015}
Rudolph, S.; and Thomazo, M. 2015.
\newblock Characterization of the Expressivity of Existential Rule Queries.
\newblock In \emph{Proceedings of {IJCAI}-2015}, 3193--3199.

\bibitem[{ten Cate and Kolaitis(2010)}]{CateK2010}
ten Cate, B.; and Kolaitis, P. 2010.
\newblock Structural characterizations of schema-mapping languages.
\newblock \emph{Commun. {ACM}}, 53(1): 101--110.

\bibitem[{Zhang and Zhang(2017)}]{ZhangZ2017}
Zhang, H.; and Zhang, Y. 2017.
\newblock Expressiveness of Logic Programs under the General Stable Model
  Semantics.
\newblock \emph{{ACM} Trans. Comput. Log.}, 18(2): 9:1--9:28.

\bibitem[{Zhang, Zhang, and Jiang(2020)}]{ZhangZJ2020}
Zhang, H.; Zhang, Y.; and Jiang, G. 2020.
\newblock Model-theoretic Characterizations of Existential Rule Languages.
\newblock In \emph{Proceedings of {IJCAI}-2020}, 1940--1946.

\bibitem[{Zhang, Zhang, and You(2015)}]{ZhangZY15}
Zhang, H.; Zhang, Y.; and You, J. 2015.
\newblock Existential Rule Languages with Finite Chase: Complexity and
  Expressiveness.
\newblock In \emph{Proceedings of {AAAI}-2015}, 1678--1685.

\bibitem[{Zhang, Zhang, and You(2016)}]{ZhangZY16}
Zhang, H.; Zhang, Y.; and You, J. 2016.
\newblock Expressive Completeness of Existential Rule Languages for
  Ontology-Based Query Answering.
\newblock In \emph{Proceedings of {IJCAI}-2016}, 1330--1337.

\bibitem[{Zhang et~al.(2020)Zhang, Zhang, You, Feng, and Jiang}]{Zhang2020}
Zhang, H.; Zhang, Y.; You, J.; Feng, Z.; and Jiang, G. 2020.
\newblock Towards Universal Languages for Tractable Ontology Mediated Query
  Answering.
\newblock In \emph{Proceedings of {AAAI}-2020}, 3049--3056.

\end{thebibliography}


\appcomment{
\newpage
\onecolumn
\appendix
\section{Appendix: Detailed Proofs}

\subsection{Proof of Theorem~\ref{thm:sound_comp}}

{\noindent\bf Theorem~\ref{thm:sound_comp}.} 
Let $\Sigma$ be a set of DTGDs, $D$ be a database, and $q$ be a Boolean UCQ. Then $D\cup\Sigma\vDash q$ iff ${chase}(D,\Sigma)\vDash q$.
\medskip

To prove this theorem, it suffices to prove the following two lemmas:

\begin{lem}
If $chase(D,\Sigma)\vDash q$ then $D\cup\Sigma\vDash q$.
\end{lem}

\begin{proof}
Suppose $chase(D,\Sigma)\vDash q$. We need to prove $D\cup\Sigma\vDash q$. Let $I$ be a model of $D\cup\Sigma$, and let $C$ be the set of constants occurring in $q$. To yield the desired conclusion, as $q$ is preserved under $C$-homomorphisms, it suffices to show that there is a model $J$ of $chase(D,\Sigma)$ such that $J\rightarrow_C I$. 

Let $J_0=D$, and $\tau_0$ be the identity function with domain $adom(D)$. Clearly, $J_0$ is a minimal model of $chase_0(D,\Sigma)$, and $\tau_0$ is a $C$-homomorphism from $J_0$ to $I$.

For the case $k>0$, let us assume that $J_{k-1}$ is a minimal model of $chase_{k-1}(D,\Sigma)$, and $\tau_{k-1}$ is a $C$-homomorphism from $J_{k-1}$ to $I$. Let $J_k$ denote the smallest superset of $J_{k-1}$ and $\tau_k$ the smallest extension of $\tau_{k-1}$ such that, if $\sigma\in\Sigma$, $body(\sigma)=\{\alpha_1,\dots,\alpha_n\}$ and $h(body(\sigma))\subseteq J_{k-1}$ for some substitution $h$, then both of the following properties hold:
\begin{enumerate}
\item $res(\mathbfit{F},\sigma,h)\in J_k$, where $\mathbfit{F}$ is a tuple of nondeterministic facts $F_1,\dots,F_n$ such that $h(\alpha_i)\in F_i$ for $1\le i\le n$. Note that $J_{k-1}$ is a minimal model of $chase_{k-1}(D,\Sigma)$, which implies the existence of $\mathbfit{F}$ and the applicability of $\sigma$ to $chase_{k-1}(D,\Sigma)$. Let $h'$ be a substitution that extends $h$ in the way defined by $chase$. 
\item $\tau_{k}(h'(x))=g'(x)$ for every existential variable $x$ in $\sigma$, where $g'$ is a substitution extending $\tau_{k-1}\circ h$ such that $g'(head(\sigma))\subseteq I$. Note that $\tau_{k-1}$ is a $C$-homomorphism from $J_{k-1}$ to $I$. We thus have $\tau_{k-1}(h(body(\sigma)))\subseteq I$. As $I$ is a model of $\sigma$, the substitution $g'$ always exists.
\end{enumerate}

By definition, it is easy to see that $J_k$ is a minimal model of $chase_{k}(D,\Sigma)$, and $\tau_k$ is a $C$-homomorphism from $J_k$ to $I$. Let $J=\bigcup_{k\ge 0}J_k$ and $\tau=\bigcup_{k\ge 0}\tau_k$. It is thus not difficult to verify that $J$ is a model of $chase(D,\Sigma)$, and $\tau$ is a $C$-homomorphism from $J$ to $I$. These complete the proof.
\end{proof}

\begin{lem}
If $D\cup\Sigma\vDash q$ then $chase(D,\Sigma)\vDash q$.
\end{lem}

\begin{proof}
Suppose $D\cup\Sigma\vDash q$. To prove $chase(D,\Sigma)\vDash q$, by the preservation of $q$ under $const(q)$-homomorphisms, it suffices to prove that every minimal model of $chase(D,\Sigma)$ is a model of $q$. Let $I$ be a minimal model of $chase(D,\Sigma)$. It thus remains to show that $I$ is a model of $q$.

On the other hand, it is clear that $D\subseteq chase(D,\Sigma)$. By assumption, we have  $chase(D,\Sigma)\cup\Sigma\vDash q$. So, to yield that $I$ is a model of $q$, it suffices to show that $I$ is a model of $\Sigma$. Let $\sigma\in\Sigma$. Since $\Sigma$ is canonical, we assume $\sigma$ is  of the form
\begin{equation*}
\bigwedge_{i=1}^m\vartheta_i(\mathbfit{x},\mathbfit{y})\rightarrow\exists\mathbfit{z}\bigvee_{j=1}^n\alpha_j(\mathbfit{x},\mathbfit{z})
\end{equation*}
where $\vartheta_i,1\le i\le m$, and $\alpha_j, 1\le j\le n$, are atoms. Suppose $h$ is a substitution such that $h(\vartheta_i)\in I$ for all $1\le i\le m$. Now our task is to show that there is some $1\le j\le n$ and a substitution $h'$ extending $h$ such that $h'(\alpha_j)\in I$. 

Before completing the proof, we claim that for each $1\le i\le m$ there exists a nondeterministic fact $F\in chase(D,\Sigma)$ such that $F\cap I=\{h(\vartheta_i)\}$. Otherwise, let $i$ be an index such that $h(\vartheta_i)\not\in F$ for any $F\in chase(D,\Sigma)$; then it is not difficult to verify that $I\setminus\{h(\vartheta_i)\}$ is also a model of $chase(D,\Sigma)$, which contradicts with the minimality of $I$. 

Let $\mathbfit{F}$ be a tuple of nondeterministic facts $F_1,\dots,F_m\in chase(D,\Sigma)$ such that $F_i\cap I=\{h(\vartheta_i)\}$ for $1\le i\le m$. Let $k$ be an integer such that $\{F_1,\dots,F_m\}\subseteq chase_k(D,\Sigma)$. By definition of the chase procedure, we know  
\begin{equation*}
res(\mathbfit{F},\sigma,h)\in chase(D,\Sigma),
\end{equation*}
where 
\begin{equation*}
res(\mathbfit{F},\sigma,h)=\{h'(\vartheta_i):1\le i\le m\}\cup\bigcup_{i=1}^m F_i\setminus\{h(\vartheta_i)\},
\end{equation*}
and $h'$ is a substitution that extends $h$ in the way defined by $chase$.
So, $I$ must be a model of $res(\mathbfit{F},\sigma,h)$. Consequently, $h'(\alpha_j)\in I$ for some $1\le j\le n$, which is as desired.
\end{proof}

\subsection{Proof of Proposition~\ref{lem:chase_hom_prv}}

{\noindent\bf Proposition~\ref{lem:chase_hom_prv}.} 
Let $\Sigma$ be a set of DTGDs, let $D$ and $D'$ be databases, and let $C$ be a set of constants. If there exists a $C$-homomorphism $\tau$ from $D$ to $D'$, then there exists a $C$-homomorphism $\tau'\supseteq\tau$ from  $chase(D,\Sigma)$ to $chase(D',\Sigma)$.

\begin{proof}
Let $\tau$ be a $C$-homomorphism from $D$ to $D'$. Our task is to extend $\tau$ to a $C$-homomorphism $\tau'$ from $chase(D,\Sigma)$ to $chase(D',\Sigma)$. For convenience, let $\tau_0$ denote $\tau$. Let $k$ be any positive integer. Suppose $\tau_{k-1}$ is a $C$-homomorphism from $chase_{k-1}(D,\Sigma)$ to $chase_{k-1}(D',\Sigma)$. We need to prove that $\tau_{k-1}$ can be extended to a $C$-homomorphism $\tau_k$ from $chase_{k}(D,\Sigma)$ to $chase_{k}(D',\Sigma)$. If such a $\tau_k$ indeed exists, it is easy to verify that $\tau'=\bigcup_{k\ge 0}\tau_k$ is a $C$-homomorphism from $chase(D,\Sigma)$ to $chase(D',\Sigma)$.

So it remains to construct the $C$-homomorphism $\tau_k$. To do this, we first prove a property as follows: {\em Every DTGD in $\Sigma$ applicable to $chase_{k-1}(D,\Sigma)$ is also applicable to $chase_{k-1}(D',\Sigma)$.} Let $\sigma\in\Sigma$ be a DTGD which is applicable to $chase_{k-1}(D,\Sigma)$, and let $\alpha_1,\dots,\alpha_n$ list all the atoms in the body of $\sigma$. Then there must be a tuple $\mathbfit{F}$ of nondeterministic facts $F_1,\dots,F_n\in chase_{k-1}(D,\Sigma)$ and a substitution $h$ such that $h(\alpha_i)\in F_i$ for  $1\le i\le n$. By assumption, $\tau_{k-1}$ is a $C$-homomorphism from $chase_{k-1}(D,\Sigma)$ to $chase_{k-1}(D',\Sigma)$. Consequently, we have 
\begin{equation*}
\{\tau_{k-1}(F_1),\dots,\tau_{k-1}(F_n)\}\subseteq chase_{k-1}(D',\Sigma).
\end{equation*}
It is also clear that $\tau_{k-1}(h(\alpha_i))\in\tau_{k-1}(F_i)$ for $1\le i\le n$, which implies that $\sigma$ is applicable to $chase_{k-1}(D',\Sigma)$.

Let $\tau_k$ denote the smallest extension of $\tau_{k-1}$ such that, if $\sigma\in\Sigma$ is applicable to $chase_{k-1}(D,\Sigma)$ via some substitution $h$ and some tuple $\mathbfit{F}$ of nondeterministic facts in $chase_{k-1}(D,\Sigma)$, then $\tau_k(h'(x))=g'(x)$ for each existential variable $x$ in $\sigma$, where $h'$ (resp., $g'$) is the substitution extending $h$ (resp., $\tau_{k-1}\circ h$) and introduced in $res(\mathbfit{F},\sigma, h)$ (resp., $res(\tau_{k-1}(\mathbfit{F}),\sigma, \tau_{k-1}\circ h)$). It is easy to verify that 
{\small
$$\tau_k(res(\mathbfit{F},\sigma, h))=res(\tau_{k-1}(\mathbfit{F}),\sigma, \tau_{k-1}\circ h)\in chase_k(D',\Sigma),$$
}

\vspace{-.35cm}
\noindent which implies $\tau_k(chase_k(D,\Sigma))\subseteq chase_k(D',\Sigma)$. Consequently, $\tau_k$ is a $C$-homomorphism from $chase_k(D,\Sigma)$ to $chase_k(D',\Sigma)$. This thus completes the proof.
\end{proof}

\subsection{Proof of Proposition~\ref{prop:close_dh}}

{\noindent\bf Proposition~\ref{prop:close_dh}.} 
Every DTGD$[\mathsf{UCQ}]$-ontology is closed under database homomorphisms.

\begin{proof}
Let $\Sigma$ be a finite set of DTGDs. Let $\mathscr{D}$ and $\mathscr{Q}$ be a pair of schemas, $D$ a $\mathscr{D}$-database, and $q$ a Boolean $\mathscr{Q}$-UCQ. Suppose $D\cup\Sigma\vDash q$, and let $D'$ be a $\mathscr{D}$-database such that $D\rightarrow_C D'$, where $C$ denotes $const(q)$. To yield the desired proposition, it is sufficient to prove $D'\cup\Sigma\vDash q$. 

According to the completeness of nondeterministic chase, we have $chase(D,\Sigma)\vDash q$, and by the soundness of nondeterministic chase, it suffices to prove $chase(D',\Sigma)\vDash q$. Let $J$ be a model of $chase(D',\Sigma)$. We need to show that $J$ is a model of $q$. By Proposition~\ref{lem:chase_hom_prv}, we know that there is a $C$-homomorphism $\tau$ from $chase(D,\Sigma)$ to $chase(D',\Sigma)$. Let 
\begin{equation*}
I=\{\alpha\in atom(chase(D,\Sigma)): \tau(\alpha)\in J\},
\end{equation*}
where $atom(chase(D,\Sigma))$ denotes the set of all atoms that occur in some nondeterministic fact in $chase(D,\Sigma)$. Next we prove that $I$ is a model of $chase(D,\Sigma)$. 

Let $F$ be a nondeterministic fact in $chase(D,\Sigma)$. Clearly, we have $\tau(F)\in chase(D',\Sigma)$, which implies that $J$ is a model of $\tau(F)$, i.e., $\tau(F)\cap J\ne\emptyset$. By definition, we know that $F\cap I\ne\emptyset$, or equivalently, $I$ is a model of $F$. Due to the arbitrariness of $F$, we know that $I$ is indeed a model of $chase(D,\Sigma)$. Since $q$ is a consequence of $chase(D,\Sigma)$, we then obtain that $I$ is a model of $q$. On the other hand, by definition, $I$ is clearly $C$-homomorphic to $J$. Since $q$ is preserved under $C$-homomorphism, we immediately have that $J$ is also a model of $q$, which is that we need.
\end{proof}

\subsection{Proof of Proposition~\ref{prop:close_cs}}

{\noindent\bf Proposition~\ref{prop:close_cs}.} 
Every DTGD$[\mathsf{UCQ}]$-ontology is closed under constant substitutions.

\begin{proof}
Let $\Sigma$ be a finite set of DTGDs. Let $\mathscr{D}$ and $\mathscr{Q}$ be a pair of schemas, $D$ a $\mathscr{D}$-database, and $q$ a Boolean $\mathscr{Q}$-UCQ. Let $\tau$ be a constant substitution, i.e., a partial function from $\Delta$ to $\Delta$. By the soundness and completeness of nondeterministic chase, to yield Proposition~\ref{prop:close_cs}, it suffices to prove that $chase(D,\Sigma)\vDash q$ implies $chase(\tau(D),\Sigma)\vDash\tau(q)$.

Suppose we have $chase(D,\Sigma)\models q$. Our task is to prove $chase(\tau(D),\Sigma)\vDash\tau(q)$. It is easy to verify that $\tau$ is actually a homomorphism from $D$ to $\tau(D)$. According to Proposition~\ref{lem:chase_hom_prv}, there is a homomorphism $\tau'$ from $chase(D,\Sigma)$ to $chase(\tau(D),\Sigma)$ such that $\tau\subseteq\tau'$. Consequently, we have $$\tau'(chase(D,\Sigma))\subseteq chase(\tau(D),\Sigma).$$ As a consequence, to prove $chase(\tau(D),\Sigma)\vDash\tau(q)$, it suffices to prove $\tau'(chase(D,\Sigma))\vDash\tau(q)$. Let $J$ be a model of $\tau'(chase(D,\Sigma))$. Obviously, for every nondeterministic fact $F\in chase(D,\Sigma)$, there exists at least one disjunct, denoted $\alpha_F$, of $F$ such that $\tau'(\alpha_F)\in J$. Let $I$ denote the set consisting of $\alpha_F$ for all $F\in chase(D,\Sigma)$. It is easy to see that $I$ is a model of $chase(D,\Sigma)$. By assumption, we have $chase(D,\Sigma)\vDash q$, which implies that $I\models q$. W.l.o.g., we assume that $q$ is of the form $q_1\vee\cdots\vee q_n$ where each $q_i$ is a BCQ. Then there exists some $k\in\{1,\dots,n\}$ and a substitution $h$ such that $h([q_k])\subseteq I$, which implies that 
$$h([\tau(q_k)])=\tau(h([q_k]))=\tau'(h([q_k]))\subseteq\tau'(I)=J.$$
On the other hand, it is clear that $\tau(q)=\tau(q_1)\vee\cdots\vee\tau(q_n)$, which implies that $J$ is a model of $\tau(q)$. By the arbitrariness of $J$, we thus have $\tau'(chase(D,\Sigma))\vDash\tau(q)$ as desired.
\end{proof}

\subsection{Proof of Theorem~\ref{thm:char_dtgd}}

{\noindent\bf Theorem~\ref{thm:char_dtgd}.} 
A DED$[\mathsf{UCQ}]$-ontology is defined by a finite set of DTGDs iff it is closed under both database homomorphisms and constant substitutions.

\begin{proof}
The direction of ``only-if" immediately follows from Propositions~\ref{prop:close_dh} and~\ref{prop:close_cs}. It thus remains to consider the converse. 
Let $O$ be a DED$[\mathsf{UCQ}]$-ontology over some schema pair $(\mathscr{D},\mathscr{Q})$, and suppose it is closed under both database homomorphisms and constant substitutions. We need to prove that $O$ is defined by some finite set of DTGDs. Let $\Sigma$ be a finite set of DEDs that defines $O$. Let $\mathscr{S}$ be the schema of $\Sigma$. We introduce $\textit{Eq}$ as a fresh binary relation symbol, and $\textit{Dom}$ as a fresh unary relation symbol. Let $\Sigma_e$ be a set that consists of all the DTGDs defined in the following 1--3:
\begin{enumerate}
\item For each relation symbol $R\in\mathscr{S}$, if the arity of $R$ is $k$ and $1\le i\le k$, a DTGD of the form
\begin{equation}
R(x_1,\dots,x_k)\rightarrow\textit{Dom}(x_i),
\end{equation}
is used to collect terms (constants or nulls) occurring in the $i$-th position of $R$, 
where $k$ denotes the arity of $R$. 
\item The following DTGDs are employed to assert that $\textit{Eq}$ defines an equivalence relation in the intended model:
\begin{eqnarray}
\textit{Dom}(x)\rightarrow\textit{Eq}(x,x)\\
\textit{Eq}(x,y)\rightarrow\textit{Eq}(y,x)\\
\textit{Eq}(x,y)\wedge\textit{Eq}(y,z)\rightarrow\textit{Eq}(x,z)
\end{eqnarray}
\item 
For each relation symbol $R\in\mathscr{S}$, if the arity of $R$ is $k$, a DTGD of the form
\begin{equation}\label{eqn:eq_diffusion}
{\wedge}_{i=1}^k\textit{Eq}(x_i,y_i)\wedge{R}(x_1,\dots,x_k)\rightarrow{R}(y_1,\dots,y_k)
\end{equation}
is introduced to assure that all terms (constants or nulls) equivalent w.r.t. $\textit{Eq}$ will play the same role in $R$. 
\end{enumerate}


Let $\Sigma^\ast$ be a set of DTGDs obtained from $\Sigma$ by substituting $\textit{Eq}$ for all occurrences of the equality symbol $=$. Let $\Sigma^+$ denote $\Sigma^\ast\cup \Sigma_e$. Next we prove that, for all $\mathscr{D}$-databases $D$ and Boolean $\mathscr{Q}$-UCQs $q$, we have $D\cup\Sigma\vDash q$ iff $D\cup\Sigma^+\vDash q$. 

For the easy direction, let us assume $D\cup\Sigma^+\vDash q$. We need to prove $D\cup\Sigma\vDash q$. Let $I$ be a model of $D\cup\Sigma$, and let 
\begin{equation*}
J=I\cup\{(a,a):a\in adom(I)\}.
\end{equation*}
It is easy to see that $J$ is a model of $D\cup\Sigma^+$, which implies that $I$ is also a model of $q$. This then yileds $D\cup\Sigma\vDash q$.

For the converse, we assume $D\cup\Sigma\vDash q$. Let $J$ be a minimal model of $D\cup\Sigma^+$. Let $\sim$ denote the binary relation 
$$
\{(a,b):\textit{Eq}(a,b)\in J\}.
$$
Since $J$ is a model of $\Sigma_e$, $\sim$ must be an equivalence relation on $adom(J)$. For each $a\in adom(J)$, let $\widehat{a}$ denote the equivalence class of $a$ under $\sim$. We define $\tau$ as a function that maps each $c\in adom(D)$ to $\widehat{c}$. Since $O$ is closed under constant substitutions, from the assumption $D\cup\Sigma\vDash q$, we know that $\tau(D)\cup\Sigma\vDash\tau(q)$ holds. Let
\begin{equation*}
I=\{R(\widehat{a}_1,\dots,\widehat{a}_k): R(a_1,\dots,a_k)\in J\text{ and }R\in\mathscr{S}\}.
\end{equation*}
It is not difficult to see that $I$ is a model of $\tau(D)\cup\Sigma$, which implies that $I$ is a model of $\tau(q)$. From it, we infer that $J$ is a model of $q$, which yields  $D\cup\Sigma^+\vDash q$ as desired.
\end{proof}

\subsection{Proof of Lemma~\ref{lem:gcq2bcq}}

{\noindent\bf Lemma~\ref{lem:gcq2bcq}.} 
Let $\Sigma$ be a finite set of DTGDs, $D$ a database, and $q$ a BCQ of the form $\exists\mathbfit{x}\varphi(\mathbfit{x})$ where $\varphi$ is quantifier-free and $\mathbfit{x}$ is a tuple of length $k$ which lists all the free variables in $\varphi$. Then $D\cup\Sigma\vDash q$ iff there exists a finite set $T\subseteq term(chase(D,\Sigma))^k$ such that $chase(D,\Sigma)\vDash\bigvee_{\mathbfit{t}\in T}\varphi(\mathbfit{t})$.  

\begin{proof}
By the soundness and completeness of chase, we have $D\cup\Sigma\vDash q$ iff $chase(D,\Sigma)\vDash q$. By compactness, the latter holds iff there exists $n\ge 0$ such that $chase_n(D,\Sigma)\vDash q$. Let $T$ denote $term(chase_n(D,\Sigma))^k$. It is easy to check that $chase_n(D,\Sigma)\vDash q$ iff $chase_n(D,\Sigma)\vDash\bigvee_{\mathbfit{t}\in S}\varphi(\mathbfit{t})$. Since $T$ is always a finite set, the lemma follows.
\end{proof}

\subsection{Proof of Lemma~\ref{lem:cq2ucq}}

{\noindent\bf Lemma~\ref{lem:cq2ucq}.} 
Let $O$ and $O'$ be OMQA$[\mathsf{UCQ}]$-ontologies that admit query constructivity. If $O|_{\mathsf{CQ}}=O'|_{\mathsf{CQ}}$ then $O=O'$.

\begin{proof}
Suppose $O|_{\mathsf{CQ}}=O'|_{\mathsf{CQ}}$. We need to prove $O=O'$. Let $(D,q)\in O$. Due to the symmetry, it suffices to show that $(D,q)\in O'$. Suppose $q$ is of the form $q_1\vee\cdots\vee q_n$ where each $q_i$ is a BCQ. By the query constructivity of $O$, there must exist $i\in\{1,\dots,n\}$ such that $(O,q_i)\in O$, which implies $(O,q_i)\in O|_{\mathsf{CQ}}$ immediately. By assumption, we then obtain $(O,q_i)\in O'|_{\mathsf{CQ}}$. Consequently, we have $(O,q_i)\in O'$. On the other hand, it is clearly true that $q_i\vDash q$. Since $O'$ is closed under query implications, we thus conclude that $(O,q)\in O'$, which completes the proof.
\end{proof}

\subsection{Proof of Theorem~\ref{thm:char_ltgd}}

{\noindent\bf Theorem~\ref{thm:char_ltgd}.} 
Let $\mathscr{D}$ and $\mathscr{Q}$ be schemas. An OMQA$[\mathsf{CQ}]$-ontology $O$ over $(\mathscr{D},\mathscr{Q})$ is defined by a finite set of linear TGDs iff it admits both of the following properties:
\begin{enumerate}
\item (Data Constructivity) If $D$ and $D'$ are $\mathscr{D}$-databases and $q\in O(D\cup D')$ is prime w.r.t. $O(D\cup D')$, then we have either $q\in O(D)$ or $q\in O(D')$.
\item (NTA-recognizability of Queries) For every $\mathscr{D}$-database $D$ with a single fact, there exists an oblivious NTA which recognizes $O(D)$.
\end{enumerate}
\medskip

Before presenting the desired proof, we first prove some lemmas:


\begin{lem}\label{lem:single2general}
Let $O$ and $O'$ be OMQA$[\mathsf{CQ}]$-ontologies over the same schema pair, both admit database constructivity. Then $O=O'$ iff $O(D)=O'(D)$ for every database $D$ (over a proper schema) which consists of a single fact.
\end{lem}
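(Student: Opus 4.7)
The plan is to establish the nontrivial ``if'' direction: assuming $O(D) = O'(D)$ for every single-fact $\mathscr{D}$-database $D$, show that $(D_0, q) \in O$ implies $(D_0, q) \in O'$; the reverse inclusion follows by symmetry. My overall strategy is to decompose $q \in O(D_0)$ into prime components, reduce each prime to a single-fact witness via data constructivity, transfer to $O'$ using the single-fact hypothesis, and reassemble using the closure properties of $O'$.

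For the decomposition step, I would construct a conjunction $p_1 \wedge \cdots \wedge p_m \models q$ such that each $p_j \in O(D_0)$ is prime with respect to $O(D_0)$. Starting from the single conjunct $q$, iteratively alternate two operations: (i) whenever some current conjunct $q'$ is separable as $q' \equiv q_1' \wedge q_2'$ with $q_i'$ nontrivial proper subqueries, replace $q'$ by the pair $q_1', q_2'$; (ii) whenever some conjunct $q'$ is not most specific with respect to $O(D_0)$, replace $q'$ by $s(q')$ for a constant substitution $s$ that sends at least one variable of $q'$ to a constant and satisfies $s(q') \in O(D_0)$. Closure of $O$ under query implications preserves membership in $O(D_0)$ throughout, and at every step the current conjunction still implies $q$. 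Termination follows from the strict decrease of a suitable lexicographic measure such as (largest atom count among conjuncts, number of conjuncts attaining that maximum, total number of variables).

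Next, by induction on $|D_0|$ I would prove that any prime $p \in O(D_0)$ belongs to $O(\{f\})$ for some single fact $f \in D_0$. For the inductive step, pick any $f \in D_0$ and apply data constructivity to the partition $D_0 = \{f\} \cup (D_0 \setminus \{f\})$: either $p \in O(\{f\})$, or $p \in O(D_0 \setminus \{f\})$. In the latter case, $p$ remains prime with respect to $O(D_0 \setminus \{f\})$ --- inseparability is a syntactic property of $p$ alone, and most-specificity is inherited because $O(D_0 \setminus \{f\}) \subseteq O(D_0)$ (by closure of $O$ under injective database homomorphisms applied to the inclusion) --- so the inductive hypothesis applies. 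Given $p \in O(\{f\})$, the single-fact hypothesis yields $p \in O'(\{f\})$, and closure of $O'$ under injective database homomorphisms (via the inclusion $\{f\} \hookrightarrow D_0$) lifts this to $p \in O'(D_0)$. Applying the argument to each $p_j$, closure of $O'$ under query conjunctions and query implications finally delivers $q \in O'(D_0)$.

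The main obstacle is ensuring that the decomposition loop terminates with genuine prime BCQs. The two operations interact nontrivially: applying a constant substitution to an inseparable query can identify variables across atoms and make the result separable, forcing another round of splitting. The choice of termination measure is therefore the key technical point --- a naive ``total atom count'' measure fails because splitting can produce overlapping conjuncts summing to more atoms than the original, and a naive ``variable count'' measure fails because splitting need not reduce variables. The lexicographic triple above handles both operations uniformly and yields the desired well-founded descent.
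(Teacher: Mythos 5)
Your proof is correct and follows essentially the same route as the paper's: decompose $q$ into primes w.r.t.\ $O(D_0)$, drive each prime down to a single-fact sub-database by iterating data constructivity, transfer via the single-fact hypothesis, lift back with closure under injective database homomorphisms, and reassemble with closure under query conjunctions and implications. The only difference is that you supply the details (the termination measure for the prime decomposition and the induction on $|D_0|$) that the paper dismisses as ``easy to see,'' and these details check out.
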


\begin{proof}
The direction of ``only-if" is trivial. We only prove the converse. Suppose $(\mathscr{D},\mathscr{Q})$ is the schema pair of $O$ and $O'$. We first assume $O(D)=O'(D)$ for every $\mathscr{D}$-database $D$ which consists of a single fact. Let $D_0$ be a $\mathscr{D}$-database and $q$ a $\mathscr{Q}$-BCQ such that $q\in O(D_0)$. To yield the desire conclusion $O=O'$, due to the symmetry, it suffices to prove that $q\in O'(D_0)$. It is easy to see that there exist a sequence of BCQs $q_1,\dots, q_k\in O(D_0)$ such that  
\begin{enumerate}
\item $q_1\wedge\cdots\wedge q_k\vDash q$, and 
\item for each $i\in\{1,\dots,k\}$, $q_i$ is prime w.r.t. $O(D_0)$.
\end{enumerate} 
Since $O$ adimits database constructivity, we know that for each $i\in\{1,\dots,k\}$ there is a database $D_i\subseteq D_0$ with a single fact such that $q_i\in O(D_i)$, which implies $q_i\in O'(D_i)$ according to the assumption $O(D)=O'(D)$. Since $O'$ is closed under injective database homomorphisms, we then obtain $q_i\in O'(D_0)$ for each $i\in\{1,\dots,k\}$. Furthermore, as $O'$ is closed under both query conjunctions and query implications, we thus have $q\in O'(D_0)$ as desired. 
\end{proof}


\begin{lem}\label{lem:data_constructivity}
Let $\Sigma$ be a set of linear TGDs, $D$ and $D'$ be databases, $\mathcal{C}$ be the class of BCQs $p$ over a given schema $\mathscr{Q}$ such that $D\cup D'\cup\Sigma\vDash p$. Let $q\in\mathcal{C}$ be a BCQ that is prime w.r.t. $\mathcal{C}$. Then we have either $D\cup\Sigma\vDash q$ or $D'\cup\Sigma\vDash q$. 
\end{lem}

\begin{proof}
Let $C$ denote $adom(D)\cup adom(D')$. To simplify the proof, w.l.o.g., we assume that 
\begin{equation}\label{eqn:disjointness}
term(chase(D,\Sigma))\cap term(chase(D',\Sigma))\subseteq C.
\end{equation}
Note that, if this is not true, by properly renaming nulls one can find an isomorphic copy of $chase(D',\Sigma)$ to make it true. 

To prove the above lemma, we need some notation and a property. Given instances $I$ and $J$ over the same schema and a set $X$ of constants, we write $I\leftrightarrow_X J$ if both $I\rightarrow_X J$ and $J\rightarrow_X I$ hold. Next let us prove the desired property:

\medskip
{\noindent\em Claim.} 
 $chase(D\cup D',\Sigma)\leftrightarrow_C chase(D,\Sigma)\cup chase(D',\Sigma)$.
 
 \begin{proof}
 It suffices to show that, for all $i\ge 0$, we have
 \begin{equation}\label{eqn:dc_statement}
 chase_i(D\cup D',\Sigma)\leftrightarrow_C chase_i(D,\Sigma)\cup chase_i(D',\Sigma).
 \end{equation}
 
 It is trivial for the case where $i=0$. For the case where $i>0$, let us assume as inductive hypothesis that 
 \begin{equation}\label{eqn:dc_indhypothesis}
 chase_{i-1}(D\cup D',\Sigma)\leftrightarrow_C 
 chase_{i-1}(D,\Sigma)\cup chase_{i-1}(D',\Sigma).
\end{equation}
We need to prove the property presented in~(\ref{eqn:dc_statement}). The direction of ``$\leftarrow_C$" immediately follows from the monotonicity of the chase operator. So, it remain to prove the converse, i.e., 
 \begin{equation}
 chase_{i}(D\cup D',\Sigma) \rightarrow_C  chase_{i}(D,\Sigma)\cup chase_{i}(D',\Sigma).
\end{equation}
Let $\tau$ be a $C$-homomorphism from $chase_{i-1}(D\cup D',\Sigma)$ to $chase_{i-1}(D,\Sigma)\cup chase_{i-1}(D',\Sigma)$.
Let $\sigma\in\Sigma$ and $h$ be a substitution such that $\sigma$ is applicable to $chase_{i-1}(D\cup D',\Sigma)$ via $h$. Let $h'$ be a substitution extending $h$ and defined according to the chase procedure. Let $\alpha$ be the only atom in the body of $\sigma$. Then we know that $h(\alpha)\in chase_{i-1}(D\cup D',\Sigma)$. Clearly, either $\tau(h(\alpha))\in chase_{i-1}(D,\Sigma)$ or $\tau(h(\alpha))\in chase_{i-1}(D',\Sigma)$ must hold. Due to the symmetry, we only consider the former, that is, $\tau(h(\alpha))\in chase_{i-1}(D,\Sigma)$. Let $g$ denote the substitution $\tau\circ h$. Then $\sigma$ must be applicable to $chase_{i-1}(D,\Sigma)$ via $g$. Let $g'$ be a substitution extending $g$ and defined by the chase procedure on $D$ and $\Sigma$. Let $\tau'$ be a function extending $\tau$ by mapping $h'(v)$ to $g'(v)$ for each existential variable $v$ in $\sigma$. Let $\tau^\ast$ be the function defined by repeating such a procedure for all TGDs $\sigma\in\Sigma$. It is not difficult to see that $\tau^\ast$ is a $C$-homomorphism from $chase_{i}(D\cup D',\Sigma)$ to $chase_{i}(D,\Sigma)\cup chase_{i}(D',\Sigma)$, which completes the proof for the desired direction.
 \end{proof}
 
Now we are in the position to prove the desired lemma. Since $q\in\mathcal{C}$, we know $D\cup D'\cup\Sigma\vDash q$. By the soundness and completeness of the chase, we have $chase(D\cup D',\Sigma)\models q$, and it is sufficient to show that either $chase(D,\Sigma)\models q$ or $chase(D',\Sigma)\models q$ holds. By the above claim, we know $chase(D,\Sigma)\cup chase(D',\Sigma)\models q$, which means that there exists a $C$-homomorphism $h$ from $q$ to $chase(D,\Sigma)\cup chase(D',\Sigma)$. Since $q$ is inseparable, for every (existential) variable $v$ in $q$, $h(v)$ must not be a constant. Otherwise, let $q'$ be the BCQ obtained from $q$ by substituting $h(v)$ for $v$. It is obvious that $chase(D,\Sigma)\cup chase(D',\Sigma)\models q'$. By Claim, we have $D\cup D'\cup\Sigma\vDash q'$, which  contradicts with the primality of $q$. Thus, $h$ maps each variable in $q$ to a null. By Assumption~(\ref{eqn:disjointness}) and the inseparability of $q$, we conclude that either $h(q)\subseteq chase(D,\Sigma)$ or $h(q)\subseteq chase(D',\Sigma)$ should be true, which then yields the desired conclusion.
%
\end{proof}

With the above lemmas, we are now able to prove Theorem~\ref{thm:char_ltgd}.

\begin{proof}[Proof of Theorem~\ref{thm:char_ltgd}]
``Only-if": Suppose $O$ is defined by a finite set $\Sigma$ of linear TGDs. We need to show that $O$ admits Properties 1 and 2. Property 1 can be proven by a careful induction on the chase. Below we prove that $O$ admits Property 2. 


Let $D$ be a $\mathscr{D}$-database with a single fact. Now let us construct an oblivious NTA that recognizes $O(D)$. Let $\mathscr{S}$ denote the schema of $\Sigma$. Let $k$ be the maximum arity of relation symbols in $\mathscr{S}$. Let $\mathcal{V}$ be the set that consists of pairwise distinct variables $x_1,\dots,x_{2k}$. Let $At$ be the set of all atoms built upon relation symbols from $\mathscr{S}$ and terms from $adom(D)\cup\mathcal{V}$. We introduce $\lceil\log_2 (|At|+2)\rceil$ fresh variables, and let $\mathcal{V}_0$ denote the set that consists of these variables. Let $\iota$ be an injective function from $At$ to $2^{\mathcal{V}_0}\setminus\{\emptyset,\mathcal{V}_0\}$. Thus, every atom in $At$ can be encoded by a set of variables in $\mathcal{V}_0$. 

With the above assumptions, we are now able to define the NTA. Let $S=At\cup\{\diamond\}$ be the set of states, and let $F=\{\diamond\}$ be the set that consists of the unique finite state $\diamond$. Furthermore, let $\mathcal{L}$ be a label set which consists of 
\begin{enumerate}
\item $(term(\alpha),\{\alpha\})$ for each $\alpha\in At$ which is a $\mathscr{Q}$-atom;
\item $(term(\alpha)\cup\iota(\alpha),\emptyset)$ for each $\alpha\in At$;
\item $(adom(D)\cup\mathcal{V}_0,\emptyset)$. 
\end{enumerate}
For convenience, we define a function $\lambda:\mathcal{L}\rightarrow S$ that maps each label $\ell\in\mathcal{L}$ of the form 1 or 2 to the atom (state) $\alpha$, and maps the label of form 3 to the final state $\diamond$. Clearly, $\lambda$ is well-defined. Let $\Omega$ be a ranked input alphabet which consists of ordered pairs $(\ell,m)$ for all $\ell\in\mathcal{L}$ and all $0\le m\le |At|$, where each $(\ell, m)$ is used as an $m$-ary input symbol.

%

Furthermore, let $\Theta$ be a set consisting of
\begin{enumerate}
\item $((\ell,1), \alpha, \diamond)$ if $\lambda(\ell)=\diamond$ and $D=\{\alpha\}$;
\item $((\ell,m), (\alpha_1,\dots,\alpha_m), \alpha)\in\Delta$ if $\lambda(\ell)=\alpha$, $0\le m\le |At|$, $\alpha,\alpha_1,\dots,\alpha_m\in At$ and for $1\le i\le m$, $\{\alpha\}\cup\Sigma\vDash\exists\mathbfit{x}_i\alpha_i$, where $\mathbfit{x}_i$ is a tuple consisting of all the variables that occur in $\alpha_i$ but not in $\alpha$.
\end{enumerate} 

Let $\mathcal{A}=(S,F,\Omega,\Theta)$. Since $\lambda$ is a well-defined function, we know that $\mathcal{A}$ is an oblivious NTA. Next we show that $\mathcal{A}$ recognizes the class $O(D)$. By the definition of $\mathcal{A}$, it is easy to see that every $\mathscr{Q}$-BCQ accepted by $\mathcal{A}$ belongs to $O(D)$. 

Conversely, let $q\in O(D)$. We need to prove that $\mathcal{A}$ accepts $q$. 
Let $\mathfrak{D}$ be a labeled tree constructed as follows:
\begin{enumerate}
\item Create the root $r$ with the label $L(r)=(adom(D),D)$;
\item 
For each node $t$ already in $\mathfrak{D}$, if there is an atom $\alpha\in At$ such that $L^2(t)\cup\Sigma\vDash\exists\mathbfit{x}\,\alpha$, 
then create a child $t'$ for $t$ and let $L(t')\!=\!(term(\alpha^\ast),\{\alpha^\ast\})$, where $\alpha^\ast$ is obtained from $\alpha$ by substituting fresh variables for variables in $\mathbfit{x}$.

\end{enumerate}
Let $atom(\mathfrak{D})$ be the set of all atoms appearing in $\mathfrak{D}$. Let $C$ denote $const(q)$. By definition we know that $chase(D,\Sigma)$ is $C$-isomorphic to a subset of $atom(\mathfrak{D})$. 
%
%
Next we prove the following property:

\medskip
{\em\noindent Claim 1.} $[q]$ is $C$-isomorphic to a subset of $atom(\mathfrak{D})$.

\begin{proof}
By regarding variables in $\mathfrak{D}$ as nulls, we regard $atom(\mathfrak{D})$ as an instance. Since we have  $q\in O(D)$, according to the previous conclusion we conclude that $atom(\mathfrak{D})\models q$, i.e., there is a $C$-homomorphism $h$ from $[q]$ to $atom(\mathfrak{D})$. Let $\mathfrak{D}_q$ be a minimal subtree of $\mathfrak{D}$ that covers $h([q])$ and the root $r$. Now we let $\mathfrak{D}'_q$ be a labeled tree obtained from $\mathfrak{D}_q$ by
\begin{quote}
for each node $t$ in $\mathfrak{D}_q$ with $L(t)=(term(h(\alpha)),\{h(\alpha)\})$ for some $\alpha\in[q]$, adding a fresh node $t'$ as a child of $t$, and setting the label $L(t')=(term(\alpha),\{\alpha\})$.
\end{quote}
Again, let $atom(\mathfrak{D}'_q)$ denote the set of all atoms appearing in $\mathfrak{D}'_q$. Clearly, it holds that $[q]\subseteq atom(\mathfrak{D}'_q)$. On the other hand, given any atom $\alpha$, let $\mathbfit{x}$ be a tuple that consists of all the variables appearing in $\alpha$ but not in $h(\alpha)$. Clearly, we always have that $h(\alpha)\vDash\exists\mathbfit{x}\alpha$. According to the construction of $\mathfrak{D}$, it is then easy to see that there must be an injective renaming function $\tau$ on variables (i.e., $\tau$ is a partial function on $\Delta_{\mathrm{v}}$) such that $\tau(\mathfrak{D}'_q)$ is indeed a subtree of $\mathfrak{D}$, which thus yields the desired claim.
\end{proof}

With Claim 1, there must be a subset, denoted $Q$, of $atom(\mathfrak{D})$ such that $[q]$ is $C$-isomorphic to $Q$. Let $\mathfrak{T}_q=(V,E,r,L)$ be a minimal subtree of $\mathfrak{D}$ that covers $Q$ and the root of $\mathfrak{D}$. Let $\mathfrak{R}_q$ be the labeled tree $(V,E,r,L_0)$, where $L_0$ is defined as follows: 
\begin{enumerate}
\item for the root $r$, let $L_0(r)=(adom(D)\cup\mathcal{V}_0,\emptyset)$;
\item for each node $t\in V$ with the label $L(t)=(term(\alpha),\{\alpha\})$, let $L_0(t)=(term(\alpha)\cup\iota(\alpha),\emptyset)$ if $\alpha\not\in Q$, and $L_0(t)=L(t)$ otherwise.  
\end{enumerate}
Clearly, $\mathfrak{R}_q$ is a finite and linear tree representation of $q$. By the technique mentioned in the last subsection, such a tree can be naturally encoded by an $\Omega$-ranked tree, which can be easily showed to be accepted by $\mathcal{A}$ by a routine check.
%

\medskip
``If": Suppose $O$ admits Properties 1 and 2. We need to prove that there is a finite set of linear TGDs that defines $O$. Let $D$ be a $\mathscr{D}$-database. We first prove the desired conclusion for a simple case.

\medskip
{\noindent\em Claim 2.} If $|D|=1$ then there is a finite set $\Sigma$ of linear TGDs such that, for every $\mathscr{Q}$-BCQ $q$, $D\cup\Sigma\vDash q$ iff $q\in O(D)$.

\begin{proof}
Let $D$ be a $\mathscr{D}$-database which consists of a single fact. By Property 2, there exists an oblivious NTA $\mathcal{A}=(S,F,\Omega,\Theta)$ that recognizes $O(D)$. Based on $\mathcal{A}$, we will construct a finite set of linear TGDs that defines $O$. 
%
%

Before proceeding, we first define some notations. For each constant $c$, we introduce  $v_c$ as a fresh variable that has never been used in $\Omega$. Given an atom $\alpha$, let $\widehat{\alpha}$ denote the atom obtained from $\alpha$ by substituting $v_c$ for each constant $c$. Given a tuple $\mathbfit{c}$ of constants $c_1,\dots,c_k$, let $\widehat{\mathbfit{c}}$ denote the tuple of variables $v_{c_1},\dots,v_{c_k}$. Let $\mathcal{L}$ be the set of labels used in $\Omega$. According to the assumption made in Subsection ``Automata That Accept BCQs", we know that each label in $\mathcal{L}$ is an ordered pair $\omega=(X,\Gamma)$ where $X$ is a set of variables and $\Gamma$ a set of atoms. Let $|\omega|$ denote the number of variables in $X$. For each pair $(s,\omega)\in S\times\mathcal{L}$, we introduce a fresh relation symbol $T_{s,\omega}$ of arity $|\omega|$.

Now we let $\Sigma_\mathcal{A}$ be a set of linear TGDs defined as follows: 
\begin{enumerate}
\item Suppose $\alpha$ is the only fact in $D$. If $((\omega,\ell), \mathbfit{s},s')\in\Theta$ and $s'\in F$, then let $\Sigma_\mathcal{A}$ contain the TGD
\begin{equation}
\widehat{\alpha}\rightarrow\exists \mathbfit{x}\, T_{s',\omega}(\widehat{\mathbfit{c}},\mathbfit{x})
\end{equation} 
where $\mathbfit{x}$ (resp., $\mathbfit{c}$) is the tuple of variables (resp., constants) appearing in $\omega$. 

\item If $\{((\omega_1,\ell_1),\mathbfit{s}_1,s_1'),((\omega_2,\ell_2),\mathbfit{s}_2,s'_2)\}\subseteq\Theta$ and $s_1'\in\mathbfit{s}_2$,\footnote{To simplify the presentation, given a tuple $\mathbfit{a}$, by $a\in\mathbfit{a}$ we mean that $a$ is a component of $\mathbfit{a}$.} then let $\Sigma_\mathcal{A}$ contain the TGD
\begin{equation}
T_{s_2',\omega_2}(\widehat{\mathbfit{c}}_2,\mathbfit{x},\mathbfit{y})\rightarrow\exists\mathbfit{z}\,T_{s_1',\omega_1}(\widehat{\mathbfit{c}}_1,\mathbfit{x},\mathbfit{z})
\end{equation}
where, for $i=1$ or $2$, $\mathbfit{c}_i$ denotes the tuple of constants occurring in $\omega_i$; $\mathbfit{x}$ is a tuple consisting of all the variables that occur in both $\omega_1$ and $\omega_2$; and $\mathbfit{y}$ (resp., $\mathbfit{z}$) is a tuple consisting of all the variables that occur in $\omega_2$ but not in $\omega_1$ (resp., in $\omega_1$ but not in $\omega_2$).

\item If $(X,\{\alpha\})\in\mathcal{L}$ and $s\in S$, let $\Sigma_\mathcal{A}$ contain the TGD
\begin{equation}
T_{s,\omega}(\widehat{\mathbfit{c}},\mathbfit{x})\rightarrow\widehat{\alpha}
\end{equation}
where $\mathbfit{c}$ (resp., $\mathbfit{x}$) is a tuple consisting of all the constants (resp., variables) in $X$, and $\alpha$ is an atom.
\end{enumerate}
Note that all the variables used above are assumed to be ordered in a fixed way, and variables in a tuple will follow this order.  


Clearly, $\Sigma_{\mathcal{A}}$ is a finite set of linear TGDs. By a careful induction on the chase, it is not difficult to prove that $D\cup\Sigma_{\mathcal{A}}\vDash q$ iff $\mathcal{A}$ accepts $q$ for every $\mathscr{Q}$-BCQ $q$, which yields the claim.
\end{proof}

Next we prove there is a finite set of linear TGDs defines $O$. Obviously, up to isomorphism, there are only a finite number of $\mathscr{D}$-databases with a single fact. Let $D_1,\dots,D_n$ be a complete list of such databases. By Claim 2, for each $i\in\{1,\dots,n\}$, there exists a finite set $\Sigma_i$ of linear TGDs such that, for every $\mathcal{Q}$-BCQ $q$, we have $D\cup\Sigma_i\vDash q$ iff $q\in O(D_i)$.

For each relation symbol $R$ occurring in $\Sigma_i$ for some $i$, we introduce a fresh relation symbol $R_i$ of the same arity as $R$. Let $\Sigma_i'$ be the TGD set obtained from $\Sigma_i$ by substituting $R_i$ for all occurrences of $R$. Now we know that the schemas of $\Sigma_i'$, $1\le i\le n$, are pairwise disjoint. 

Let $\Gamma_1$ be a set that consists of the TGD
\begin{equation}
R(\mathbfit{x})\rightarrow R_i(\mathbfit{x})
\end{equation} 
for all $i=1,\dots,n$ and $R\in\mathscr{D}$, where $\mathbfit{x}$ is a variable tuple of a proper length. Let $\Gamma_2$ be a set consisting of the TGD
\begin{equation}
R_i(\mathbfit{x})\rightarrow R(\mathbfit{x})
\end{equation} 
for all $i=1,\dots,n$ and $R\in\mathscr{Q}$, where $\mathbfit{x}$ is a variable tuple of a proper length. Furthermore, let
$$\Sigma=\Gamma_1\cup\Gamma_2\cup\bigcup_{i=1}^n\Sigma_i.$$ 
To prove that $\Sigma$ defines $O$, according to Lemma~\ref{lem:single2general}, it suffices to prove that, given any $\mathscr{D}$-databases $D$ with a single fact and any $\mathscr{Q}$-BCQ $q$, we have $D\cup\Sigma\vDash q$ iff $q\in O(D)$, which can be done by a routine check. 
%
%
\end{proof}

}

\end{document}